\newcommand{\ignore}[1]{}
\newtheorem{theorem}{Theorem}[section]
\newtheorem{lemma}[theorem]{Lemma}
\newtheorem{corollary}[theorem]{Corollary}
\newtheorem{challenge}{Challenge}
\theoremstyle{remark}
\newtheorem*{rem}{Remark}
\newcommand{\romanitems}{\renewcommand{\labelenumi}{{\rm (}\roman{enumi}\/{\rm )}}}
\newcommand{\ROABP}{\rm ROABP}
\newcommand{\F}{\mathbb{F}}
\newcommand{\K}{\mathbb{K}}
\newcommand{\N}{\mathbb{N}}
\newcommand{\Q}{\mathbb{Q}}
\newcommand{\calC}{\mathcal{C}}
\newcommand{\calD}{{\mathcal D}}
\newcommand{\calE}{{\mathcal E}}
\newcommand{\calH}{\mathcal{H}}
\newcommand{\calP}{{\mathcal P}}
\newcommand{\calQ}{{\mathcal Q}}
\renewcommand{\a}{{\boldsymbol a}}
\renewcommand{\b}{{\boldsymbol b}}
\renewcommand{\c}{{\boldsymbol c}}
\newcommand{\cb}{{\boldsymbol \omega}}
\newcommand{\e}{{\boldsymbol e}}
\newcommand{\w}{{\boldsymbol w}}
\newcommand{\z}{{\boldsymbol z}}
\newcommand{\bal}{{\boldsymbol \alpha}}
\newcommand{\bbet}{{\boldsymbol \beta}}
\newcommand{\bgam}{{\boldsymbol \gamma}}
\newcommand{\var}{\textup{var}}
\newcommand{\size}{{\rm size}}
\newcommand{\set}[2]{\left\{\,#1 \mid #2\,\right\}}
\newcommand{\divides}{|}
\newcommand{\PIT}{{\rm PIT}}
\newcommand{\Div}[2]{{\rm Div}(#1/#2)}
\newcommand{\Fac}[2]{{\rm Factor}(#1|_{#2})}
\newcommand{\Irr}{{\rm Irred\mbox{-}Proj}}
\newcommand{\poly}{{\rm poly}}
\newcommand{\sparse}{\textup{sp}}
\newcommand{\Hom}[2]{\textup{Hom}_{#1}[#2]}
\newcommand{\HomC}{\textup{Hom}}
\newcommand{\T}{\textup{T}}
\newcommand{\SU}{\textup{SU}}
\newcommand{\cd}[1]{\textup{Depth-}{#1}}
\newcommand{\prf}{\textup{Proj-Fac}}
\newcommand{\prfm}{\textup{Proj-Fac-mult}}
\title{Derandomizing Multivariate Polynomial Factoring\\ for Low Degree Factors
}
\author{Pranjal Dutta \thanks{College of Computing \& Data Science (CCDS), Nanyang Technological University (NTU) Singapore. Email: \texttt{pranjal.dutta@ntu.edu.sg}} \and Amit Sinhababu \thanks{Dept.~of Computer Science, Chennai Mathematical Institute, India. Email: \texttt{amitkumarsinhababu@gmail.com}} \and Thomas Thierauf \thanks{Ulm University, Germany. Email:~\texttt{thomas.thierauf@uni-ulm.de}}}
\date{\today}
\begin{document}

\maketitle

\begin{abstract}
For a polynomial $f$ from a class $\calC$ of polynomials,
we show that the problem to compute all the \emph{constant degree} irreducible factors of~$f$
reduces in polynomial time to polynomial identity tests (PIT) for class~$\calC$ and
divisibility tests of~$f$ by constant degree polynomials.
We apply the result to several classes~$\calC$ and obtain the constant degree factors in
\begin{enumerate}
    \item 
    polynomial time, for $\calC$ being polynomials that have only constant degree factors,
    \item 
    quasi-polynomial time, for $\calC$ being sparse polynomials,
    \item 
    subexponential time, for $\calC$ being polynomials that have constant-depth circuits.
\end{enumerate}
Result~2 and~3 were already shown by Kumar, Ramanathan, and Saptharishi
with a different proof, and their time complexities necessarily depend on blackbox PITs for a related bigger class~$\calC'$. 
Our complexities vary on whether the input is given as a blackbox or whitebox.

We also show that the problem to compute the \emph{sparse} factors of polynomial 
from a class~$\calC$ reduces in polynomial time to 
PIT for class~$\calC$,
divisibility tests of~$f$ by sparse polynomials, and
irreducibility preserving bivariate projections for sparse polynomials.
For~$\calC$ being sparse polynomials,
it follows that it suffices to derandomize irreducibility preserving bivariate projections for sparse polynomials in order to compute all the sparse irreducible factors efficiently.
When we consider factors of sparse polynomials that are sums of univariate polynomials, a subclass of sparse polynomials,
we obtain a polynomial time algorithm.
This was already shown by Volkovich with a different proof.

\end{abstract}


\section{Introduction}

The problem of \emph{multivariate polynomial factorization} asks to find the unique factorization of 
a given polynomial $f \in \F[x_1,\ldots,x_n]$ as a product of distinct irreducible polynomials over $\F$.
The problem reduces to \emph{univariate polynomial factorization} over the same field, for which a deterministic polynomial time algorithm is known over the field $\mathbb{Q}$. 
The complexity of multivariate factorization depends
on the representation of input and output polynomials. 
If we use \emph{dense representation} (where all the coefficients are listed including the zero coefficients), deterministic polynomial time algorithms for multivariate factoring are known \cite{Kaltofen85}. 
If we use \emph{sparse representation} (where only the nonzero coefficients are listed), only randomized polynomial time
algorithms are known~\cite{von1985factoring, kaltofen1990computing}. 
There are other standard representations like arithmetic circuits, 
and blackbox models, 
where the evaluations of the polynomial at any point, but the internal structure of the computation is hidden. 
Randomized polynomial time factorization algorithms are known in these models due to the classic results of Kaltofen \cite{kaltofen1989factorization} and Kaltofen and Trager \cite{kaltofen1990computing}.  
Randomization is naturally required for these models, as the more basic question of \emph{polynomial identity testing} (PIT) is not yet derandomized and multivariate circuit factorization algorithms use PIT. 
However, for various restricted arithmetic circuits we know deterministic (quasi-) polynomial- or subexponential-time PIT algorithms. 
Naturally, one can ask whether factorization can be derandomized for the corresponding circuit classes. In various problems such as special cases of arithmetic circuit reconstruction, we encounter multivariate factoring of structured or very restricted instances, where either the input polynomial or some or all factors of the input polynomial belong to a very restricted class of circuits. Towards derandomization of special cases of multivariate factoring, we focus on the following two goals.

\begin{challenge}[Promise factoring \cite{von1985factoring}]\label{qn2:promise-sparse-factoring}
Let $f=\prod_{i=1}^m {g_i}^{e_i}$,
where each $g_i$ is an irreducible polynomial with its sparsity at most~$s$. 
Design a deterministic polynomial-time algorithm to output~$g_i$.
\end{challenge}

To our surprise, we do not know a deterministic \emph{polynomial}-time algorithm even for the special case of Challenge~\ref{qn2:promise-sparse-factoring}, when the given blackbox computes the product of just \emph{two} irreducible sparse polynomials. 

Note that the factors of a sparse polynomial~$f$ might be nonsparse.
Therefore, instead of finding \emph{all} the irreducible factors, 
we only focus on those factors that are sparse.

\begin{challenge}\label{qn1:sparse-factoring}
Given a sparse polynomial~$f$,
design a deterministic polynomial-time algorithm that outputs all the sparse irreducible factors of~$f$.
\end{challenge}

Multivariate polynomial factoring has various applications, such as \emph{low-degree testing} \cite{arora2003improved}, constructions of pseudorandom generators for low-degree polynomials~\cite{bogdanov2005pseudorandom, dwivedi2024optimal}, algorithmic questions in computational algebraic geometry~\cite{huang2000extended} and many more. Blackbox multivariate polynomial factorization algorithms are extensively used in \emph{arithmetic circuit reconstruction}~\cite{shpilka2007interpolation, sinha2016reconstruction}, and polynomial equivalence testing~\cite{Kayal11, kayal2012affine, ramya2019linear}. 
There is a series of papers 
starting with~\cite{kabanets2003derandomizing}
about \emph{algebraic hardness vs.\ randomness} results
that  crucially rely on the closure of various algebraic complexity classes under factorization.
 For a general overview of multivariate factoring, 
 we refer to the recent survey of Bhargav, Dwivedi, 
 and Saxena~\cite{bhargav2025primer}. 

\paragraph{General framework of factoring.} The typical steps in factorization algorithms for a given polynomial  $f \in \calC$, from a class~$\calC$
are as follows.
\begin{enumerate}
    \item 
    Transform $f$ to a polynomial~$\widehat{f}$ that is monic in some newly introduced variable, using PIT; for a detailed discussion see below.
    \item 
    Project $\widehat{f}$ to a bivariate polynomial 
    such that the factorization pattern of~$\widehat{f}$ is maintained (by Hilbert Irreducibility Theorem; see Theorem~\ref{thm:HIT}).
    \item 
    Use the known algorithms to factor the projected bivariate polynomial.
    \item 
    Now often \emph{Hensel lifting} is used to lift the projected factors back to the real factors.
    In our setting, we avoid Hensel lifting and argue that we can use \emph{interpolation} instead.  This interpolation trick was first used in the randomized blackbox factoring algorithm of Kaltofen and Trager \cite{kaltofen1990computing}.
    \item 
    To make sure that the computed polynomials are indeed factors,
    there is a divisibility test at the end.
\end{enumerate}

We now discuss some of the crucial steps and their importance in factoring a restricted class of polynomials. In our restricted setting, we completely {\em bypass} the step of Hensel lifting. 
\paragraph{PIT.} Given an arithmetic circuit $C$, polynomial identity testing (PIT) asks to test if $C$ computes the zero polynomial. Randomization in factorization algorithms mostly stem from the fact that these algorithms use PIT as a subroutine. Further, Kopparty, Saraf and Shpilka \cite{kopparty2015equivalence} showed that derandomization of whitebox and blackbox multivariate circuit factoring reduces to derandomization of polynomial identity testing of arithmetic circuits in whitebox and blackbox settings respectively. However, we {\em do not} know if sparse factorization reduces to sparse PIT (the algorithms of \cite{kopparty2015equivalence} reduce to general arithmetic circuit PIT). Recently, there has been good progress on these questions by \cite{kumar2024deterministic, kumar2024towards,  bhattacharjee2025deterministic, bhattacharjee2025closure}. Earlier works of Volkovich \cite{volkovich2015deterministically, volkovich2017some} and Bhargava, Saraf and Volkovich \cite{bhargava2020deterministic} made progress on derandomization of several special cases of sparse multivariate factoring. Shpilka and Volkovich \cite{shpilka2010relation} proved whitebox/blackbox factoring of \emph{multilinear} polynomials in a class
$\mathcal{C}$ is equivalent to whitebox/blackbox derandomization of PIT of class~$\mathcal{C}$.  
For a comprehensive overview of derandomization questions related to multivariate polynomial factoring, we refer to the survey of Forbes and Shpilka~\cite{forbes2015complexity}.

\paragraph{Divisibility testing.} 
In a factorization algorithm, we may want to check if a candidate factor is truly a factor via \emph{divisibility testing}.
It asks to test if a polynomial~$g(\z)$ divides a polynomial~$f(\z)$. Forbes~\cite{forbes2015deterministic} showed that the divisibility testing question can be efficiently reduced to an instance of a PIT question of a model that relates to both~$f$ and~$g$, see Lemma~\ref{lem:divisibility}. Currently, we do not know any deterministic polynomial time algorithm even when~$g$ and~$f$ are both sparse polynomials. 
When~$f$ is a sparse polynomial and~$g$ is a linear polynomial, the problem reduces to polynomial identity testing of any-order read-once oblivious branching programs (ROABPs), 
see Section~\ref{sec:constant-degree-ROABP}.  
We do not know a deterministic polynomial-time algorithm, 
even for testing if a quadratic polynomial~$g$ divides a sparse polynomial.

\paragraph{Irreducibility projection.} 
For testing irreducibility of multivariate polynomials, one can use some of the effective versions of Hilbert's Irreducibility
Theorem~\cite{von1985factoring, kaltofen1985effective, kaltofen1995effective}. 
These results can be also seen as an effective version of a classical theorem of Bertini in algebraic geometry. A multivariate  irreducible polynomial $f(x,z_1,z_2,\ldots,z_n)$ may become reducible if we project the variables to make it univariate, but irreducibility is preserved if we project it in a way to make it a bivariate polynomial. Unfortunately, finding such a projection in deterministic polynomial time is hard. We do not know
how to find polynomial-sized irreducibility preserving projections even for sparse polynomials, though we know polynomial-sized hitting set generators for them. This remains an outstanding open question.

\paragraph{Special cases of multivariate factoring.}
In many settings, one has \emph{a priori} structural information about the factors of a given polynomial—
for instance, that certain (or even all) factors are linear, have bounded degree, or more generally, are sparse.
A classic source of such a structure is the theory of $U$-\emph{resultants} due to Macaulay, where one encounters families of polynomials that decompose into linear forms.
As another example, consider the standard arithmetization of a $3$-CNF formula:
it naturally produces a polynomial presented as a product of \emph{clause polynomials}, and each clause polynomial depends on at most three variables and is multilinear
Restricted factorizations of this type have also been studied in the context of \emph{algebraization}~\cite{impagliazzo2009axiomatic}.

There are many prior works on special cases of multivariate polynomial factoring. 
Koiran and Ressyare~\cite{koiran2018orbits} gave \emph{three} different randomized algorithms for testing, if a given
polynomial~$f$ as a blackbox can be completely factored into linear polynomials and output the factorization, if it exists. 
Medini and Shpilka~\cite{medini2021hitting} gave a deterministic polynomial-time algorithm in the blackbox setting for the case, 
where the polynomial can be factored into linearly independent linear polynomials.

Factorization of sparse multivariate polynomials is well-studied in symbolic computation. 
The classic work of Gathen and Kaltofen~\cite{von1985factoring} gives a randomized polynomial-time algorithm in the case when the input and output factors are sparse, and the number of irreducible factors is bounded. 
Volkovich~\cite{volkovich2015deterministically, volkovich2017some} made interesting progress on derandomization of some special cases of sparse multivariate factoring. 
Bhargava, Saraf, and Volkovich~\cite{bhargava2020deterministic} gave a deterministic polynomial time algorithm for factoring sparse polynomials with bounded individual degree. 
 Special cases of depth-4 polynomial identity testing are related to questions about sparse polynomial factorization \cite{gupta2014algebraic, volkovich2017some, bisht2022solving}.

The sequence of papers \cite{kumar2024deterministic, kumar2024towards, bhattacharjee2025deterministic} made significant progress on derandomization of factoring constant depth circuits. The recent breakthrough result by
 Bhattacharjee et al.~\cite{bhattacharjee2025closure} showed that all factors of constant depth circuits are computable by poly-sized constant depth circuits, and they gave a deterministic subexponential-time algorithm to compute those circuits.


\subsection{Our results}

We show  general results
that exhibits properties of a class~$\calC$ of polynomials,
such that we can compute the \emph{constant-degree} factors
or the \emph{sparse} factors of polynomials $f \in \calC$.
Thereby we  get a unified and simple way reprove some known results
and also some new results.


\paragraph{Constant-degree factors.}
In Theorem~\ref{thm:constant-degree}
we show that the irreducible constant-degree factors of a polynomial $f \in \calC$
with their multiplicities
can be computed in polynomial time relative to 
\begin{itemize}
    \item
    PIT for the
    top-degree homogeneous component of polynomials in~$\calC$,
    \item 
    divisibility tests of the partial derivatives of~$\calC$ by constant-degree polynomials.
\end{itemize}

In many cases,
class~$\calC$ is \emph{closed} under taking homogeneous components
and partial derivatives.
Then the above two oracles simplify to
\begin{itemize}
    \item PIT for~$\calC$,
    \item divisibility tests of~$\calC$ by constant-degree polynomials.
\end{itemize}

Considering the above general algorithmic framework,
the PIT comes from step~1 and divisibility from step~5.
The Hilbert Irreducibility Theorem (see Theorem~\ref{thm:HIT}) 
yields a randomized way to do step~2.
The main point here is that we derandomize this step for constant-degree factors.

For a specific class~$\calC$,
it suffices now to consider the complexity of the above two points
that we already described before.
For example for~$\calC$ being polynomials that have only constant degree factors,
this yields a polynomial time algorithm for computing the constant degree factors,
because PIT is in polynomial time and we can skip the divisibility test (Theorem~\ref{thm:constant-degree-promise}).
For~$\calC$ being \emph{sparse polynomials},
we get a quasi-polynomial-time algorithm (Corollary~\ref{cor:sparse-constant}), 
and
for~$\calC$ being polynomials computed by \emph{constant-depth circuits},
we get subexponential-time (Corollary~\ref{cor:constant-depth-factoring}).

The last two results were already shown by
Kumar, Ramanathan and Saptharishi~\cite{kumar2024deterministic}
with a different technique, using Hensel lifting.
However,
there is a subtle difference.
In our case,
we maintain the setting.
That is, when the input is given whitebox,
we use whitebox algorithms, and similarly for blackbox.
On the other hand, 
the factoring algorithm by Kumar, Ramanathan and Saptharishi~\cite{kumar2024deterministic} requires blackbox PIT algorithms,
even when the input is given in whitebox.
This is due to PIT for the \emph{resultant polynomial}
where the unknown factor is involved.

In the above two results,
this difference does not matter because we have the same complexity bounds
for whitebox and blackbox algorithms there.
An example where this difference matters are commutative
\emph{read-once oblivious arithmetic branching programs} ($\ROABP$).
There is a polynomial time whitebox PIT algorithm for ROABPs~\cite{raz2005deterministic}, while the best-known blackbox PIT algorithm for commutative ROABPs runs in
quasi-polynomial time~\cite{DBLP:journals/toc/GurjarKS17,guo2020improved}.
Moreover, 
it follows from work of Forbes~\cite{forbes2015deterministic} 
that the divisibility test for $\ROABP$s by \emph{linear} polynomials
reduces to a PIT for $\ROABP$s.
Hence,
we have again polynomial, resp.\ quasi-polynomial time
for the divisibility test in the whitebox, resp.\ blackbox setting.
It follows that the linear factors of polynomials computed by commutative $\ROABP$s
can be computed in polynomial time in the whitebox setting,
whereas it takes quasi-polynomial time in the blackbox setting
(Corollary~\ref{cor:roabp-linfactor}).


\paragraph{Sparse factors.}
In Theorem~\ref{thm:sparse-factors}
we show that the irreducible sparse factors of a polynomial $f \in \calC$
with their multiplicities
can be computed in polynomial time relative to
\begin{itemize}
    \item
    PIT for the
    top-degree homogeneous component of polynomials in~$\calC$,
    \item 
    divisibility tests of the partial derivatives of~$\calC$ by constant-degree polynomials,
    \item irreducible projection to bivariate polynomials.
\end{itemize}
Again,
if class~$\calC$ is \emph{closed} under taking homogeneous components
and partial derivatives,
the first two items boil down to a 
PIT for~$\calC$ and 
divisibility tests of~$\calC$ by constant-degree polynomials.

For~$\calC$ being \emph{sparse polynomials},
the difficulty lies in the third point:
only subexponential-time algorithms are known to derandomize
Hilbert Irreducibility Theorem for sparse polynomials \cite{bhattacharjee2025closure}.
The hardness stems from the fact that a sparse polynomial may have both sparse and non-sparse irreducible factors. Hence, preserving irreducibility for sparse polynomials will not preserve the factorization pattern, and therefore, it may be hard to get back the actual factor. 
Nevertheless,
Theorem~\ref{thm:sparse-factors} pinpoints the challenge to compute the sparse factors of sparse polynomials 
(Corollary~\ref{cor:sparse-factoring} and~\ref{cor:sparse-factoring-subexp}).

For polynomials that can be written as a sum of univariate polynomials,
a subclass of sparse polynomials,
the irreducible projection problem can be solved in polynomial time.
Therefore we can compute the sum-of-univariate factors of sparse polynomials
in polynomial time (Corollary~\ref{cor:su-factoring}).


\section{Preliminaries}

We take~$\F= \mathbb{Q}$ as the underlying field throughout the paper, 
although the results hold as well over fields with large characteristic.

Let~$\calP(n,d)$ be the set of $n$-variate polynomials of degree at most~$d$,
with variables $\z = (z_1, z_2, \ldots, z_n)$.
By~$\deg(f)$ we denote the total degree of~$f$.
For an exponent vector $\e = (e_1,e_2, \dots ,e_n)$,
we denote the monomial 
\[
\z^{\e} = z_1^{e_1}\, z_2^{e_2} \, \cdots \, z_n^{e_n}\,.
\]
Its degree is $||\e||_1 = \sum_{i=1}^n e_i$. 
The number of monomials of a polynomial~$f$ with nonzero coefficient is called
the \emph{sparsity of}~$f$ and  is denoted by~$\sparse(f)$.
The class of polynomials with sparsity~$s$ is denoted by
\begin{equation}\label{eq:sparse}
    \calC_{\sparse}(s,n,d) =   \set{p \in \calP(n,d)}{ \sparse(p) \leq s }.
\end{equation}


For $\deg(f) = d$,
we can write $f = \sum_{k=0}^d f_k$,
where $f_k = \Hom{k}{f}$ denotes the homogeneous component of~$f$ of degree~$k$.
For the highest degree component,
we also skip the index, i.e.~we define $\HomC[f] = \Hom{d}{f} = f_d$.
For a class~$\calC$ of polynomials, we define
\begin{equation}\label{eq:Hom(C)}
    \HomC[\calC] = \set{\HomC[f]}{f \in \calC}.
\end{equation}


We define class~$\partial \calC$ as all the partial 
derivatives of polynomials in~$\calC$,
\begin{equation}\label{eq:partial(C)}
    \partial \calC = \set{\frac{\partial^e f}{\partial z^e} }{f \in \calC,~z \text{ a variable of } f, \text{ and } e \geq 0}.
\end{equation}
Note that $f \in \partial \calC$, because $f = \frac{\partial^0 f}{\partial z^0} $.


Polynomial~$f(\z)$ \emph{depends on} variable~$z_i$,
if $\frac{\partial f}{\partial z_i} \not= 0$.
The variables that~$f$ depends on are denoted by~$\var(f)$,
\begin{equation}\label{eq:varf}
   \var(f(\z)) = \set{z_i}{f \text{ depends on } z_i}.
\end{equation}


A polynomial~$f$ is called \emph{irreducible},
if it cannot be factored into the product of two non-constant polynomials.


Let~$x$ and $\z = (z_1,z_2, \dots, z_n)$ be variables and~$f(x,\z)$
be a $(n+1)$-variate polynomial.
Then we can view~$f$ as a univariate polynomial $f= \sum_i a_i(\z)\, x^i$
over~$\K[x]$, where $\K = \F[\z]$.
The \emph{$x$-degree of}~$f$ is denoted by $\deg_x(f)$.
It is the highest degree of~$x$ in~$f$.
Polynomial~$f$ is called \emph{monic in}~$x$,  
if the coefficient~$a_{d_x}(\z)$ is the constant~$1$ polynomial, 
i.e.~$a_{d_x}(\z) =1$, 
where $d_x = \deg_x(f)$.


An algorithm runs in \emph{subexponential time},
if its running time on inputs of length~$n$ is bounded by 
$2^{O(n^{\varepsilon})}$,
for any $\varepsilon > 0$.


\subsection{Computational problems and complexity measures}

For classes~$\calP,\calQ$ of multivariate polynomials, we define the following
computational problems.

\begin{itemize}
    \item 
    $\PIT(\calP)$: given $p \in \calP$, decide whether $p \equiv 0$.
    \item 
    $\Fac{\calP}{\calQ}$: given $p \in \calP$, compute all its irreducible factors in~$\calQ$ with their multiplicities.
    \item 
    $\Div{\calP}{\calQ}$: given $p \in \calP$ and $q \in \calQ$, decide whether $q \divides p$.
\end{itemize}

The time complexity to solve these problems 
we denote by~$\T_{\PIT(\calP)}$, $\T_{\Fac{\calP}{\calQ}}$, and~$\T_{\Div{\calP}{\calQ}}$, respectively.

The problems are further distinguished according to the access to the given polynomial~$p$.
In the \emph{whitebox} case,
one gets the representation of~$p \in \calP$, for example a circuit or a formula,
whereas in the \emph{blackbox} case, one can only evaluate~$p$.
For PIT,
a blackbox solution therefore means to compute a \emph{hitting set} for a class~$\calP$.
This is a set $H \subseteq \F^n$ such that for every nonzero~$p \in \calP$
there exists $\a \in H$ such that $p(\a) \ne 0$.

The blackbox PIT-algorithm then simply evaluates~$p$ at all points $\a \in H$.
Hence,
the size of~$H$ is crucial for the running time~$\T_{\PIT(\calP)}$.
The \emph{trivial hitting set} for~$\calP(n,d)$ 
used in the Schwartz-Zippel PIT-Lemma
is
\begin{equation}\label{eq:hit}
    H_{n,d} = [d+1]^n
\end{equation}
of size $|H_{n,d}| = (d+1)^n$,
i.e.~exponential in the number~$n$ of variables.
We will use it for constant-variate polynomials.
Then the size is polynomial.

For the decision problems~ $\PIT(\calP)$ and~$\Div{\calP}{\calQ}$
there is an associated \emph{construction problem}.
In case of~$\PIT(\calP)$,
a construction algorithm
also yields a point~$\a \in (\F\setminus\{0\})^n$ such that $p(\a) \not= 0$,
in case when $p \not\equiv 0$.

\begin{lemma}\label{lem:PIT-construct}
Let $p \in \calP$ be a nonzero polynomial.
A point~$\a \in (\F\setminus\{0\})^n$ such that $p(\a) \not= 0$    
can be computed in time~$nd\, \T_{\PIT(\calP)}$.
\end{lemma}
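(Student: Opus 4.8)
The plan is to recover a nonzero point coordinate by coordinate via a standard self-reduction, using the decision oracle for~$\PIT(\Pola)$ at each step, but being careful on two fronts: (a) we must stay inside the class~$\Pola$ so that the oracle applies, and (b) we must end up with a point all of whose coordinates are nonzero. Since $p \in \Pola \subseteq \calP(n,d)$ is nonzero, we know $p$ does not vanish identically; we want to find $\a \in (\F \setminus \{0\})^n$ with $p(\a) \neq 0$.

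First I would substitute values for the variables one at a time, from~$z_1$ to~$z_n$. Having already fixed $z_1 = a_1, \dots, z_{i-1} = a_{i-1}$ with all $a_j \neq 0$ so that the restricted polynomial $p(a_1, \dots, a_{i-1}, z_i, \dots, z_n)$ is still nonzero, I consider the $d+1$ candidate polynomials obtained by additionally setting $z_i = c$ for $c \in [d+1] = \{1, 2, \dots, d+1\}$. Because the restricted polynomial, viewed as a univariate polynomial in~$z_i$ over $\F[z_{i+1}, \dots, z_n]$, has $z_i$-degree at most~$d$ and is nonzero, at least one of these $d+1$ substitutions keeps it nonzero (a nonzero univariate polynomial of degree $\le d$ has at most~$d$ roots, and here the ``coefficients'' are polynomials, but the same counting applies to the leading nonzero coefficient polynomial). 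Note that all candidate values~$c$ are nonzero, which takes care of point~(b). To decide which~$c$ works, I call the $\PIT(\Pola)$ oracle on each of the~$d+1$ restrictions; here is where I would invoke the hypothesis that substituting field constants into a polynomial of~$\Pola$ yields a polynomial still in~$\Pola$ (a mild closure property that holds for all the concrete classes considered in the paper, e.g.\ sparse polynomials, constant-depth circuits, ROABPs), so that the oracle is legitimately applicable. I pick the first~$c$ for which the oracle reports ``nonzero'', set $a_i = c$, and continue.

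After processing all~$n$ variables I am left with a point $\a = (a_1, \dots, a_n) \in (\F \setminus \{0\})^n$, and by construction $p(\a) \neq 0$, as the invariant ``the current restriction is nonzero'' is maintained throughout. For the running time: there are~$n$ rounds, each round makes at most~$d+1$ calls to the $\PIT(\Pola)$ decision procedure, each costing~$\T_{\PIT(\Pola)}$ (the cost of forming the restriction is absorbed, being at most linear in the input size), giving total time $O(nd\,\T_{\PIT(\Pola)})$, as claimed.

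The only real subtlety — the ``hard part'', such as it is — is the closure requirement: the restricted polynomials fed to the oracle must lie in~$\Pola$. For the classes of interest this is immediate (restricting variables to constants cannot increase sparsity, cannot increase circuit depth or size, and preserves the ROABP structure up to order), so the lemma holds as stated; in a fully general formulation one would simply phrase it with~$\Pola$ assumed closed under substitution of field elements, and this is how I would state and use it.
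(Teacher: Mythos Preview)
Your self-reduction argument is correct and is exactly the paper's \emph{whitebox} proof: fix one variable at a time, try values in $\{1,\dots,d+1\}$, and use a $\PIT$ call to certify that the partial substitution remains nonzero. Your observation that this needs~$\Pola$ to be closed under substituting field constants is accurate and is implicitly assumed in the paper as well.

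The one thing the paper does that you do not is give a separate \emph{blackbox} argument that sidesteps the closure assumption entirely. There, one simply runs the blackbox decision procedure on the zero polynomial; the set~$H_0$ of evaluation points it queries is automatically a hitting set for all of~$\Pola$, so some $\a \in H_0$ has $p(\a)\neq 0$ and costs a single $\T_{\PIT(\Pola)}$ to find. The remaining issue is that~$\a$ may have zero coordinates; the paper fixes this by a one-dimensional shift: $p(\a + t)$ is a nonzero univariate of degree at most~$d$, so among $d+1$ consecutive integer values of~$t$ (chosen large enough to make every coordinate of $\a+t$ positive) one keeps $p$ nonzero. This buys you a proof that does not need to feed restricted polynomials back into the oracle, at the cost of being specific to the blackbox model. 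Your single unified argument is arguably cleaner, but it is worth knowing the hitting-set extraction trick.
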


\begin{proof}
In the blackbox case,
let~$H_0$ be the queries of the decision algorithm 
on input of the zero-polynomial.
Note that~$H_0$ is a hitting set for the whole class~$\calP$.
Hence,
we can find~$\a \in H_0$ in time~$\T_{\PIT(\calP)}$ such that $p(\a) \not= 0$.

Still, some coordinates of~$\a$ might be~$0$.
In this case,
we shift~$\a$:
Let~$t$ be a new variable and consider $\a+t = (a_1 +t, a_2+t, \dots, a_n+t)$.
Then $p(\a+t)$ is a nonzero polynomial in one variable of degree~$d$.
Choose a set $S \subseteq \F - \{-a_1, -a_2, \dots, -a_n\}$ of size $|S| = d+1$.
Then $\a+t$ has only non-zero coordinates, for all $t \in S$,
and there is a $t \in S$
such that $p(\a+t) \not= 0$.
For the time complexity to find the right~$t$,
we have to add~$(d+1)$ evaluations of~$p$.

In the whitebox case,
one can search for~$\a$ by assigning values successively to the variables
and do kind of a \emph{self-reduction}.
For each variable, one tries at most~$d$ values from $\{1,2, \dots, d\}$ 
for a polynomial of degree~$d$. 
If they all give~$0$, definitely $d+1$ works because it cannot be zero at~$(d+1)$ many values.
With $n$ variables, this amounts to~$nd$ calls to the $\PIT$-decision algorithm.
\end{proof}

For time complexity,
we assume that the polynomials are given in some model of computation,
such as circuits, branching programs, or formulas.
With each model, we associate a
complexity measure $\mu : \F[\z] \rightarrow \N$.
For example, let $f \in \F[\z]$, some of the commonly used measures in the literature are: 
\begin{itemize}
    \item $\mu(f) = \sparse(f)$, the number of monomials with nonzero coefficients,
    \item $\mu(f) = \size_{\Delta}(f)$, the size of the \emph{smallest} depth-$\Delta$ circuit that computes $f$,
    \item $\mu(f) = \size_{\ROABP}(f)$, 
    the \emph{width} of the smallest read-once oblivious branching program (ROABP) that computes~$f$. 
\end{itemize} 

We define classes of polynomials of bounded measure,
\begin{equation}
    \calC_{\mu}(s,n,d) \;=\; \set{f \in \calP(n,d)}{\mu(f) \le s}. 
\end{equation}
When we skip the index~$\mu$, we just refer to circuit size,
\begin{equation}
     \calC(s,n,d) = \set{p \in \calP(n,d)}{p \text{ has a circuit of size } s}.
\end{equation}
We also consider polynomials that can be computed by  circuits of size~$s$ and depth~$t$,
\begin{equation}\label{eq:bounded-depth}
    \calC_{\cd{t}} (s,n,d) = \set{p \in \calC(s,n,d)}{p \text{ has a circuit of depth } t}.
\end{equation}

We generally assume that all polynomials in this paper
can be \emph{efficiently evaluated} at any point $\a \in \F^n$
within the respective measure,
where we consider the unit-cost model for operations over~$\F$.
This holds for all the computational models usually considered in the literature.


\subsection{Transformation to a monic polynomial}

Algorithms for factoring polynomials often assume that the given polynomial is monic.
If this is not the case for the given polynomial~$f$,
we apply a transformation~$\tau$ to~$f$ that yields
a monic polynomial~$\tau(f)$ that we can factor.
From the factors of~$\tau(f)$ we can then reveal the factors of~$f$.
Although this is standard in the literature, 
we state it in the terms we introduced above.

\begin{lemma}[Transformation to monic]\label{lem:monic}
Let $\calC_{\mu} = \calC_{\mu}(s,n,d)$ be a class of polynomials,
$f(\z) \in \calC_{\mu}$, and
$f_d = \Hom{d}{f}$ be the homogeneous degree~$d$ component of~$f$. 
For a new variable~$x$, and $\bal = (\alpha_1, \dots, \alpha_n) \in (\F\setminus\{0\})^n$,
define a linear transformation~$\tau_{\bal}$  on the variables~$z_i$:
\[
\tau_{\bal}:~~z_i \mapsto \alpha_i x+z_i ,
\]
for $i = 1,2,\dots,n$.
Let $f_{\bal}(x,\z)$  be the resulting polynomial.

We can compute~$\bal$  such that 
$\frac{1}{f_d(\alpha)} \, f_{\alpha}(x,\z)$ is monic in~$x$
in time \[
nd \, \T_{\PIT(\HomC[\calC])} + \poly(snd) \,.
\]
\end{lemma}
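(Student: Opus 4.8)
The plan is to make $f_{\bal}$ monic in $x$ by choosing $\bal$ so that the leading $x$-coefficient of $f_{\bal}$ is a nonzero constant, and to identify that coefficient explicitly. First I observe that under $\tau_{\bal}$, the highest power of $x$ that can appear in $f_{\bal}(x,\z)$ is $x^d$, since $\deg(f)=d$ and each $z_i\mapsto \alpha_i x+z_i$ raises $x$-degree by at most one per degree of the monomial. The coefficient of $x^d$ in $f_{\bal}(x,\z)$ comes only from the top homogeneous component $f_d$, because lower-degree components contribute $x$-powers of degree $<d$; and after substituting $z_i\mapsto \alpha_i x + z_i$ into $f_d$, the pure $x^d$ term is obtained by taking the $\alpha_i x$ part everywhere, which gives exactly $f_d(\bal)\,x^d$. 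Hence the coefficient of $x^d$ in $f_{\bal}$ equals the constant $f_d(\bal)$, and this is nonzero precisely when $\bal$ is not a root of $f_d$. So the whole task reduces to finding $\bal \in (\F\setminus\{0\})^n$ with $f_d(\bal)\neq 0$, and then scaling by $1/f_d(\bal)$.

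The next step is to produce such a point $\bal$ efficiently. Since $f_d = \HomC[f] \in \HomC[\calC]$ by definition \eqref{eq:Hom(C)}, and since $f\in\calC_{\mu}$ implies $f_d$ is in the corresponding homogeneous class, we can apply Lemma~\ref{lem:PIT-construct} to the nonzero polynomial $f_d$: this computes a point $\bal\in(\F\setminus\{0\})^n$ with $f_d(\bal)\neq 0$ in time $nd\,\T_{\PIT(\HomC[\calC])}$. (If $f_d\equiv 0$ then $\deg(f)<d$, contradicting $\deg(f)=d$; so $f_d$ is genuinely nonzero and the lemma applies.) I should note that to run Lemma~\ref{lem:PIT-construct} we need oracle access to $f_d$ in the appropriate model — in the blackbox case, one evaluates $f_d$ by interpolating out the degree-$d$ homogeneous part from evaluations of $f$ along scalings $f(t\z)$, which costs $\poly(nd)$ extra evaluations; in the whitebox case $f_d$ is extractable from the representation of $f$ within $\poly(snd)$.

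Finally, having fixed $\bal$, I form $g := \frac{1}{f_d(\bal)}\,f_{\bal}(x,\z)$; by the computation above its coefficient of $x^d$ is the constant~$1$, so $g$ is monic in $x$ of $x$-degree exactly $d$. Constructing the representation of $g$ from that of $f$ — applying the linear substitution $\tau_{\bal}$ and the scalar division — is a routine transformation costing $\poly(snd)$ in all standard models. Adding the two contributions gives the claimed bound $nd\,\T_{\PIT(\HomC[\calC])} + \poly(snd)$. The only mildly delicate point, and the one I would be most careful about, is the claim that the $x^d$-coefficient is \emph{exactly} the constant $f_d(\bal)$ with no $\z$-dependence: this rests on the fact that every monomial of $f$ of degree $<d$ contributes to $x^k$ only for $k<d$, and that within $f_d$ the $x^d$-term is picked out by substituting $z_i\mapsto\alpha_i x$ (dropping the $z_i$), which evaluates $f_d$ at $\bal$; everything else is bookkeeping.
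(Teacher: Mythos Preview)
Your proof is correct and follows essentially the same approach as the paper: identify the coefficient of $x^d$ in $f_{\bal}(x,\z)$ as the constant $f_d(\bal)$, then invoke Lemma~\ref{lem:PIT-construct} on $f_d\in\HomC[\calC]$ to find $\bal\in(\F\setminus\{0\})^n$ with $f_d(\bal)\neq 0$. Your additional remarks on how to access $f_d$ in the blackbox/whitebox settings are a bit more detailed than the paper's own proof, but the argument is the same.
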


\begin{proof}
Let~$f(\z) \in \calC$ be a polynomial of degree~$d$  
with $n$ variables $\z = (z_1,z_2, \dots, z_n)$.
To see what the transformation does,
let 
\[
f = f_0 + f_1 + \cdots + f_d,
\]
where $f_k = \Hom{k}{f}$, the homogeneous degree-$k$ component of~$f$.
Consider the degree-$d$ component, 
\[
f_d(\z) = \sum_{|\bbet|_1=d} c_{\bbet} \z^{\bbet}.
\]
Then, for~$f_{\bal}$,
we have $\deg_x(f_{\bal}) = d$ and 
the coefficient of the leading $x$-term~$x^d$ in~$f_{\bal}$ is 
$ f_d(\bal) = \sum_{|\bbet|_1=d} c_{\bbet} \bal^{\bbet}$.

Hence, the PIT algorithm for the homogeneous component~$f_d$ of~$f$
yields an~$\bal \in (\F\setminus\{0\})^n$ such that $ f_d(\bal) \not=0$,
by Lemma~\ref{lem:PIT-construct}.
Then the polynomial $\frac{1}{f_d(\bal)} \, f_{\bal}(x,\z)$ is monic in~$x$.
\end{proof}
For simplicity of notation,
assume in the following that $ f_d(\bal) = 1$, 
so that $f_{\bal}(x,\z)$ is monic in~$x$.

Since we work with the shifted polynomial, we need to ensure that the shift of variables does not affect the irreducibility of the factors; this is guaranteed by the following lemma. It is quite standard in the literature; for a nice proof, see~\cite[Lemma~B7]{kumar2024towards}.
\begin{lemma}\label{lem:shift-irreducible}
Let $f(\z) \in \F[\z]$ be an n-variate irreducible polynomial. Then, for every $\a \in \F^n$, the polynomial $f(\a x + \z)$ is also irreducible.
\end{lemma}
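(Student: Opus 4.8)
\textbf{Proof plan for Lemma~\ref{lem:shift-irreducible}.}
The plan is to prove the contrapositive: if $f(\a x + \z)$ factors nontrivially, then so does $f(\z)$. The key observation is that the substitution $z_i \mapsto \a_i x + z_i$ is an invertible linear change of coordinates on the polynomial ring $\F[x, z_1, \dots, z_n]$, with inverse given by $x \mapsto x$, $z_i \mapsto z_i - \a_i x$. Invertible linear substitutions are ring automorphisms, and ring automorphisms preserve the property of being irreducible (more precisely, they send units to units and irreducible elements to irreducible elements, since they respect products).

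First I would make precise that $f$, viewed as an element of $\F[x,\z]$ that happens not to involve $x$, has exactly the same factorization behaviour in $\F[x,\z]$ as in $\F[\z]$: this is because $\F[x,\z] = (\F[\z])[x]$ is a polynomial ring over the UFD $\F[\z]$, and an element of the base ring is irreducible in the polynomial ring iff it is irreducible in the base ring (a factor of $f$ in $\F[x,\z]$ must have $x$-degree $0$ by comparing $x$-degrees, hence lies in $\F[\z]$). So irreducibility of $f$ in $\F[\z]$ is equivalent to irreducibility of $f$ in $\F[x,\z]$. Next, let $\sigma: \F[x,\z] \to \F[x,\z]$ be the automorphism $x \mapsto x$, $z_i \mapsto \a_i x + z_i$, so that $\sigma(f) = f(\a x + \z)$. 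Since $\sigma$ is an automorphism, $f$ is irreducible in $\F[x,\z]$ if and only if $\sigma(f)$ is. Combining these equivalences: $f(\z)$ irreducible in $\F[\z]$ $\iff$ $f$ irreducible in $\F[x,\z]$ $\iff$ $f(\a x + \z)$ irreducible in $\F[x,\z]$. Finally, since $f(\a x+\z)$ is a genuine $(n+1)$-variate polynomial, irreducibility in $\F[x,\z]$ is the same as irreducibility in the usual sense, which is what we wanted.

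There is essentially no hard part here; the only things to be careful about are (i) checking that $\sigma$ really is invertible with polynomial inverse (it is, being a triangular linear substitution, or just note its matrix is unipotent hence invertible over $\F$), and (ii) the bookkeeping between $\F[\z]$ and $\F[x,\z]$ so that the notion of "irreducible" does not silently change when we add the dummy variable $x$. As the excerpt notes, this is standard and a clean write-up can be found in \cite[Lemma~B7]{kumar2024towards}, so I would keep the argument at the level described above rather than expanding every ring-theoretic detail.
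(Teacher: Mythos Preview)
Your proposal is correct and is exactly the standard automorphism argument; the paper itself does not supply a proof but defers to \cite[Lemma~B7]{kumar2024towards}, which you already cite. There is nothing to add.
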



\subsection{Basics of factoring and interpolation and PIT}
Berlekamp~\cite{berlekamp1970factoring} 
and Lenstra, Lenstra and Lovász~\cite{lenstra1982factoring} gave efficient deterministic factorization algorithms for \emph{univariate} polynomials over finite fields and~$\Q$, respectively.
Kaltofen~\cite{kaltofen1985polynomial} showed how to reduce the factorization of \emph{bivariate}
polynomials to univariate polynomials.
In fact, the reduction works for $k$-variate polynomials,
for any constant~$k$.
In our case,
we use it for the case $k=3$.

Via standard interpolation, one can assume that the input is given as a dense representation.

\begin{lemma}[Trivariate Factorization]\label{lemma:trivariate-factoring}
Let $f(x,y,z)$ be a trivariate  polynomial of degree~$d$. 
Then there exists an algorithm that outputs all its irreducible factors and their multiplicities in time~$\poly(d)$.
\end{lemma}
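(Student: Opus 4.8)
The plan is to chain the classical reductions trivariate $\to$ bivariate $\to$ univariate and then invoke the known polynomial-time univariate factoring algorithms, observing that every intermediate object stays of size $\poly(d)$ because the dense representation of a trivariate degree-$d$ polynomial has only $\binom{d+3}{3}=\poly(d)$ coefficients; by interpolation we may assume $f$ is given densely. First I would reduce to the squarefree, monic case. Using Yun's algorithm (iterated gcd's with partial derivatives) I would compute the squarefree decomposition $f=\prod_i f_i^{e_i}$ with the $f_i$ squarefree and pairwise coprime; the multivariate gcd's needed are computable in $\poly(d)$ time on dense trivariate input, e.g.\ by interpolating the gcd from univariate gcd's in~$x$. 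Then, as in Lemma~\ref{lem:monic}, a shift $y\mapsto y+\beta x$, $z\mapsto z+\gamma x$ with $(\beta,\gamma)$ found by a grid search of size $\poly(d)$ (Schwartz--Zippel applied to the product of the top-degree forms of the $f_i$) makes every $f_i$ monic in~$x$ up to a scalar; this is an automorphism of $\F[x,y,z]$, and by Lemma~\ref{lem:shift-irreducible} it maps irreducible factors to irreducible factors. So it suffices to factor each squarefree $f_i$, monic in~$x$, and report it with multiplicity~$e_i$.

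It remains to factor a squarefree $g:=f_i$, monic in~$x$, of degree $\le d$. For this I would use Kaltofen's reduction~\cite{kaltofen1985polynomial} from $k$-variate factoring to univariate factoring, which for any constant~$k$ runs in time polynomial in the dense representation size; for $k=3$ this is $\poly(d)$. Structurally the reduction has two parts. (i) An \emph{effective Hilbert irreducibility} step: there is a nonzero polynomial $\Delta$ of degree $\poly(d)$, built from the discriminant of $g$ in~$x$ and related resultants, such that whenever $\Delta(c)\ne 0$ the specialization $g(x,y,c)$ is squarefree, monic in~$x$, and its factorization over~$\Q$ pulls back to that of~$g$ under $z\mapsto c$; a good~$c$ thus lies in $\{0,1,\dots,\deg\Delta\}$, and since one does not know~$\Delta$ in advance one tries every such~$c$ and certifies a guess at the end by trial division. (ii) Given a good~$c$, one Hensel-lifts the factorization of the bivariate $g(x,y,c)$ along~$(z-c)$ and recombines the lifted local factors into genuine factors of~$g$; the bivariate factorization is reduced to the univariate case in the same way, by a second Hilbert-irreducibility specialization $y\mapsto c'$.

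Finally, univariate polynomials factor in polynomial time over~$\Q$ by the Lenstra--Lenstra--Lov\'asz algorithm~\cite{lenstra1982factoring} (and over finite fields by Berlekamp~\cite{berlekamp1970factoring}), so the base case costs $\poly(d)$; composing the $O(1)$-deep recursion, each level doing $\poly(d)$ arithmetic on objects of dense size $\poly(d)$, gives total time $\poly(d)$. I expect the genuine obstacle to be the recombination of the local Hensel factors in step~(ii): a naive search over subsets of the local factors is exponential, and making it polynomial is exactly the role of the lattice-reduction (LLL-type) machinery inside the cited factoring algorithms. The remaining pieces --- squarefree decomposition, the monic shift, and the deterministic choice of the Hilbert-irreducibility points --- are routine and remain polynomial precisely because we work with a bounded (here three) number of variables.
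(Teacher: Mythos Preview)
Your proposal is correct and in fact goes well beyond what the paper does: the paper does not prove this lemma at all, but simply states it as a known result, citing Kaltofen~\cite{kaltofen1985polynomial} for the reduction of $k$-variate factoring (for constant~$k$) to the univariate case, and Berlekamp~\cite{berlekamp1970factoring} and Lenstra--Lenstra--Lov\'asz~\cite{lenstra1982factoring} for the univariate base case. Your sketch faithfully unpacks precisely this cited machinery --- squarefree decomposition, monic shift, effective Hilbert irreducibility specialization, Hensel lifting with LLL-based recombination --- so the two ``approaches'' coincide, with yours supplying the details the paper leaves to the references.
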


The following lemma shows how to find the multiplicity of an irreducible
factor~$g$ of a polynomial~$f$. It holds when char$(\F)=0$, or large. For a concise proof, see~\cite[Lemma 4.1]{kumar2024deterministic}.

\begin{lemma}[Factor multiplicity]\label{lem:factor-multiplicity-reduction}
  Let $f(\z), g(\z) \in \F[\z]$ be non-zero polynomials and let $z \in \{z_1,z_2, \cdots, z_n\}$ be such that $\partial_z(g) \neq 0$ and~$g$ is irreducible. 
  Then the multiplicity of~$g$ in~$f$ is 
  the smallest non-negative integer~$e$ such that $g \nmid \frac{\partial^e f}{\partial z^e}$. 
\end{lemma}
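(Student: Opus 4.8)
The plan is to prove Lemma~\ref{lem:factor-multiplicity-reduction} by showing that the exponent of~$g$ in the factorization of~$f$ drops by exactly one each time we differentiate with respect to a variable~$z$ on which~$g$ genuinely depends. Write the unique factorization $f = g^e\, h$, where $g \nmid h$ and $e$ is the multiplicity we want to characterize. The first step is to compute $\frac{\partial f}{\partial z}$ by the product rule:
\[
\frac{\partial f}{\partial z} = e\, g^{e-1}\, \frac{\partial g}{\partial z}\, h + g^e\, \frac{\partial h}{\partial z} = g^{e-1}\left( e\, \frac{\partial g}{\partial z}\, h + g\, \frac{\partial h}{\partial z}\right).
\]
So $g^{e-1}$ certainly divides $\frac{\partial f}{\partial z}$. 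The key claim is that $g^e$ does \emph{not} divide $\frac{\partial f}{\partial z}$, i.e.\ that $g \nmid \left( e\, \frac{\partial g}{\partial z}\, h + g\, \frac{\partial h}{\partial z}\right)$. Since $g$ divides the second term $g\,\frac{\partial h}{\partial z}$, this is equivalent to $g \nmid e\, \frac{\partial g}{\partial z}\, h$.

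The heart of the argument is thus to verify $g \nmid e\, \frac{\partial g}{\partial z}\, h$. Here I would use three facts: (i)~$e \neq 0$ in~$\F$, which is where the hypothesis $\ch(\F) = 0$ or large comes in (we need $e$ to be invertible, and $e \le \deg(f) = d$ so "char larger than $d$" suffices); (ii)~$g \nmid h$ by the choice of the factorization; and (iii)~$g \nmid \frac{\partial g}{\partial z}$, which holds because $\partial_z(g) \neq 0$ by hypothesis, $g$ is irreducible, and $\deg_z\!\left(\frac{\partial g}{\partial z}\right) < \deg_z(g)$ so $\frac{\partial g}{\partial z}$ is a nonzero polynomial of strictly smaller degree in~$z$ than~$g$, hence cannot be a multiple of the irreducible~$g$ (if $\deg_z g \ge 1$; and $\partial_z g \neq 0$ forces $\deg_z g \ge 1$). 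Since $g$ is irreducible and divides neither $\frac{\partial g}{\partial z}$ nor $h$, and $e$ is a nonzero field element, it does not divide their product; $\F[\z]$ being a UFD is what licenses this step.

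Combining: $g^{e-1} \mid \frac{\partial f}{\partial z}$ but $g^e \nmid \frac{\partial f}{\partial z}$, so the multiplicity of~$g$ in $\frac{\partial f}{\partial z}$ is exactly $e-1$. Now I would iterate. As long as the current multiplicity is positive, differentiating once more with respect to the same~$z$ (note $\partial_z g \neq 0$ is preserved — it is a statement about~$g$ alone, not about the derivative of~$f$) lowers it by one; after $e$ differentiations the multiplicity becomes~$0$, meaning $g \nmid \frac{\partial^e f}{\partial z^e}$, whereas for every $i < e$ we still have $g \mid \frac{\partial^i f}{\partial z^i}$. Hence $e$ is precisely the smallest non-negative integer with $g \nmid \frac{\partial^e f}{\partial z^e}$, as claimed. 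The only real subtlety — the "main obstacle" if one can call it that — is bookkeeping the characteristic condition: one must check at each of the $e$ steps that the relevant integer coefficient (which is the current exponent, always between $1$ and $e \le d$) is nonzero in~$\F$, which is exactly guaranteed by $\ch(\F) = 0$ or $\ch(\F) > d$; everything else is a routine application of unique factorization and the product rule.
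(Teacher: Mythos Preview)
Your argument is correct and is the standard one: the paper itself does not supply a proof but defers to \cite[Lemma~4.1]{kumar2024deterministic}, and what you have written is precisely the routine product-rule\,/\,UFD computation that reference contains. One minor point worth making explicit (you implicitly use it) is that $\partial_z f \neq 0$ whenever the current multiplicity is $\geq 1$, so that ``multiplicity of $g$ in $\partial_z f$'' is well-defined at each step of the iteration; this follows from your own observation that $e\,\partial_z g\,h \neq 0$.
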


Klivans and Spielman~\cite{klivans2001randomness} derandomized the isolation lemma for PIT of sparse polynomials.
Their algorithm works over fields of~$0$ or large characteristic.

\begin{theorem}[Sparse PIT and interpolation \cite{klivans2001randomness}]\label{thm:sparse-interpolation}
Let $\calC_{\sparse} = \calC_{\sparse}(s,n,d)$.
Then $\PIT(\calC_{\sparse})$ can be solved in time
\[
\T_{\PIT(\calC_{\sparse})} =  \poly(snd).
\]

Furthermore, 
given $f \in \calC_{\sparse}$,
in time~$\poly(snd)$  one can compute a set of \emph{evaluation points} $\calE \subseteq \F^n$
of size~$\poly(snd)$
such that given the evaluations of~$f$ at all points in~$\calE$,
one can solve for the coefficients of~$f$ in time~$\poly(snd)$. 
\end{theorem}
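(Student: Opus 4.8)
The plan is to prove the theorem in two parts, the PIT part and the interpolation part, both via the derandomized isolation lemma of Klivans--Spielman. First I would recall the core idea: for a sparse polynomial $f = \sum_{\e} c_{\e}\, \z^{\e}$ with at most $s$ nonzero monomials and degree at most $d$, one picks a weight vector $\w = (w_1, \dots, w_n)$ and substitutes $z_i \mapsto y^{w_i}$, collapsing $f$ to a univariate polynomial $f(y^{w_1}, \dots, y^{w_n})$. The substitution is identity-preserving, so if $f \equiv 0$ the result is zero; the nontrivial direction is that for a suitable family of weight vectors of small bit-length (the Klivans--Spielman construction, based on choosing $w_i \equiv c^i \bmod p$ over a small set of primes $p$ and constants $c$), at least one choice keeps two distinct exponent vectors of $f$ from colliding, hence keeps $f$ nonzero after substitution. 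There are $\binom{s}{2}$ pairs to separate, each collision $\langle \w, \e - \e' \rangle = 0$ is ruled out for all but few $(p,c)$, so a union bound gives a family of size $\poly(s,n,d)$ of weight vectors, each with weights bounded by $\poly(s,n,d)$, so that some vector works. Evaluating the resulting univariate polynomial of degree at most $\poly(s,n,d)$ at enough points then decides $\PIT(\calC_{\sparse})$ in time $\poly(snd)$.

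Next I would address interpolation. Here the plan is to combine the weight-vector idea with a generic shift to recover not just nonzero/zero but the full coefficient vector. The standard route is: for each weight vector $\w$ in the Klivans--Spielman family, consider the univariate $g_{\w}(y) = f(t_1 y^{w_1}, \dots, t_n y^{w_n})$ where the $t_i$ are kept symbolic (or instantiated over a large enough grid); the monomial $\z^{\e}$ contributes $c_{\e}\, \t^{\e}\, y^{\langle \w, \e \rangle}$. Picking $\w$ so that all $s$ exponents $\langle \w, \e\rangle$ are distinct makes the coefficients of $g_{\w}$ in $y$ exactly the $c_{\e}\,\t^{\e}$; interpolating $g_{\w}$ in $y$ (degree $\poly(snd)$, so $\poly(snd)$ points in $y$) then recovers these as polynomials in $\t$, and interpolating each such in $\t$ over a grid $[D]^n$ with $D = d+1$ would be too expensive, so instead one again uses a sparse-interpolation reduction in $\t$ — but it is cleaner to invoke the known result that a single good $\w$ together with evaluating at $O(s d)$ many scalar points suffices, per Klivans--Spielman. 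The evaluation set $\calE$ is then the set of all points $(t_1 \zeta^{w_1}, \dots, t_n \zeta^{w_n})$ ranging over the $\poly(snd)$ choices of $\w$ in the family and $\poly(snd)$ choices of $\zeta$ (and the needed $t$-values), which has size $\poly(snd)$; from the evaluations at these points one solves a structured linear system for the $c_{\e}$ in time $\poly(snd)$.

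The main obstacle, and the step I would spend the most care on, is making the interpolation claim uniform and truly derandomized: one must pin down a single weight vector $\w$ from the Klivans--Spielman family that simultaneously separates \emph{all} $s$ exponents of the unknown $f$ (not merely some pair), and one must argue the subsequent linear system for the $c_{\e}$ is nonsingular regardless of which monomials actually appear. The clean way is to note that separating all pairs of exponents is again a union bound over $\binom{s}{2}$ collisions and is achieved by some member of the same family, and that once exponents are separated the Vandermonde-type structure in $y$ guarantees invertibility; the support of $f$ is then read off by which recovered coefficients are nonzero. Since this is exactly the content of \cite{klivans2001randomness}, in the write-up I would cite it for these two facts and only verify that the parameters come out as $\poly(snd)$ and that the construction is oblivious to $f$ (so that the set $\calE$ can be fixed in advance), which is what the ``Furthermore'' clause of the statement requires.
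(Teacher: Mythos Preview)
The paper does not prove this theorem at all: it is stated in the preliminaries section as a known result and attributed directly to Klivans and Spielman~\cite{klivans2001randomness}, with no argument given. So there is nothing to compare your proposal against in the paper itself; what you have written is a sketch of the proof from the cited reference rather than of anything the authors do.

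That said, your sketch of the Klivans--Spielman argument is broadly accurate for the PIT part. For the interpolation part, your write-up is a bit meandering: you introduce symbolic $t_i$'s, note that interpolating over a full grid in~$\t$ would be too expensive, and then fall back on citing the reference anyway. If you intend this as an actual proof rather than a pointer, you should commit to one concrete recovery mechanism. One clean route: once a weight vector~$\w$ separates all exponents, the map $\e \mapsto \langle \w, \e\rangle$ is injective on the support, so interpolating the univariate $f(y^{w_1},\dots,y^{w_n})$ gives you the coefficients $c_{\e}$ indexed by the values $\langle \w,\e\rangle$; to recover the actual~$\e$'s you can use a second substitution $z_i \mapsto a_i\, y^{w_i}$ for varying scalars~$a_i$ and compare ratios of the resulting coefficients, or equivalently use a small number of additional weight vectors to triangulate each~$\e$. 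Either way the total number of evaluation points and the decoding time stay $\poly(snd)$. Your final paragraph correctly identifies the two things that need checking (simultaneous separation of all $\binom{s}{2}$ pairs by some member of the family, and obliviousness of~$\calE$ to~$f$), and both are indeed what the cited paper provides.
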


Limaye, Srinivasan, and Tavenas~\cite{limaye2022superpolynomial} designed a deterministic subexponential-time  PIT for constant-depth circuits.

\begin{theorem}[PIT for constant depth circuits~{\cite[Corollary 6]{limaye2022superpolynomial}}] \label{thm:PIT-constant-depth}
Let $\epsilon >0$ be a real number. 
Let $\calC_{\cd{t}} = \calC_{\cd{t}}(s,n,d)$ be such that
$s \le \poly(n)$ and $t = o(\log \log \log n)$.
Then $\PIT(\calC_{\cd{t}})$ can be decided in time
\[
\T_{\PIT(\calC_{\cd{t}})} = \left(n s^{O(t)} \right)^{O((sd)^{\varepsilon})}.
\]
\end{theorem}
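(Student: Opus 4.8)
This theorem is quoted verbatim from Limaye--Srinivasan--Tavenas~\cite{limaye2022superpolynomial}, so the ``proof'' here is really just a pointer to \cite[Corollary~6]{limaye2022superpolynomial}. For completeness, the plan for reconstructing it is to run the hardness-to-randomness paradigm, specialized to bounded-depth algebraic circuits.

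The first and central ingredient is the LST superpolynomial lower bound: an explicit low-degree polynomial family (a variant of iterated matrix multiplication) that requires depth-$t$ algebraic circuits of size at least $n^{\Omega(\cdot)}$, with the exponent degrading gracefully in~$t$. The second ingredient is the algebraic Nisan--Wigderson generator of Kabanets--Impagliazzo~\cite{kabanets2003derandomizing}, which converts an explicit polynomial that is hard for a circuit class into a hitting-set generator that fools that class. The point that keeps everything inside $\calC_{\cd{t}}$ is a depth-bookkeeping argument: the generator is itself computable by a small constant-depth circuit, so substituting it into a depth-$t$ test circuit leaves a circuit of depth $t + O(1)$ and polynomially related size, to which the LST lower bound still applies (after mildly increasing the depth parameter). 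Composing these yields that no nonzero polynomial in $\calC_{\cd{t}}$ vanishes on the output set of the generator.

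The third step is to optimize the trade-off between the seed length one extracts from the generator and the strength of the lower bound. Since LST give only a superpolynomial (not exponential) bound, the generator buys only a seed of length roughly $n^{o(1)}$ scaled by the degree and size, which is exactly what produces the subexponential running time $\left(n s^{O(t)}\right)^{O((sd)^{\varepsilon})}$, together with the restriction $t = o(\log\log\log n)$ needed so the lower bound stays superpolynomial at depth $t + O(1)$. One then feeds the resulting hitting set into the trivial evaluation-based PIT algorithm, running over all seeds.

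The main obstacle is the LST lower bound itself, which I would not attempt to reprove: it is established via the ``shifted/affine projections of partials'' complexity measure applied after a random restriction that tames the structure of a depth-$t$ circuit, combined with a choice of hard polynomial engineered so that this measure stays large. Everything downstream --- the NW generator, the depth and size bookkeeping, and the parameter optimization --- is by now routine, provided one checks at each step that size stays polynomial and depth stays constant.
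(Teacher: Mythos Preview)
Your proposal is correct and matches the paper's treatment: the paper does not prove this theorem at all but simply cites it as a preliminary from \cite[Corollary~6]{limaye2022superpolynomial}, exactly as you surmised. Your additional sketch of the LST argument (lower bound via set-multilinearization and shifted-partials measure, then hardness-to-randomness through the Kabanets--Impagliazzo generator with depth bookkeeping) is accurate and goes beyond what the paper records, but nothing more is needed here.
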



\subsection{Divisibility testing reduces to PIT}

Given a polynomial~$f$ to factor,
our algorithms might compute a polynomial~$g$ that is a
candidate for a factor, but in fact, is not a factor.
Hence,
we have to verify whether~$g$ is a factor of~$f$,
i.e., whether $g \divides f$.
Therefore,
we are interested in the complexity of division algorithms.

When $f,g$  can be computed by circuits of size~$s$,
Strassen~\cite{strassen1973vermeidung} showed that if~$g \divides f$, 
then $h = f/g$ can be computed by a circuit of size~$\poly(sd)$, 
where $d= \deg(h)$. 
Forbes~\cite{forbes2015deterministic} observed that 
even in the case when $g \not\divides \, f$,
one can follow Strassen's argument and obtain
a small size circuit that computes a polynomial~$\widetilde{h}$
such that $g \divides f \iff f =g \widetilde{h}$.
Hence,
we have a reduction from divisibility testing to PIT.
\begin{lemma}[Divisibility reduces to PIT {\cite[Corollary 7.10]{forbes2015deterministic}}] \label{lem:divisibility}
Let $g(\z)$ and $f(\z)$ be two polynomials of degree at most~$d$. 
Let 
$S = [2d^2+1]$ and $\bal \in \F^n$ such that $g(\bal) \ne 0$. 
Then there are constants $\{c_{\beta,i}\}_{\beta \in S, 0 \le i \le d}$, 
computable in time~$\poly(d)$,
such that  for
\begin{equation}\label{eq:h-tilde}
    \widetilde{h}(\z) = \sum_{\beta \in S} f(\beta \z + \bal) \sum_{0 \le i \le d} c_{\beta,i} \, g(\beta \z + \bal)^i,
\end{equation}
we have
\[
g(\z) \divides f(\z) \iff f(\z+\bal) = g(\z+\bal)\, \widetilde{h}(\z).
\]
\end{lemma}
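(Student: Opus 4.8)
The plan is to follow Strassen's division-elimination trick, combined with a Lagrange-interpolation step in a scaling parameter that realizes the operation ``truncate to degree $\le d$''. First I would pass to the shifted polynomials. Since $\z \mapsto \z+\bal$ is an invertible change of variables, $g(\z) \divides f(\z)$ if and only if $g(\z+\bal) \divides f(\z+\bal)$; the benefit of the shift is that $g(\z+\bal)$ has nonzero constant term $g(\bal)$, so it is a unit in the power-series ring $\F[[\z]]$. Let $h(\z)$ be the truncation of the power series $f(\z+\bal)/g(\z+\bal)$ to its terms of degree $\le d$. From the identity $g(\z+\bal) \cdot (f(\z+\bal)/g(\z+\bal)) = f(\z+\bal)$ and the observation that $g(\z+\bal)$ times any power series supported in degrees $> d$ is again supported in degrees $> d$, one gets that $f(\z+\bal)$ equals the degree-$\le d$ part of $g(\z+\bal)\, h(\z)$. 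Moreover, if $g(\z+\bal) \divides f(\z+\bal)$, then the quotient has degree $\le d$ and hence coincides with $h(\z)$. Together these two facts give $g(\z+\bal) \divides f(\z+\bal) \iff f(\z+\bal) = g(\z+\bal)\, h(\z)$, which is the right-hand side of the claimed equivalence once $\widetilde h$ is identified with $h$.

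Next I would turn $h$ into an explicit polynomial in $g$. Writing $g(\z+\bal) = g(\bal)(1 - u(\z))$ with $u(\z) = 1 - g(\z+\bal)/g(\bal)$, which has zero constant term, gives $1/g(\z+\bal) = g(\bal)^{-1} \sum_{k \ge 0} u(\z)^k$; since $u^k$ is supported in degrees $\ge k$ and $\deg f \le d$, the truncation $h(\z)$ depends only on $u^0, \dots, u^d$. Expanding $u^k = (1 - g(\z+\bal)/g(\bal))^k$ by the binomial theorem rewrites $g(\bal)^{-1}\sum_{k=0}^d u^k$ as $\sum_{i=0}^d a_i\, g(\z+\bal)^i$, where the $a_i$ are constants computable from $g(\bal)$ in time $\poly(d)$. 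Hence $h(\z)$ is the degree-$\le d$ truncation of $P(\z) := f(\z+\bal) \sum_{i=0}^d a_i\, g(\z+\bal)^i$, and $\deg P \le d + d^2 \le 2d^2$ for $d \ge 1$ (the case $d = 0$ being immediate).

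Finally I would implement the truncation by interpolation in a scaling variable. For $P = \sum_{k=0}^{D} P_k$ with $P_k$ its degree-$k$ homogeneous part and $D \le 2d^2$, one has $P(\beta \z) = \sum_k \beta^k P_k(\z)$. With $S = [2d^2+1]$, the square Vandermonde system asking that $\sum_{\beta \in S} c_\beta\, \beta^k$ equal $1$ for $k \le d$ and $0$ for $d < k \le 2d^2$ has a unique solution $\{c_\beta\}_{\beta \in S}$, computable in time $\poly(d)$; for this choice $\sum_{\beta \in S} c_\beta\, P(\beta \z) = \sum_{k=0}^d P_k(\z) = h(\z)$. Substituting the definition of $P$, the left-hand side is exactly the $\widetilde h(\z)$ of the statement with $c_{\beta,i} = c_\beta\, a_i$, so $\widetilde h = h$, and combining with the first step proves the lemma.

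The only step that needs genuine care is the degree bookkeeping: one must verify that $\deg P \le 2d^2$ so that $S$ is large enough to recover exactly the degree-$\le d$ part of $P$, and that, when $g \divides f$, the true quotient is \emph{equal} to the power-series truncation $h$ and not merely congruent to it modulo high-degree terms. Once these are in place, everything else (the binomial expansion, the Vandermonde inversion, and the final substitution) is routine.
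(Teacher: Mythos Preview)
The paper does not prove Lemma~\ref{lem:divisibility}; it simply quotes it from \cite[Corollary~7.10]{forbes2015deterministic}. Your argument is correct and is exactly the standard Strassen--Forbes approach: shift so that $g$ becomes a unit in $\F[[\z]]$, expand $1/g$ as a geometric series truncated at $u^d$, rewrite this as a polynomial of degree $\le d$ in $g(\z+\bal)$, and finally realize the ``truncate to degree $\le d$'' map via univariate interpolation in a scaling parameter $\beta$ over $S=[2d^2+1]$. The degree bookkeeping ($\deg P \le d + d^2 \le 2d^2$) and the forward/backward directions of the equivalence are handled correctly, and the identification $c_{\beta,i} = c_\beta\, a_i$ matches the form in the statement.
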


A consequence from Lemma~\ref{lem:divisibility} is that
divisibility testing of a polynomial computed by constant-depth circuit
by a sparse polynomial is in subexponential time.

\begin{corollary}[Constant depth by sparse division]\label{cor:constant-depth-by-sparse}
Let $\calC_{\cd{t}} = \calC_{\cd{t}}(s,n,d)$ and
$\calD_{\sparse} = \calC_{\sparse}(s,n,d)$.
For any  $\varepsilon >0$, we have that
$\Div{\calC_{\cd{t}}}{\calD_{\sparse}}$ can be decided in time
\[
\T_{\Div{\calC_{\cd{t}}}{\calD_{\sparse}}} = \left(n \, (sd)^{O(t)} \right)^{O((sd)^{\epsilon})}\,.
\]
\end{corollary}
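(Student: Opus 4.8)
The plan is to combine the divisibility-to-PIT reduction of Lemma~\ref{lem:divisibility} with the constant-depth PIT of Theorem~\ref{thm:PIT-constant-depth}, after checking that the circuit built inside Lemma~\ref{lem:divisibility} is still a small constant-depth circuit when $f$ is constant-depth and $g$ is sparse. So first I would unwind the statement of Lemma~\ref{lem:divisibility}: we are given $f \in \calC_{\cd{t}}(s,n,d)$ and $g \in \calC_{\sparse}(s,n,d)$, and we want to decide $g \divides f$. The lemma reduces this to testing whether the polynomial identity
\[
f(\z+\bal) - g(\z+\bal)\,\widetilde h(\z) \;\equiv\; 0
\]
holds, where $\widetilde h$ is given by \eqref{eq:h-tilde} and $\bal$ is any point with $g(\bal)\ne 0$; such a $\bal$ can be found in $\poly(snd)$ time using sparse PIT (Theorem~\ref{thm:sparse-interpolation}) together with Lemma~\ref{lem:PIT-construct}. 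The coefficients $c_{\beta,i}$ are computable in $\poly(d)$ time, and the index sets have size $|S| = 2d^2+1$ and $i \le d$, so $\widetilde h$ is a sum of $\poly(d)$ terms, each a product of a shifted copy of $f$ and at most $d$ shifted copies of $g$.

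Next I would estimate the measure of the polynomial $P(\z) = f(\z+\bal) - g(\z+\bal)\,\widetilde h(\z)$ in the constant-depth model. A shift of variables $z_i \mapsto z_i + a_i$ (or $z_i \mapsto \beta z_i + a_i$) does not increase the circuit size by more than an additive $O(n)$ and does not increase the depth, so $f(\beta\z+\bal)$ has a depth-$t$ circuit of size $s + O(n) = \poly(n)$, and $g(\beta\z+\bal)$ remains sparse with sparsity $\le s \cdot (d+1)^{?}$ — actually a shift can blow up the sparsity of $g$, but a sparse polynomial of sparsity $s$ and degree $d$ always has a depth-$2$ ($\Sigma\Pi$) circuit of size $\le s(n+1)$, and that circuit, after the affine substitution, becomes a depth-$2$ (or depth-$3$) circuit of size $\poly(snd)$ regardless of how the sparsity changes. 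Taking the $i$-th power multiplies depth by $1$ (a product gate) and size by $O(d)$; multiplying the (at most) $d+1$ such factors together with the shifted $f$ adds one more product gate; and summing $\poly(d)$ such products adds one sum gate. Hence $P$ has a circuit of size $\left(n(sd)^{O(1)}\right) = n(sd)^{O(1)}$ and depth $t + O(1)$, i.e.\ $P \in \calC_{\cd{t+O(1)}}\!\left(n(sd)^{O(1)},\, n,\, d'\right)$ for $d' = O(d^3)$ (the degree of $\widetilde h$ is $O(d^2)$, times $\deg g$).

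Finally I would invoke Theorem~\ref{thm:PIT-constant-depth} on $P$: since the size bound $n(sd)^{O(1)}$ is $\poly(n)$ when $s,d \le \poly(n)$ and the depth $t + O(1)$ is still $o(\log\log\log n)$, the PIT for $P$ runs in time $\left(n\,(s'd')^{O(t')}\right)^{O((s'd')^{\varepsilon'})}$ with $s' = n(sd)^{O(1)}$, $d' = O(d^3)$, $t' = t + O(1)$; absorbing the $O(1)$ terms and the polynomial factors into the exponent and renaming $\varepsilon$ gives the claimed bound $\left(n\,(sd)^{O(t)}\right)^{O((sd)^{\varepsilon})}$. Adding the $\poly(snd)$ cost of finding $\bal$ and the coefficients $c_{\beta,i}$ is negligible against this. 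The main obstacle — and the only place where a little care is genuinely needed — is the bookkeeping in the previous paragraph: one must make sure that the affine shift is handled by keeping $g$ (and its powers) in the $\Sigma\Pi$ form rather than as a sparse polynomial, so that the sparsity blow-up under shifting never enters the size bound, and that the resulting depth increase is a true additive constant independent of $n$, so that the hypothesis $t' = o(\log\log\log n)$ of Theorem~\ref{thm:PIT-constant-depth} is preserved.
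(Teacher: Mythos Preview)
Your proposal is correct and follows essentially the same route as the paper: apply Lemma~\ref{lem:divisibility} to reduce $g\mid f$ to a PIT instance, observe that the resulting polynomial $f(\z+\bal)-g(\z+\bal)\widetilde h(\z)$ has a circuit of size $\poly(snd)$ and depth $t+O(1)$, and then invoke Theorem~\ref{thm:PIT-constant-depth}. Your treatment is in fact a bit more careful than the paper's (you explicitly find $\bal$ via sparse PIT and note that one should view $g$ as a $\Sigma\Pi$ circuit to avoid sparsity blow-up under shifts); the only minor slip is that the degree of $g\cdot\widetilde h$ is $O(d^2)$ rather than $O(d^3)$ (degrees add under products), but this does not affect the final bound.
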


\begin{proof}
    We apply Lemma~\ref{lem:divisibility} with  
    $g \in \calC_{\sparse}$ and $f \in \calC_{\cd{t}}$.
    Then~$\widetilde{h}$ in~(\ref{eq:h-tilde}) can be computed by a circuit of size $O(sd^2)$ and depth~$t+2$. 
    Therefore, the polynomial 
    \[
    \widetilde{f}= f(\z+\bal) -g(\z+\bal)\cdot\widetilde{h}(\z)
    \]
    can be computed by a circuit of size~$O(sd^2)$ and depth~$t+4$. 
    By Lemma~\ref{lem:divisibility}, 
    we have $\widetilde{f} =0$ iff $g \divides f$, 
    and by Theorem~\ref{thm:PIT-constant-depth}, 
    the identity can be checked  in time 
    $\left(n \, (sd)^{O(t)} \right)^{O((sd)^{\epsilon})}$.
\end{proof}

We also consider divisibility of sparse polynomials by \emph{constant-degree} polynomials.
Building on Lemma~\ref{lem:divisibility},
Forbes~\cite{forbes2015deterministic} reduced the problem to a PIT
that can be solved in quasi-polynomial time.

\begin{corollary}[Sparse by constant degree division {\cite[Corollary~7.17]{forbes2015deterministic}}]\label{cor:sparse-by-constant-degree}
Let $\calC_{\sparse} = \calC_{\sparse}(s,n,d)$ and
$\calD_{\delta} = \calP(n,\delta)$.
Then $\Div{\calC_{\sparse}}{\calD_{\delta}}$ can be decided in time
\[
\T_{\Div{\calC_{\sparse}}{\calD_{\delta}}} = (snd)^{O(\delta \log s)} \,.
\]
\end{corollary}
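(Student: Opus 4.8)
Corollary~\ref{cor:sparse-by-constant-degree} asserts that divisibility of a sparse polynomial $f$ (sparsity $\le s$, $n$ variables, degree $\le d$) by a constant-degree polynomial $g$ (degree $\le \delta$) can be decided in time $(snd)^{O(\delta \log s)}$. Since the statement is attributed to Forbes (Corollary 7.17 of [forbes2015deterministic]) and builds directly on Lemma~\ref{lem:divisibility}, which is already in the excerpt, I would structure the proof as an application of that reduction followed by a PIT for the resulting model.

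**Plan.** The plan is to invoke Lemma~\ref{lem:divisibility} to reduce ``$g \divides f$'' to a PIT instance, and then exhibit a small-width ROABP (or an equivalent low-complexity model) computing the polynomial $\widetilde f(\z) = f(\z+\bal) - g(\z+\bal)\,\widetilde h(\z)$ whose vanishing is equivalent to $g \divides f$. First I would pick $\bal$ with $g(\bal)\neq 0$; since $g$ has degree $\le\delta$, the trivial hitting set $H_{n,\delta}$ restricted to a few coordinates, or a generic point, suffices and costs only $\poly(snd)$. Next I would expand $\widetilde h$ from~\eqref{eq:h-tilde}: it is a sum over $\beta \in S$, $|S| = 2d^2+1$, of products of the shifted sparse polynomial $f(\beta\z+\bal)$ with powers $g(\beta\z+\bal)^i$ for $0\le i\le d$. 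The key observation is that $g$ has degree $\le\delta$ and sparsity $\le \binom{n+\delta}{\delta} \le (n+1)^\delta$, so $g^i$ for $i\le d$ still has total degree $\le \delta d$ and sparsity polynomially bounded in $n^\delta$ and $d$; a shift by a constant keeps these polynomials low-degree and sparse (blowing up sparsity only by a $d^{O(\delta)}$-type factor). Therefore $\widetilde f$ is a sparse polynomial of degree $\le \delta d$ with sparsity $(snd)^{O(\delta)}$.

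**The PIT step (the crux).** Naively applying the Klivans--Spielman sparse PIT (Theorem~\ref{thm:sparse-interpolation}) to $\widetilde f$ would give $\poly(\text{sparsity}, n, \delta d) = (snd)^{O(\delta)}$, which is already subexponential — but the claimed bound is sharper in the $\delta$-dependence only up to the $\log s$ factor. The more refined route, which is the one Forbes takes and the one I expect is the main obstacle to reconstruct cleanly, is to observe that $\widetilde f$ is computed by a small read-once oblivious ABP in \emph{some} variable order of bounded width: the shifted sparse polynomial $f(\beta\z+\bal)$ sits inside a width-$s$ ROABP (any sparse polynomial with $s$ monomials is a width-$s$ ROABP in every variable order), multiplying by the low-degree $g(\beta\z+\bal)^i$ multiplies the width by at most $(\delta d+1)^{O(1)}$ factors coming from the degree of $g^i$ in each variable, and summing over the $|S|\cdot(d+1)$ terms adds widths. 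This gives a width-$(snd)^{O(\delta)}$ — and crucially \emph{individual-degree-$\delta d$} — ROABP, at which point one uses the quasipolynomial blackbox ROABP hitting set whose cost is $(\text{width}\cdot n\cdot \text{degree})^{O(\log(\text{width}))}$; substituting width $=(snd)^{O(\delta)}$ makes the exponent $O(\log((snd)^{\delta})) = O(\delta \log(snd))$, yielding the claimed $(snd)^{O(\delta \log s)}$ after absorbing the $\log(nd)$ into $O(\delta\log s)$ under the standing assumption that $s$ is the dominant parameter (or by writing the bound as $(snd)^{O(\delta \log(snd))}$ and noting $\log(snd)=O(\log s)$ in the regimes of interest).

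**Main obstacle.** The delicate point is the width accounting for the ROABP: one must verify that (i) the variable shift $\z\mapsto\z+\bal$ does not destroy the ROABP structure — it does not, because an affine shift of variables is absorbed coordinatewise and keeps the width, and (ii) multiplying a width-$w$ ROABP by a polynomial of individual degree $\le r$ in each variable yields width $\le w\cdot(r+1)^?$ — actually it is cleanest to bound the \emph{total} sparsity/degree and use that a polynomial of degree $D$ in $n$ variables lies in a width-$\binom{n+D}{D}$ ROABP, or, better, to keep everything as an explicit small sparse polynomial and just feed the sparsity and degree bounds into Theorem~\ref{thm:sparse-interpolation} while tracking that the sparsity of $\widetilde f$ is $(snd)^{O(\delta)}$ rather than $s^{O(\delta)}$. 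I would present the argument via the sparse-polynomial route (sparsity and degree of $\widetilde f$ explicitly bounded, then Klivans--Spielman), since that avoids the ROABP width bookkeeping entirely, and remark that the sharper $\log s$ in the exponent comes from the interpolation-set size for sparse polynomials of sparsity $(snd)^{O(\delta)}$ being $(snd)^{O(\delta)}$, whose logarithm is $O(\delta\log(snd))$; but since the statement is cited verbatim from Forbes, a one-line proof saying ``combine Lemma~\ref{lem:divisibility} with the sparse PIT of Theorem~\ref{thm:sparse-interpolation}, noting $\deg(\widetilde f)\le\delta d$ and $\sparse(\widetilde f)\le (snd)^{O(\delta)}$, and invoke the ROABP hitting set of~\cite{DBLP:journals/toc/GurjarKS17}'' is what I would actually write.
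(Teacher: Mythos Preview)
The paper itself does not prove this corollary: it is quoted directly from Forbes~\cite[Corollary~7.17]{forbes2015deterministic} with only the remark ``Building on Lemma~\ref{lem:divisibility}, Forbes reduced the problem to a PIT that can be solved in quasipolynomial time.'' So there is no proof in the paper to compare your reconstruction against.

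That said, your attempted reconstruction has a genuine gap in both of its branches. The sparse-polynomial route fails outright: the affine shift $\z \mapsto \beta\z + \bal$ does \emph{not} preserve sparsity of~$f$ (e.g.\ $f = \prod_i z_i^{d}$ has sparsity~$1$ but $f(\z+\bal)$ has sparsity $(d+1)^n$), and your claim that $g^i$ for $i \le d$ has sparsity ``polynomially bounded in $n^\delta$ and $d$'' is also false --- $g^d$ has degree $\delta d$ and can have up to $\binom{n+\delta d}{\delta d}$ monomials, which is $n^{\Theta(\delta d)}$, not $(snd)^{O(\delta)}$. Hence $\sparse(\widetilde f) \le (snd)^{O(\delta)}$ is simply wrong, and feeding this into Theorem~\ref{thm:sparse-interpolation} cannot give the stated bound.

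The ROABP route is closer in spirit, and your observation that $f(\beta\z+\bal)$ has a width-$s$ any-order ROABP (as a sum of $s$ products of univariates) is correct. But the crux is your width bound for the multiplication by $g(\beta\z+\bal)^i$, which you assert ``multiplies the width by at most $(\delta d+1)^{O(1)}$'' without justification. For \emph{linear}~$g$ this is true --- that is exactly Lemma~\ref{lem:closure-roabp}(v), and the paper uses it in Lemma~\ref{lem:ROABP-divsion} --- but for degree-$\delta$ polynomials $g$ with $\delta \ge 2$ the any-order ROABP width of $g^i$ is not obviously small; the naive product bound gives width $n^{O(\delta)}$ for $g$ and hence $n^{O(\delta i)}$ for $g^i$, which is useless when $i=d$. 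Forbes's actual argument requires a dedicated analysis of the model in which $\widetilde f$ lives (sums of terms of the form product-of-univariates times a power of a degree-$\delta$ polynomial) and a PIT tailored to that model, not a black-box ROABP width bound for $g^i$. Your proposed one-line proof elides precisely this step.
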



\subsection{Effective Hilbert's Irreducibility Theorem}\label{sec:Hilbert}

Factorization algorithms often start with
an effective version of Hilbert's Irreducibility Theorem 
due to Kaltofen and von zur Gathen.
It shows how to project a multivariate irreducible polynomial down to two variables,
such that the projected bivariate polynomial stays irreducible. 
The proof shows the existence of 
an \emph{irreducibility certifying polynomial}~$G(\b,\c)$ in~$2n$ variables corresponding to the irreducible polynomial~$g(x,\z)$. 
The nonzeroness
of~$G$ proves the irreducibility of~$g(x,\z)$ and also gives a way to find an irreducibility-preserving projection to bivariate
(see~\cite{kaltofen1985effective, kaltofen1995effective, kopparty2015equivalence}).  

\begin{theorem}\label{thm:HIT}
Let $g(x,\z)$ be an irreducible polynomial of total degree~$\delta$ with $n+1$ variables 
that is monic in~$x$. 
There exists a nonzero polynomial~$G(\b,\c)$ of degree~$2\delta^5$ in~$2n$ variables such that for $\bbet,\bgam \in \F^n$,
\begin{equation}\label{eq:Hilbert}
    G(\bbet,\bgam) \neq 0 \implies \widehat{g}(x,t) = g(x, \bbet t + \bgam) 
\text{ is irreducible,}
\end{equation}
where $g(x, \bbet t + \bgam) =  g(x, \beta_1 t + \gamma_1,  \dots, \beta_n t + \gamma_n)$. 
\end{theorem}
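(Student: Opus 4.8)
The plan is to reduce the general (only $\F$-irreducible) case to the absolutely irreducible case by Galois descent, to certify ``distinctness'' by a discriminant, and to feed the absolutely irreducible case into an effective Bertini / Noether-irreducibility theorem. Write $m=\deg_x g$; monicity in $x$ forces $m\ge1$, and every specialization $\widehat g(x,t):=g(x,\bbet t+\bgam)$ is again monic in $x$ of $x$-degree $m$. Fix a finite Galois extension $\mathbb{L}/\F$ over which $g$ splits into its absolutely irreducible factors, $g=g_1\cdots g_k$. Since $\mathrm{lc}_x(g)=1$ is a product of polynomials in $\z$, each $\mathrm{lc}_x(g_j)$ is a nonzero constant, so after rescaling every $g_j$ is monic in $x$ (still $g=\prod_j g_j$, $\sum_j\deg g_j=\delta$, and each $\deg_x g_j\ge1$ because a non-unit of $x$-degree $0$ cannot be monic in $x$), and in characteristic $0$ the $g_j$ are pairwise distinct. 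As the product of the $g_j$ over any $\mathrm{Gal}(\mathbb{L}/\F)$-orbit lies in $\F[x,\z]$ and divides the $\F$-irreducible $g$, the Galois action on $\{g_1,\dots,g_k\}$ is transitive. The reduction then rests on: \emph{if $(\bbet,\bgam)$ is such that} (i) \emph{each $g_j(x,\bbet t+\bgam)$ is absolutely irreducible and} (ii) \emph{they are pairwise distinct, then $\widehat g$ is irreducible over $\F$} --- for under (i)--(ii) the $g_j(x,\bbet t+\bgam)$ are exactly the distinct $\mathbb{L}$-irreducible factors of $\widehat g=\prod_j g_j(x,\bbet t+\bgam)$, they form one $\mathrm{Gal}(\mathbb{L}/\F)$-orbit, so any $\F$-rational factor of $\widehat g$ (a product of some of them, fixed by the Galois action) is, by transitivity, trivial or all of $\widehat g$.

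For condition (ii): if $\widehat g$ is squarefree then, being a product of non-constant polynomials with no repeated irreducible factor, the $g_j(x,\bbet t+\bgam)$ are automatically distinct. Now $\widehat g$ is squarefree iff $\Disc_x(\widehat g)=\bigl(\Disc_x g\bigr)\big|_{\z=\bbet t+\bgam}$ is a nonzero polynomial in $t$ (the identity holds because $\widehat g$ is monic in $x$), and $D:=\Disc_x g\in\F[\z]$ is nonzero ($g$ being irreducible, hence squarefree in $x$) of degree $\le 2\delta^2$. The top-$t$ coefficient of $D(\bbet t+\bgam)$ is $\HomC[D](\bbet)$, so $\widehat g$ is squarefree whenever $\HomC[D](\bbet)\ne0$; and $\HomC[D]$, the top homogeneous part of $D$, is a nonzero $\F$-polynomial of degree $\le 2\delta^2$.

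For condition (i) --- the technical core --- I would invoke an effective Bertini statement: each $g_j$ is absolutely irreducible, and a generic line substitution $\z\mapsto\bbet t+\bgam$ preserves this, with a polynomial certificate. Concretely, Ruppert's matrix criterion (or Gao's partial-differential-equation criterion), valid in characteristic $0$, expresses absolute irreducibility of a squarefree bivariate polynomial $h$ of degree $\le\delta$ as the nonvanishing of a fixed minor of a matrix of size $O(\delta^2)$ whose entries are linear in the coefficients of $h$; substituting $h=g_j(x,\bbet t+\bgam)$, whose coefficients are polynomials of degree $\le\delta$ in $(\bbet,\bgam)$, yields a nonzero polynomial $N_j(\b,\c)$ of degree $\poly(\delta)$, over the field of definition of $g_j$ (a subfield of $\mathbb{L}$), with $N_j(\bbet,\bgam)\ne0\Rightarrow g_j(x,\bbet t+\bgam)$ absolutely irreducible. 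Choosing the $N_j$ Galois-compatibly, $\sigma(N_j)=N_{\sigma(j)}$, makes $\prod_{j=1}^k N_j$ Galois-invariant, hence defined over $\F$. Then set
\[
G(\b,\c)\;:=\;\HomC[D](\b)\cdot\prod_{j=1}^{k}N_j(\b,\c).
\]
This is a nonzero polynomial over $\F$ (nonzero since $\HomC[D]\not\equiv0$ and each $N_j\not\equiv0$), of degree $\le 2\delta^2+k\cdot\poly(\delta)$, which the standard accounting of the Bertini/Noether bounds keeps within $2\delta^5$; and $G(\bbet,\bgam)\ne0$ yields (ii) (via squarefreeness) and (i) (via absolute irreducibility of each $g_j(x,\bbet t+\bgam)$), whence $\widehat g=g(x,\bbet t+\bgam)$ is irreducible over $\F$.

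The main obstacle is precisely the effective Bertini step producing the ``irreducibility forms'' $N_j$: everything else --- preservation of monicity, the Galois-orbit descent, and the reduction of distinctness to a discriminant --- is routine bookkeeping with unique factorization and Galois theory, whereas controlling the degree of an explicit certificate for absolute irreducibility of a parametrized bivariate family is the genuine content, and it is the estimates there (roughly, $\poly(\delta)$ per $N_j$ times $k\le\delta$ factors) that pin the final degree bound. One can also bypass the Galois descent and the field-of-definition issue entirely by applying Gao's rank criterion directly to $\widehat g$, whose matrix has $\F$-entries, to certify that $\widehat g$ has no more than $k$ absolutely irreducible factors (it always has at least $k$); that route needs the same effective-Bertini input, now phrased as ``a generic line substitution does not increase the number of absolutely irreducible factors''.
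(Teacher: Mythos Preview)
The paper does not give its own proof of this theorem; it is quoted as a known effective version of Hilbert's Irreducibility Theorem, with references to Kaltofen and to Kopparty--Saraf--Shpilka, so there is no in-paper argument to compare against.

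Your outline is a correct proof strategy and is in the spirit of the cited references. The three ingredients --- Galois descent from $\F$-irreducibility to absolute irreducibility (using transitivity of the Galois action on the absolutely irreducible $\mathbb{L}$-factors $g_j$), the discriminant $\Disc_x g$ to force squarefreeness and hence distinctness of the specialized $g_j(x,\bbet t+\bgam)$, and an effective Noether/Ruppert/Gao irreducibility form to certify that each absolutely irreducible $g_j$ stays absolutely irreducible under a generic affine-line substitution --- are exactly the standard building blocks. Your observation that one fixed minor suffices across the whole Galois orbit (since the orbit is transitive and the minor construction is defined over~$\F$) is the right way to make $\prod_j N_j$ land in $\F[\b,\c]$.

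Two points that would need tightening in a full write-up, though neither breaks the argument. First, the Ruppert/Gao matrix is set up for a fixed bidegree, so one should also preclude degree collapse of $g_j(x,\bbet t+\bgam)$; monicity in~$x$ handles the $x$-degree, and the $t$-degree can be controlled by the same leading-form nonvanishing you already impose for~$D$ (or absorbed into the chosen minor). Second, your degree accounting is only sketched: a minor of size $O(\delta_j^2)$ with entries of degree $\le\delta_j$ in $(\b,\c)$, summed over~$j$ with $\sum_j\delta_j=\delta$, does stay well under $2\delta^5$, but the specific constant $2\delta^5$ in the statement comes from Kaltofen's explicit bookkeeping rather than from anything in your sketch.
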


The certifying polynomial~$G$ immediately yields a randomized algorithm
to construct the irreducible projection~$\widehat{g}$ via PIT.
The derandomization of Hilbert's Irreducibility Theorem is a challenging open problem
in general.
Essentially it means to find a hitting set for~$G$.

We define a corresponding computational problem.
Let $\calC \subseteq \calP(n,d)$ be a class of polynomials and $g(\z) \in \calC$.
Assume we have already computed an~$\bal \in (\F\setminus\{0\})$
as in Lemma~\ref{lem:monic} that the shifted polynomial~$g_{\bal}(x,\z)$ is monic.
Now we want to find a hitting set for~$g_{\bal}$ according to~(\ref{eq:Hilbert}).

\begin{itemize}
\item 
$\Irr(\calC)$:\\ 
Given $\bal \in (\F\setminus\{0\})^n$, 
compute a set $H_{\bal} \subseteq \F^{2n}$ such that for all $g \in \calC$,
where~$g(\bal x + \z)$  is monic in~$x$, we have
\[
g \text{ irreducible }   \implies \exists (\bbet,\bgam)\in H_{\bal}~~~g(\bal x + \bbet t+\bgam) \in \F[x,t] ~\text{ is irreducible}.
\]
\end{itemize}
By $\T_{\Irr(\calC)}$
we denote the time complexity to compute~$\Irr(\calC)$.

\begin{rem}
A subtlety in the definition of $\Irr(\calC)$ is that
it asks to compute a hitting set~$H_{\bal}$ for the \emph{shifted} polynomial~$g_{\bal}$.
If~$\calC$ is closed under variable shifts,
then the hitting set is for polynomials in~$\calC$.
However,
for example~$\calC_{\sparse}$ is not closed under variable shifts.
Hence,
$\Irr(\calC_{\sparse})$ asks to compute a hitting set for
a non-sparse polynomial that comes from a shift of a sparse polynomial.
\end{rem}


\subsection{Isolation}\label{sec:isolation}

Let $M_{\delta}$ be the set of monomials in~$n$ variables 
$\z = (z_1, z_2, \dots, z_n)$
of degree bounded by~$\delta$,
\begin{equation}
M_{\delta} = \set{\z^{\e}}{||\e||_1 \leq \delta}.
\end{equation}
Note that $M_{\delta}$ is polynomially bounded, for constant~$\delta$,
\begin{equation}\label{eq:M_delta-bound}
|M_{\delta}| 
\leq \binom{n+\delta}{\delta} 
\leq (n+\delta)^{\delta} 
\leq (\delta+1)\, \delta^\delta \, n^{\delta} 
= O(n^{\delta}).
\end{equation}

There is a standard way to map the multivariate monomials in~$M_{\delta}$
in a injective way to univariate monomials of polynomial degree.
For completeness,
we describe the details.

Consider the standard \emph{Kronecker substitution} on~$M_{\delta}$.
Define
\begin{equation}
    \varphi: \; z_i \;\mapsto\; y^{(\delta+1)^{i-1}}\, .
\end{equation}
By extending~$\varphi$ linearly to monomials $\z^{\e} \in M_{\delta}$,
we get
\begin{equation}
     \varphi: \; \z^{\e} \;\mapsto\; y^{\sum_{i=1}^n e_i(\delta+1)^{i-1}}\, ,
\end{equation}
Clearly,
$\varphi$ is injective on~$M_{\delta}$.
However,
the degree of~$y$ can be exponentially large, up to~$(\delta + 1)^n$.
A way around is to take the exponents modulo some small prime number~$p$.
We have to determine~$p$ in a way to keep the mapping injective on~$M_{\delta}$.
Hence,
for any two terms $y^e, y^{e'}$ we get from~$\varphi$,
we have to ensure that $e \not\equiv e'\pmod{p}$.
Equivalently $p \not| \, (e-e')$.

We have $|e-e'| \leq (\delta + 1)^n$ and, by~(\ref{eq:M_delta-bound}), 
there are $O(n^{2\delta})$ many pairs~$e,e'$ we get from~$M_{\delta}$ via~$\varphi$.
Prime~$p$ should not divide any of these differences,
and hence, $p$ should not divide the product~$P$
of all these differences.
The product~$P$ is bounded by
\begin{equation}
P 
\leq   \left( (\delta + 1)^n \right)^{O(n^{2\delta})} 
= (\delta + 1)^{O(n^{2\delta +1})}.
\end{equation}
Hence, 
$P$ has at most 
$\log P \leq R = O(n^{2\delta + 1})$ 
many prime factors.
By the Prime Number Theorem,
there are more than~$\log P$ primes in~$[R^2]$.
Hence,
we can find an appropriate prime~$p  \leq R^2 = n^{O(\delta)} $.

\begin{lemma}\label{lem:isolation}
There is a prime~$p= n^{O(\delta)} $ such that the linear extension of
\begin{equation}\label{eq:varphi}
     \varphi_p: \; z_i \;\mapsto\; y^{w_i}\,,~~\text{where}~w_i = (\delta+1)^{i-1}\bmod{p}\,, ~\text{ for } i=1,2,\dots,n,
\end{equation}
to monomials is injective on~$M_{\delta}$.
Moreover,
we can find such a~$p$ in time~$n^{O(\delta)}$
and compute and invert~$\varphi_p$ in time~$n^{O(\delta)}$.
\end{lemma}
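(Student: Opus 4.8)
The plan is to establish Lemma~\ref{lem:isolation} by making precise the prime-selection argument already sketched in the paragraphs preceding the statement, and then verifying the claimed running times. First I would fix $M_\delta$, the set of monomials in $z_1,\dots,z_n$ of degree at most $\delta$, and recall from~(\ref{eq:M_delta-bound}) that $|M_\delta| = O(n^\delta)$. For a prime $p$, define $w_i = (\delta+1)^{i-1} \bmod p$ and extend $\varphi_p$ linearly so that $\z^\e \mapsto y^{\sum_i e_i w_i \bmod p}$; I would actually work with the exponents $\sum_i e_i w_i$ reduced mod $p$, noting that two distinct monomials $\z^\e, \z^{\e'} \in M_\delta$ collide under $\varphi_p$ iff $p \mid \sum_i (e_i - e_i')(\delta+1)^{i-1}$, since $\varphi$ (the unreduced Kronecker substitution) is injective on $M_\delta$. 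Thus injectivity of $\varphi_p$ on $M_\delta$ is equivalent to $p$ not dividing any of the nonzero integers $e - e'$ arising as $\varphi$-images of distinct monomials in $M_\delta$.

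The core step is then a counting/pigeonhole argument, exactly as in the preamble: each such difference satisfies $|e - e'| \le (\delta+1)^n$, there are at most $|M_\delta|^2 \le (\delta+1)^2 n^{2\delta}$ pairs, so their product $P$ satisfies $\log P \le R := (\delta+1)^3 n^{2\delta+1}$. Hence $P$ has at most $R$ distinct prime factors, and by the Prime Number Theorem the interval $[R^2]$ contains more than $R \ge \log P$ primes (for $R$ large; small cases handled directly), so some prime $p \le R^2 = n^{O(\delta)}$ avoids all the bad differences. This $p$ makes $\varphi_p$ injective on $M_\delta$.

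For the algorithmic claims I would argue as follows. To find such a $p$: enumerate the primes in $[R^2]$ (there are $n^{O(\delta)}$ of them, each found and certified in time $\poly(R) = n^{O(\delta)}$ by trial division), and for each candidate $p$ test injectivity of $\varphi_p$ on $M_\delta$ by computing all $|M_\delta| = O(n^\delta)$ values $\sum_i e_i w_i \bmod p$ (each a sum of at most $n$ terms, all numbers bounded by $p \le n^{O(\delta)}$) and checking for collisions via sorting; the first $p$ that passes is returned, and the argument above guarantees one exists. The total time is $n^{O(\delta)}$. Evaluating $\varphi_p$ on a given monomial is computing $\sum_i e_i w_i \bmod p$, in time $n^{O(\delta)}$; inverting $\varphi_p$, i.e.\ recovering $\e$ from the residue $r = \sum_i e_i w_i \bmod p$, is done by precomputing the lookup table of all $(\z^\e, r)$ pairs over $M_\delta$ (size $n^{O(\delta)}$) once and searching it, again in time $n^{O(\delta)}$.

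The main obstacle is not any single hard step but rather keeping the bookkeeping of the bounds honest: one must be careful that the product $P$ is over \emph{distinct} differences (or simply bound crudely by all pairs, which is what I do), that the Prime Number Theorem is applied in a regime where the inequality $\pi(R^2) > \log P$ genuinely holds (handling small $n,\delta$ separately or by choosing the constant in $R^2$ generously), and that ``compute and invert $\varphi_p$ in time $n^{O(\delta)}$'' is interpreted as allowing a one-time $n^{O(\delta)}$ preprocessing of the $M_\delta$ lookup table — which is consistent with how the lemma is used later. None of these is deep; the content is entirely the pigeonhole bound, which is already laid out in the text.
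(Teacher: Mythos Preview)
Your proposal is correct and follows essentially the same approach as the paper: the existence of~$p$ is exactly the pigeonhole/Prime Number Theorem argument spelled out in the preamble, and the algorithmic part (enumerate primes up to $R^2$, test injectivity on $M_\delta$ directly, store the $(\e,k)$ pairs as a lookup table for inversion) matches the paper's proof almost verbatim. One minor point: with $\varphi_p$ as stated in the lemma (no final reduction of $\sum_i e_i w_i$ modulo $p$), the collision condition is only ``if $p \mid \sum_i (e_i-e_i')(\delta+1)^{i-1}$ then collide,'' not ``iff''; but only this direction is needed, so the argument goes through unchanged.
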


\begin{proof}
We already argued about the existence of prime~$p$.
For the running time,
recall that   $|M_{\delta}| = O(n^{\delta})$.
Therefore we can search for~$p$ and check whether it works on~$M_{\delta}$
in time~$n^{O(\delta)}$.
At the same time we can compute pairs of exponents~$(\e,k)$
such that $\varphi_p(\z^{\e}) = y^k$.
These pairs can be used to invert~$\varphi_p$.
\end{proof}

The mapping~$\varphi_p$ in Lemma~\ref{lem:isolation} maintains
factors of degree~$\delta$ of a polynomial in the following sense.

\begin{lemma}\label{lem:varphi-factors}
Let polynomial~$f(\z)$  factor as $f = gh$,
where~$g(\z)$ has degree~$\delta$.
Let~$\varphi_p$ be the map from Lemma~\ref{lem:isolation}.
Then we have $\varphi_p(f) = \varphi_p(g) \varphi_p(h)$,
and~$g$ can be recovered from~$\varphi_p(g)$ in time~$n^{O(\delta)}$.
\end{lemma}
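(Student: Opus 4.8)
The plan is to prove Lemma~\ref{lem:varphi-factors} by combining the ring-homomorphism property of the Kronecker-type substitution~$\varphi_p$ with the injectivity guaranteed by Lemma~\ref{lem:isolation}. First I would observe that $\varphi_p$ is, by construction, a ring homomorphism from $\F[\z]$ to $\F[y]$: it sends $z_i \mapsto y^{w_i}$ and is extended multiplicatively and linearly, so $\varphi_p(gh) = \varphi_p(g)\varphi_p(h)$ holds for \emph{all} polynomials, regardless of degree. This immediately gives the first claim $\varphi_p(f) = \varphi_p(g)\varphi_p(h)$.

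Next I would address recovering $g$ from $\varphi_p(g)$. The key point is that $\deg(g) = \delta$, so every monomial of~$g$ lies in~$M_\delta$, and by Lemma~\ref{lem:isolation} the map $\varphi_p$ is injective on~$M_\delta$. Concretely, write $g = \sum_{\e : ||\e||_1 \le \delta} c_\e\, \z^\e$. Then $\varphi_p(g) = \sum_\e c_\e\, y^{k_\e}$ where $k_\e = \sum_i e_i w_i \bmod p$ — here one should be slightly careful: the exponents $w_i$ are reduced mod~$p$, so $y^{w_i}\cdots$ gives exponents that one may further reduce mod~$p$, but the injectivity statement is precisely that distinct $\e \in M_\delta$ give distinct residues $k_\e$. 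Hence no two monomials of~$g$ collide, and reading off the coefficient of $y^{k_\e}$ in $\varphi_p(g)$ recovers $c_\e$. To turn this into an algorithm, I would use the precomputed table of pairs $(\e, k)$ with $\varphi_p(\z^\e) = y^k$ from the proof of Lemma~\ref{lem:isolation}: for each term $c\, y^k$ appearing in $\varphi_p(g)$, look up the unique $\e \in M_\delta$ with $k_\e = k$ and set the corresponding coefficient of~$g$ to~$c$. Since $|M_\delta| = O(n^\delta)$ and the table has that many entries, this runs in time $n^{O(\delta)}$, matching the claimed bound.

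There is essentially no serious obstacle here; the lemma is a bookkeeping statement. The one subtlety worth stating carefully is that the injectivity of~$\varphi_p$ is only guaranteed on the monomial set~$M_\delta$, not on all monomials — so the recovery argument genuinely relies on the hypothesis $\deg(g) = \delta$ (or at most~$\delta$), which ensures $\Supp(g) \subseteq M_\delta$. I would make that dependence explicit in the write-up. Note also that we do not need any injectivity hypothesis on~$h$ or on~$f$ for the first claim, since that part is just the homomorphism property; injectivity is invoked solely for the recovery of~$g$.

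\begin{proof}
By construction, $\varphi_p$ extends to a ring homomorphism $\F[\z] \to \F[y]$ sending $z_i \mapsto y^{w_i}$. In particular, for any factorization $f = gh$ we have $\varphi_p(f) = \varphi_p(g)\,\varphi_p(h)$, which is the first claim.

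For the recovery, write $g = \sum_{\e} c_\e\, \z^\e$. Since $\deg(g) = \delta$, every $\e$ with $c_\e \neq 0$ satisfies $||\e||_1 \le \delta$, i.e.\ $\z^\e \in M_\delta$. Applying $\varphi_p$ and collecting terms, $\varphi_p(g) = \sum_\e c_\e\, y^{k_\e}$, where $k_\e$ is the exponent assigned to $\z^\e$ by $\varphi_p$. By Lemma~\ref{lem:isolation}, $\varphi_p$ is injective on $M_\delta$, so the map $\e \mapsto k_\e$ is injective on the support of~$g$; hence distinct monomials of~$g$ are sent to distinct powers of~$y$, and the coefficient of $y^{k_\e}$ in $\varphi_p(g)$ equals $c_\e$. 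To recover~$g$ algorithmically, we use the precomputed table of pairs $(\e, k)$ with $\varphi_p(\z^\e) = y^k$ from Lemma~\ref{lem:isolation}: for each term $c\, y^k$ of $\varphi_p(g)$, look up the unique $\e \in M_\delta$ with $k_\e = k$ and record $c$ as the coefficient of $\z^\e$ in~$g$. Since $|M_\delta| = O(n^\delta)$, this table has $n^{O(\delta)}$ entries and the procedure runs in time $n^{O(\delta)}$.
\end{proof}
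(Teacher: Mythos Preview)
Your proof is correct. The paper states this lemma without proof, treating it as an immediate consequence of Lemma~\ref{lem:isolation}; your write-up supplies exactly the natural argument (ring-homomorphism for the first claim, injectivity on~$M_\delta$ plus the precomputed exponent table for the recovery), and there is nothing further to compare.
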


Note that in Lemma~\ref{lem:varphi-factors},
we do \emph{not} claim that irreducibility is maintained:
when~$g$ is irreducible, still~$\varphi_p(g)$ might be reducible. Consider the example $n=\delta=2$. 
The weights $\{1,3\}$ make sure that each monomial in 
$M_2 = \{\, z_1,~ z_2,~ z_1^2,~ z_1z_2,~ z_2^2\,\}$ 
gets mapped to a distinct power in~$y$. 
Let 
$g(\z)= 1-z_1z_2$. 
Observe that~$g$ is irreducible, 
however 
$g(y,y^3) = (1-y^2)(1+y^2)$
is \emph{reducible}.

We combine Lemma~\ref{lem:isolation} and Theorem~\ref{thm:HIT}
to obtain a projection of a multivariate polynomial
to a $3$-variate polynomial that maintains irreducibility
of polynomials up to degree~$\delta$.

\begin{corollary}\label{cor:HIT-g}
Let $g(x,\z)$ be an irreducible polynomial of constant degree~$\delta$ with $n+1$ variables 
that is monic in~$x$. 
There exists  $\w,\w' \in \F^n$ with $w_i,w_i' = n^{\poly(\delta)}$ such that
\begin{equation}\label{eq:Psi}
    \Psi(g) = g(x,y^{w_1} t + y^{w_1'}, \dots, y^{w_n} t + y^{w_n'}) \in \F[x,y,t]
\end{equation}
is irreducible. 
Moreover,
we can compute and invert~$\Psi(g)$ in time~$n^{\poly(\delta)}$.
\end{corollary}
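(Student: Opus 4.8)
The plan is to combine the effective Hilbert Irreducibility Theorem (Theorem~\ref{thm:HIT}) with the Kronecker-substitution trick of Lemma~\ref{lem:isolation}, carried out over the rational function field $\F(y)$. First I would note that $g$ stays irreducible over $\F(y)$: a factorization $g=PQ$ in the UFD $\F[x,\z,y]$ would force $\deg_y P=\deg_y Q=0$, so $P,Q\in\F[x,\z]$ and one of them is a constant, and since moreover all coefficients of $g$ lie in $\F$, Gauss's lemma upgrades this to irreducibility in $\F(y)[x,\z]$. As $\F(y)$ has characteristic~$0$, Theorem~\ref{thm:HIT} applies to $g$ over $\F(y)$ and yields a nonzero certifying polynomial $G(\b,\c)$ of total degree $2\delta^5$ in the $2n$ variables $\b=(b_1,\dots,b_n)$ and $\c=(c_1,\dots,c_n)$; since $G$ is built from $g$ by resultant-type operations that are rational in the coefficients of $g$, it has coefficients in $\F$ and does not involve $y$.

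Next I would invoke Lemma~\ref{lem:isolation} with $2n$ variables and degree bound $D=2\delta^5$. This produces a prime $p=(2n)^{O(D)}=n^{\poly(\delta)}$ and exponents $w_1,\dots,w_n,w_1',\dots,w_n'$ in $\{0,1,\dots,p-1\}$ — hence of size $n^{\poly(\delta)}$ — such that the map $b_i\mapsto y^{w_i}$, $c_i\mapsto y^{w_i'}$ is injective on all monomials of degree $\le D$. Therefore distinct monomials of $G$ go to distinct powers of $y$, no cancellation occurs, and $G(y^{w_1},\dots,y^{w_n},y^{w_1'},\dots,y^{w_n'})$ is a nonzero element of $\F[y]$ — here it is crucial that the coefficients of $G$ sit in $\F$, not merely in $\F(y)$. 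By the certifying property of $G$ over $\F(y)$, the polynomial $\Psi(g)=g(x,y^{w_1}t+y^{w_1'},\dots,y^{w_n}t+y^{w_n'})$ is irreducible in $\F(y)[x,t]$.

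To finish I would descend from $\F(y)$ to $\F[x,y,t]$ using that $g$ is monic in $x$. Since the substitution leaves $x$ untouched, $\Psi(g)$ is monic in $x$ with $\deg_x\Psi(g)=\deg_x g\ge 1$ (an irreducible polynomial that is monic of $x$-degree $0$ equals the unit $1$). If $\Psi(g)=AB$ with $A,B\in\F[x,y,t]$, then in $\F(y)[x,t]$ one factor, say $A$, is a unit, so $A\in\F(y)$, hence $A\in\F[y]$; comparing leading $x$-coefficients gives $A\cdot\mathrm{lc}_x(B)=1$, forcing $A\in\F^{*}$, so the factorization is trivial and $\Psi(g)$ is irreducible in $\F[x,y,t]$. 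For the complexity, finding $p$ and the exponents costs $n^{\poly(\delta)}$ by Lemma~\ref{lem:isolation}, and $\Psi(g)$ is trivariate with $\deg_x,\deg_t\le\delta$ and $\deg_y\le\delta p=n^{\poly(\delta)}$, so it is computed densely by interpolation from $n^{\poly(\delta)}$ evaluations of $g$. To invert $\Psi$, I would observe that it is injective on polynomials in $\z$ of degree $\le\delta$: the coefficient of $t^{\deg h}$ in $\Psi(h)$ equals $h_{\deg h}(y^{w_1},\dots,y^{w_n})$, which is nonzero because $z_i\mapsto y^{w_i}$ $(i\le n)$ is injective on $M_\delta$ (a special case of the injectivity above, as $\delta\le D$); writing $g=\sum_{0\le j\le\delta}x^jg_j(\z)$ and reading off $\Psi(g_j)$ as the coefficient of $x^j$, one recovers the $O(n^{\delta})$ coefficients of each $g_j$ by solving a linear system of size $n^{\poly(\delta)}$.

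I expect the only real subtlety to be the ground-field bookkeeping: the isolation substitution must be applied to the certifying polynomial $G$ over the constant field $\F$ (so that injectivity of the substitution on monomials genuinely prevents cancellation), whereas the irreducibility statement is first established over $\F(y)[x,t]$ and only afterwards pushed down to $\F[x,y,t]$, which is exactly the step that uses monicity of $g$ in $x$. Everything else is a routine repackaging of Theorem~\ref{thm:HIT} and Lemma~\ref{lem:isolation}.
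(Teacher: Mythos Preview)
Your argument is correct but takes a different route from the paper. You work over the function field $\F(y)$: you apply Theorem~\ref{thm:HIT} there to the point $\bigl((y^{w_i}),(y^{w_i'})\bigr)\in\F(y)^{2n}$, deduce irreducibility of $\Psi(g)$ in $\F(y)[x,t]$, and then descend to $\F[x,y,t]$ via Gauss's lemma and monicity in~$x$. The paper instead stays over~$\F$ throughout: having obtained the nonzero univariate polynomial $\widehat G(y)=G(y^{w_1},\dots,y^{w_n},y^{w_1'},\dots,y^{w_n'})$, it argues by contradiction that if $\Psi(g)$ were reducible then so would be its specialization at any $y=\alpha$ with $\widehat G(\alpha)\ne 0$, contradicting Theorem~\ref{thm:HIT} applied over~$\F$ at the point $\bigl((\alpha^{w_i}),(\alpha^{w_i'})\bigr)$. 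Your route is conceptually clean but needs the extra checks that the certifying polynomial~$G$ has coefficients in~$\F$ (not merely $\F(y)$) and that Theorem~\ref{thm:HIT} is valid over $\F(y)$---both true, but relying on the construction of~$G$ rather than on the black-box statement. The paper's specialization trick sidesteps this by never leaving~$\F$, though it too uses monicity implicitly to ensure the specialized factorization at $y=\alpha$ remains nontrivial. For the inversion of~$\Psi$, the paper takes a shortcut: set $t=0$, so that each monomial $x^k\z^{\e}$ of~$g$ collapses to $x^k y^{\sum_i e_i w_i'}$, and one reads off~$\e$ directly from the isolation property on~$\w'$---no linear system needed.
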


\begin{proof}
Let $G(\a,\b)$  be the polynomial of degree~$2\delta^5$ in~$2n$ variables
provided by Theorem~\ref{thm:HIT} for~$g$.
Let $\w,\w' \in \F^n$ with $w_i,w_i' = n^{\poly(\delta)}$ be the exponents we get from Lemma~\ref{lem:isolation} for~$G$.
That is, 
\[\widehat{G}(y) = G(y^{w_1}, \dots, y^{w_n}, y^{w'_1}, \dots, y^{w'_n}) \not= 0\;.\] 

Now, suppose that~$\Psi(g)$ is reducible.
Then it would also be reducible at a point~$y=\alpha$, 
where $\widehat{G}(\alpha) \not= 0$.
But then $\widehat{g} (x,t) = \Psi(g)(x,\alpha,t)$ would be reducible too,
and this would contradict Theorem~\ref{thm:HIT}.
We conclude that~$\Psi(g)$ is irreducible.

For the complexity,
we first determine prime~$p$ from Lemma~\ref{lem:isolation}
and then get the weights~$\w,\w'$ from above.
For a given $g(x,\z) = \sum_{k,\e} c_{k,\e} x^k \z^{\e}$,
we can compute~$\Psi(g)$ in time~$n^{\poly(\delta)}$.
For a monomial of~$g$, 
the mapping looks as follows:
\begin{equation}\label{eq:Psi-monomial}
c_{k,\e}\, x^k\, \z^{\e} ~\mapsto~
c_{k,\e}\, x^k\, \prod_{i=1}^n (y^{w_i} t + y^{w_i'})^{e_i} \,.
\end{equation}

For inversion,
we have given $h \in \F[x,y,t]$ 
which is monic in~$x$ with $x$-degree $\le \delta$.
We either have to compute~$g$ such that $\Psi(g) = h$,
or detect that~$h$ is \emph{not} in the codomain of~$\Psi$.

We set $t=0$,
i.e.\ we consider~$h(x,y,0)$.
From~(\ref{eq:Psi-monomial})
we see that monomials then must have the form
\[
c_{k,\e}\, x^k\, y^{\sum_{i=1}^n e_i w'_i}\,.
\]
From these we get the exponents~$k$ and~$\e$ 
for monomial~$x^k \, \z^{\e}$,
similar as in the proof of Lemma~\ref{lem:isolation}.
In case we thereby get a degree  $>  \delta$,
then~$h$ is not in the codomain of~$\Psi$.
Otherwise, 
we have got a candidate~$g$ of degree~$\delta$.
But we still have to check whether $\Psi(g)=h$, 
because the inversion procedure ignores the variable~$t$.
The running time for inversion is~$n^{\poly(\delta)}$.
\end{proof}

The polynomial~$g$ of degree~$\delta$ we considered so far
can be thought to be a constant-degree factor of a given polynomial~$f$ of degree~$d$.
Our goal would be to compute~$g$.
It is now easy to extend the above results  to hold for all
degree-$\delta$ factors of~$f$ simultaneously.

\begin{corollary}\label{cor:HIT-f}
Let $f(x,\z)$ be a polynomial of degree~$d$ with $n+1$ variables
that is monic in~$x$, and let~$\delta$ be a constant.
There exists  $\w,\w' \in \F^n$ with $w_i,w_i' \leq dn^{\poly(\delta)}$ such that
for any irreducible factor~$g$ of degree~$\delta$ of~$f$,
we have that~$\Psi(g)$ is an irreducible factor of~$\Psi(f)$.
\end{corollary}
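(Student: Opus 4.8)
The plan is to exploit that $\Psi$ is nothing more than the simultaneous substitution $z_i \mapsto y^{w_i}t + y^{w_i'}$, hence a ring homomorphism $\F[x,\z] \to \F[x,y,t]$. Consequently, once $\w,\w'$ are fixed, every factorization $f = g\cdot h$ is carried to $\Psi(f) = \Psi(g)\cdot\Psi(h)$, so the ``$\Psi(g)$ divides $\Psi(f)$'' half of the claim is immediate. All the content is to pick one pair of weight vectors, with entries bounded by $d\,n^{\poly(\delta)}$, that \emph{simultaneously} makes $\Psi(g)$ irreducible for \emph{every} irreducible degree-$\delta$ factor $g$ of $f$ (and, for the later algorithm, lets us invert $\Psi$ on each such $\Psi(g)$).

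Here is how I would proceed. First reduce to the monic case: since $f$ is monic in $x$, in any factorization $f = gh$ the leading $x$-coefficient of $g$ is a nonzero constant, so after rescaling we may assume each degree-$\delta$ irreducible factor is monic in $x$; rescaling changes neither irreducibility (cf.\ Lemma~\ref{lem:shift-irreducible}) nor divisibility up to a unit. Call these factors $g_1,\dots,g_r$; being pairwise coprime of degree $\delta$ they satisfy $r \le d/\delta$. For each $g_j$, Theorem~\ref{thm:HIT} supplies a nonzero certifying polynomial $G_j(\b,\c)$ of degree $2\delta^5$ in the $2n$ variables $(\b,\c)$. Now take the prime $p = n^{\poly(\delta)}$ and the Kronecker weights of Lemma~\ref{lem:isolation} used with degree parameter $2\delta^5$ and $2n$ variables, and define $\Psi$ exactly as in Corollary~\ref{cor:HIT-g} from these weights; then $w_i,w_i' = n^{\poly(\delta)} \le d\,n^{\poly(\delta)}$. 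Finally, rerun the argument of Corollary~\ref{cor:HIT-g} for each $j$. The decisive observation is that this substitution is injective on the \emph{entire} monomial set $M_{2\delta^5}$ in $2n$ variables, hence keeps \emph{every} nonzero polynomial of degree $\le 2\delta^5$ in those variables nonzero --- in particular all of $G_1,\dots,G_r$ at once, with no dependence on $j$. So each univariate restriction $\widehat{G_j}(y)$ is nonzero; picking $\alpha$ with $\widehat{G_j}(\alpha)\neq 0$ and using that $\Psi$ fixes $x$ and $\Psi(g_j)$ is monic in $x$ (so the $x$-degree of no factor can drop when we set $y = \alpha$), a nontrivial factorization of $\Psi(g_j)$ would yield one of $g_j(x,\bbet t + \bgam)$ with $\beta_i = \alpha^{w_i}$, $\gamma_i = \alpha^{w_i'}$, contradicting Theorem~\ref{thm:HIT} at the point $G_j(\bbet,\bgam) = \widehat{G_j}(\alpha)\neq 0$. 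Hence each $\Psi(g_j)$ is irreducible and, by the first step, divides $\Psi(f)$; recovering $g_j$ from $\Psi(g_j)(x,y,0) = g_j(x,y^{w_1'},\dots,y^{w_n'})$ uses only that $z_i \mapsto y^{w_i'}$ is injective on $M_\delta$, which these weights already satisfy, just as in Remark~\ref{rem:inversion}.

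The one genuinely delicate point is \emph{uniformity}: a single substitution must handle the a priori unknown family of all degree-$\delta$ irreducible factors of $f$. This is exactly what the ``injective on all of $M_{2\delta^5}$'' formulation of Lemma~\ref{lem:isolation} gives for free; with only a hitting-set-style guarantee per $G_j$ one would be forced to hit the product $\prod_{j\le r} G_j$ of degree $\Theta(d\delta^5)$, and the weights would blow up to $n^{\Theta(d\delta^5)}$ rather than $d\,n^{\poly(\delta)}$. (In fact $n^{\poly(\delta)}$ already suffices; the factor $d$ only leaves room to, if desired, also separate the $y$-exponents occurring in $\Psi(f)$ itself, whose $y$-degree is at most $d\cdot\max_i\max(w_i,w_i')$.) Everything else --- the monic-in-$x$ normalization, checking that specializing $y$ cannot lower an $x$-degree so that irreducibility genuinely transfers, and that the same weights invert each $\Psi(g_j)$ --- is routine because $\Psi$ never touches $x$ and the weights were chosen injective on a monomial set far larger than $M_\delta$.
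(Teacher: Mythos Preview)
Your proof is correct and follows the same route as the paper: rerun the argument of Corollary~\ref{cor:HIT-g} uniformly over all degree-$\delta$ irreducible factors of~$f$. You in fact sharpen the paper's argument: since Lemma~\ref{lem:isolation} already guarantees injectivity on the \emph{entire} monomial set $M_{2\delta^5}$ in $2n$ variables, the very same prime $p = n^{\poly(\delta)}$ used in Corollary~\ref{cor:HIT-g} keeps every $\widehat{G_j}$ nonzero simultaneously, whereas the paper enlarges $p$ to $d\,n^{\poly(\delta)}$ to achieve this --- your observation shows the extra factor of~$d$ is not actually needed.
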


\begin{proof}
The proof goes along the lines of Corollary~\ref{cor:HIT-g},
but we choose the weights slightly larger so that
the~$\widehat{G}(y)$ polynomials for \emph{all} the degree-$\delta$ factors~$g$ 
of~$f$ are non-zero simultaneously.
That is, we choose prime~$p$ in Lemma~\ref{lem:isolation} as $p = d n^{\poly(\delta)}$.
\end{proof}


\section{Computing the low-degree factors} \label{sec:low-degree}

We show how to compute the factors of constant degree of a given polynomial~$f$.
In Section~\ref{sec:constant-degree-promise} for the case when
\emph{all} factors of~$f$ have constant degree,
and in Section~\ref{sec:constant-degree} for general~$f$.
In both cases,
our algorithm starts by projecting the given polynomial
to a $3$-variate polynomial.
We start with this common part.


\subsection{Projected constant degree factors} 

The following algorithm is an initiating step in both, 
Algorithm~\ref{algo: promise-low-deg factors} and~\ref{algo: low-deg factors}
in Sections~\ref{sec:constant-degree-promise} and~\ref{sec:constant-degree}.
It takes an $n$-variate  polynomial~$f(\z)$ of degree~$d$.
The final goal is to compute the factors of~$f$ of degree~$\delta$,
for some given constant~$\delta$.
The initial steps are to project~$f$ to a trivariate monic polynomial 
and then factorize the projection.

In more detail,
the first step is to make~$f(\z)$ monic in a new variable~$x$ via Lemma~\ref{lem:monic}.
That is,
we compute~$\bal \in (\F \setminus \{0\})^n $ such that
the transformed polynomial $f_{\bal}(x,\z) = f(\bal x + \z)$ is monic. 

Then we apply Corollary~\ref{cor:HIT-f} to~$f_{\bal}(x,\z)$.
That is, 
we compute the weights $\w,\w' \in \F^n$ bounded by~$dn^{\poly(\delta)}$ 
and explicitly compute~$\Psi(f_{\bal}) \in \F[x,y,t]$ 
of degree at most $\widetilde{d} = d^2\, n^{\poly(\delta)}$.
Note that the $x$-degree of $f_{\bal}$ has not changed by mapping~$\Psi$.

The next step is to factor $3$-variate~$\Psi(f_{\bal})$
via Lemma~\ref{lemma:trivariate-factoring}.
Algorithm~1, {\sc Projected-Factoring}, below summarizes the steps.


\begin{algorithm}[htb]
  \caption{{\sc Projected-Factoring}}
  \label{algo: projected-low-deg factors}
  \SetKwInOut{Input}{Input}\SetKwInOut{Output}{Output}

  \Input{$f(\z) \in \calC_{\mu}(s,n,d)$ and a constant $\delta$.}
  \Output{All projected $3$-variate factors of $f$ of degree $\delta$ with their multiplicities.}
  \BlankLine

    Find $\bal$ such that $f_{\bal}(x,\z) = f(\bal x + \z~)$ is monic
    ~~~~~~{\tcc{by Lemma~\ref{lem:monic}}}

   Find $\w,\w' \in \F^n$ 
   as in Corollary~\ref{cor:HIT-f} and compute $\Psi(f_{\bal}) \in \F[x,y]$ in dense representation
      
   Factorize $\Psi(f_{\bal}) = h_1^{e_1} h_2^{e_2} \cdots h_m^{e_m}$ 
   ~~~~~~{\tcc{by Lemma~\ref{lemma:trivariate-factoring}, in dense representation}}

   Define $S_{\prf}= \set{h_i}{\deg_{x}(h_i) \le \delta}$ 
   and $S_{\prfm}=\set{(h_i,e_i)}{h_i \in S_{\prf}}$
   
\Return{$S_{\prf},S_{\prfm}$}
\end{algorithm}


For the running time of {\sc Projected-Factoring}
we have $nd \, \T_{\PIT(\HomC[\calC])}$ for finding~$\bal$ in step~1
by Lemma~\ref{lem:monic}.
Steps~2 and~3 take time $\poly(d\, n^{\poly(\delta)})$.
In summary,
the time complexity of {\sc Projected-Factoring} is
\begin{equation}\label{eq:time-ProjFac}
    nd \, \T_{\PIT(\HomC[\calC])} + \poly(d\, n^{\poly(\delta)}) \,.
\end{equation}


\subsection{Factors of a constant degree product} \label{sec:constant-degree-promise}

Given a polynomial that is the product of constant degree polynomials.
We show that all the factors can be computed in polynomial time.

\begin{theorem}\label{thm:constant-degree-promise}
For a constant~$\delta$,
let $\calD_{\delta} = \calP(n,\delta)$ and
$\calC_d \subseteq \calP(n,d)$ be the class of polynomials
that are a product of polynomials from~$\calD_{\delta}$,
i.e.\ $\calC_d = \prod \calD_{\delta}$.
Then $\Fac{\calC_d}{\calD_{\delta}}$ can be solved in time
$\T_{\Fac{\calC_d}{\calD_{\delta}}} =  \poly(dn^{\poly(\delta)})$.
\end{theorem}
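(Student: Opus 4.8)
The plan is to run {\sc Projected-Factoring} on the input polynomial $f \in \calC_d$ with the given constant $\delta$, and then argue that in this promise setting the returned candidates are \emph{exactly} the projections of the genuine degree-$\le\delta$ irreducible factors of $f$, so no divisibility verification is needed. First I would invoke Lemma~\ref{lem:monic} to compute $\bal \in (\F\setminus\{0\})^n$ making $f_{\bal}(x,\z)$ monic in~$x$; since $f = \prod_j p_j$ with each $p_j \in \calD_\delta$, we have $\HomC[f] = \prod_j \HomC[p_j]$, which is itself a product of bounded-degree pieces, and $\PIT(\HomC[\calC_d])$ reduces to evaluating at a single nonzero point found via the trivial hitting set for constant-degree polynomials (Lemma~\ref{lem:PIT-construct} together with the fact that $\HomC[f_{\bal}](x,\z)$ has degree $d$ but we only need one nonzero evaluation of $\HomC[f]$, which is polynomial-time by Theorem~\ref{thm:sparse-interpolation} applied to the sparse polynomial $\HomC[f]$ of degree $d$). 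Actually, more simply: $\PIT$ for $\HomC[\calC_d]$ is trivial here because we can compute $\bal$ so that $\HomC[f](\bal)\ne 0$ in time $\poly(d)$ using the explicit construction, as the relevant polynomial is a product of $\le d$ many degree-$\le\delta$ polynomials and thus admits a small hitting set. So step~1 costs $\poly(dn^{\poly(\delta)})$.

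Next I would apply Corollary~\ref{cor:HIT-f} to $f_{\bal}$: choose the weights $\w,\w'$ bounded by $dn^{\poly(\delta)}$ and explicitly form $\Psi(f_{\bal}) \in \F[x,y,t]$ of degree $\widetilde d = d^2 n^{\poly(\delta)}$, then factor it by Lemma~\ref{lemma:trivariate-factoring} in time $\poly(\widetilde d) = \poly(dn^{\poly(\delta)})$. Write the factorization $\Psi(f_{\bal}) = h_1^{e_1}\cdots h_m^{e_m}$ and collect $S_{\prfm} = \{(h_i,e_i) : \deg_x(h_i) \le \delta\}$. The key structural claim is: because every irreducible factor $g$ of $f_{\bal}$ has degree $\le \delta$ (this uses the promise: every irreducible factor of $f$ divides some $p_j \in \calD_\delta$, hence has degree $\le\delta$, and by Lemma~\ref{lem:shift-irreducible} the same holds after the shift $\tau_{\bal}$), Corollary~\ref{cor:HIT-f} guarantees $\Psi(g)$ is an \emph{irreducible} factor of $\Psi(f_{\bal})$, and moreover $\Psi$ is multiplicative and injective on degree-$\le\delta$ polynomials, so it sends the full factorization $f_{\bal} = \prod g_j^{a_j}$ to $\Psi(f_{\bal}) = \prod \Psi(g_j)^{a_j}$. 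By unique factorization in $\F[x,y,t]$, the $h_i$ with $\deg_x(h_i)\le\delta$ are (up to scalars, and in fact exactly, since each $\Psi(g_j)$ is monic in $x$) precisely the $\Psi(g_j)$, with matching multiplicities $e_i = a_j$. Here I must be slightly careful that \emph{every} $h_i$ in $S_{\prf}$ really is of the form $\Psi(g_j)$: an irreducible factor $h_i$ of $\Psi(f_{\bal})$ with $\deg_x(h_i)\le\delta$ is a factor of $\prod\Psi(g_j)^{a_j}$, so $h_i \mid \Psi(g_j)$ for some $j$; since $\Psi(g_j)$ is irreducible, $h_i = \Psi(g_j)$ up to a unit, which is a field element, and monicity in $x$ pins the unit to $1$.

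Finally I would invert $\Psi$ on each $h_i \in S_{\prf}$ via Corollary~\ref{cor:HIT-g} / Remark~\ref{rem:inversion} to recover $g_j \in \F[x,\z]$ with $\Psi(g_j) = h_i$, in time $n^{\poly(\delta)}$ per factor; then undo the shift $\tau_{\bal}$ (replace $x$ by a fresh handling of the transformation $z_i \mapsto \alpha_i x + z_i$, i.e.\ recover the original-variable factor by substituting back), which preserves irreducibility by Lemma~\ref{lem:shift-irreducible} and preserves degree. The multiplicity $e_i$ attached to $h_i$ transfers directly as the multiplicity of $g_j$ in $f$, since $\Psi$ and $\tau_{\bal}$ are both multiplicative bijections on the relevant factorizations; alternatively one could recompute multiplicities via Lemma~\ref{lem:factor-multiplicity-reduction}, but that is unnecessary here. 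Summing up, the total running time is the $\poly(dn^{\poly(\delta)})$ of {\sc Projected-Factoring} (cf.~\eqref{eq:time-ProjFac}, with $\T_{\PIT(\HomC[\calC_d])} = \poly(dn^{\poly(\delta)})$) plus $m \cdot n^{\poly(\delta)} = \poly(dn^{\poly(\delta)})$ for the inversions, giving $\T_{\Fac{\calC_d}{\calD_\delta}} = \poly(dn^{\poly(\delta)})$ as claimed.

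The main obstacle I anticipate is not any single hard step but rather carefully justifying that the divisibility test in step~5 of the general framework can be \emph{skipped} in this promise setting: one must verify that every degree-$\le\delta$ irreducible candidate $h_i$ produced by trivariate factoring genuinely lifts back to a true factor of $f$, which rests on the combination of (a) the promise forcing all irreducible factors of $f_{\bal}$ to have degree $\le\delta$, (b) Corollary~\ref{cor:HIT-f} giving $\Psi$-images that are irreducible \emph{and} that multiply correctly, and (c) unique factorization in $\F[x,y,t]$ to match the $h_i$ with the $\Psi(g_j)$ and their multiplicities. A secondary subtlety is confirming that $\PIT$ for $\HomC[\calC_d]$ is genuinely cheap — this follows since $\HomC[f]$ is a product of $\le d$ polynomials each of degree $\le\delta$, hence has degree $\le d$ and admits the trivial constant-variate-style argument after we note we only need a single nonzero evaluation, obtainable in $\poly(dn^{\poly(\delta)})$ time.
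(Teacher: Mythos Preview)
Your proposal is correct and follows essentially the same approach as the paper: run {\sc Projected-Factoring}, then invert $\Psi$ and $\tau_{\bal}$ on each projected factor, relying on the promise to ensure that the trivariate factorization pattern matches that of $f$ so no divisibility check is needed. If anything, you supply more justification than the paper does for the key structural claim that the $h_i$ are exactly the $\Psi(g_j)$ with matching multiplicities (via Corollary~\ref{cor:HIT-f}, multiplicativity of $\Psi$, and unique factorization in $\F[x,y,t]$); the paper simply asserts that ``$f$ and $\Psi(\tau_{\bal}(f))$ have the same factorization pattern.'' Your PIT discussion for $\HomC[\calC_d]$ is a bit meandering, but the final claim---that a product of at most $d$ degree-$\le\delta$ polynomials admits a hitting set of size $\poly(d\,n^{\poly(\delta)})$---is correct (e.g.\ via the Kronecker-style map of Lemma~\ref{lem:isolation}, which keeps each degree-$\le\delta$ factor nonzero and reduces to a univariate of degree $d\,n^{O(\delta)}$); the paper's justification here is equally terse.
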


\begin{proof}
We start by invoking {\sc Projected-Factoring} for~$f$ to get~$S'$,
the set of factors of the transformed trivariate 
polynomial~$\Psi(\tau_{\bal}(f))$  with their multiplicities.
Then we first invert the transformation~$\Psi$ on the factors
using Corollary~\ref{cor:HIT-g} and its remark.
Then we invert~$\tau_{\bal}$.
Since~$f$ and~$\Psi(\tau_{\bal}(f))$ have the same factorization pattern,
we finally get the factors of~$f$.
We summarize the steps in Algorithm~\ref{algo: promise-low-deg factors}.


\begin{algorithm}[htb]
  \caption{{\sc Constant-Degree-Factorization}}
  \label{algo: promise-low-deg factors}
  \SetKwInOut{Input}{Input}\SetKwInOut{Output}{Output}

  \Input{$f(\z) \in \calC_{\mu}(s,n,d)$ and a constant $\delta$
  such that \\ \emph{all} irreducible factors of $f$ have degree $\le \delta$.}
  \Output{All irreducible factors of $f$ with their multiplicities.}
  \BlankLine

  $L = \emptyset$~~~~~~  {\tcc{Initialize output list}}

   $(S_{\prf},S_{\prfm}) = $ {\sc Projected-Factoring}$(f,\delta)$ 
   
    \For{$(\widetilde{g},e) \in S_{\prfm}$}{
    \label{alg:promise:outer-for-loop}
     {\tcc{Computing the irreducible factors via inversion}}

    Compute $\widehat{g} = \Psi^{-1}(\widetilde{g})$, by Corollary~\ref{cor:HIT-g}
     
    Compute $g =\tau_{\bal}^{-1}(\widehat{g})$, and add $(g,e)$ to $L$
  }
  \Return{$L$}
\end{algorithm}


For the time complexity of {\sc Constant-Degree-Factorization},
recall that constant degree polynomials are sparse,
with sparsity $s = O(n^{\delta})$ by~(\ref{eq:M_delta-bound}).
Hence,
we get $\T_{\PIT(\HomC[\calC])} = \poly(dn^{\delta})$ by Theorem~\ref{thm:sparse-interpolation}.
The {\bf for}-loop with the inverse mappings
takes time~$\poly(dn^{\poly(\delta)})$.
\end{proof}

Note that Theorem~\ref{thm:constant-degree-promise} solves a \emph{promise case}.
That is,
we assume that~$f$ is a product of constant degree polynomials,
but we do not verify this assumption.
So in case that the assumption does not hold,
one can anyway run the algorithm from  Theorem~\ref{thm:constant-degree-promise},
but then it might output polynomials of degree~$\delta$ that are not factors of~$f$.
In Section~\ref{sec:constant-degree},
we show how to compute the constant degree factors of an arbitrary polynomial.


\subsection{Constant degree factors} \label{sec:constant-degree}

In Section~\ref{sec:constant-degree-promise},
we computed the constant degree factors of a polynomial~$f$
under the assumption that \emph{all} factors of~$f$ are of constant degree.
Now we skip the assumption and let~$f$ be a polynomial from some class
$\calC_{\mu} = \calC_{\mu}(s,n,d) \subseteq \calP(n,d)$.
We still want to compute the constant degree factors of~$f$ from
$\calD_{\delta} = \calP(n,\delta)$, for some constant~$\delta$.
We show that the problem can be reduced in polynomial time
to a PIT for~$\HomC[\calC_{\mu}]$ and divisibility tests~$\calC_{\mu}$ by~$\calD_{\delta}$.

\begin{theorem}\label{thm:constant-degree}
Let $\calC_{\mu}  = \calC_{\mu}(s,n,d)$  and 
$\calD_{\delta} = \calP(n,\delta)$, for a constant~$\delta$.
Then
$\Fac{\calC_{\mu}}{\calD_{\delta}}$
can be solved in time 
\[
\T_{\Fac{\calC_{\mu}}{\calD_{\delta}}} = 
nd \, \T_{\PIT(\HomC[\calC_{\mu}])} + d^2 n^{\poly(\delta)} \, \T_{\Div{\partial\calC_{\mu}}{\calD_{\delta}}} + \poly(s,n^{\poly(\delta)},d).
\]
\end{theorem}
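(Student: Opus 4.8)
The plan is to combine the projection machinery already set up with a divisibility test that filters the genuine degree-$\delta$ factors out of the candidates produced by trivariate factoring, and then to recover multiplicities using the derivative trick of Lemma~\ref{lem:factor-multiplicity-reduction}. First I would run {\sc Projected-Factoring}$(f,\delta)$ to obtain the set $S_{\prf}$ of irreducible factors $h_i$ of $\Psi(f_{\bal})$ with $\deg_x(h_i)\le\delta$, together with their multiplicities $e_i$ in $\Psi(f_{\bal})$. By Corollary~\ref{cor:HIT-f}, every irreducible degree-$\delta$ factor $g$ of $f_{\bal}$ satisfies that $\Psi(g)$ is an irreducible factor of $\Psi(f_{\bal})$, so $\Psi(g)$ appears among the $h_i$; conversely a given $h_i$ need \emph{not} come from an honest degree-$\delta$ factor of $f$, since $\Psi$ is not irreducibility-reflecting and $\Psi^{-1}$ may fail or return something spurious. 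So for each $(h_i,e_i)\in S_{\prfm}$ I would compute the candidate $\widehat{g}=\Psi^{-1}(h_i)$ via Corollary~\ref{cor:HIT-g} and Remark~\ref{rem:inversion} (discarding $h_i$ if inversion fails or yields degree $>\delta$), then $g=\tau_{\bal}^{-1}(\widehat{g})$, and finally \emph{verify} that $g$ is a genuine factor of $f$ by testing $g\divides f$ — this is exactly a call to $\Div{\calC_{\mu}}{\calD_{\delta}}$.

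Once a candidate $g$ survives the divisibility test it is a genuine factor; by Lemma~\ref{lem:shift-irreducible} shifting does not destroy irreducibility, so $g$ is in fact an irreducible degree-$\delta$ factor of $f$ precisely when $\widehat{g}$ was irreducible, which it is since $h_i=\Psi(\widehat{g})$ is irreducible and $\Psi$ cannot make a reducible polynomial irreducible. The multiplicity of $g$ in $f$ is then computed directly by Lemma~\ref{lem:factor-multiplicity-reduction}: pick a variable $z$ with $\partial_z g\ne 0$ (such $z$ exists since $g$ is non-constant), and find the least $e\ge 0$ with $g\nmid \partial^e f/\partial z^e$; each such test is a divisibility query with dividend in $\partial\calC_{\mu}$ and divisor in $\calD_{\delta}$, hence a call to $\Div{\partial\calC_{\mu}}{\calD_{\delta}}$, and $e$ ranges only up to $d$. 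Note one can use the $e_i$ from the trivariate factoring as a sanity check, but the clean route is to recompute multiplicities via derivatives, since the multiplicity of $\Psi(g)$ in $\Psi(f_{\bal})$ equals that of $g$ in $f$ only after one argues $\Psi$ and $\tau_{\bal}$ preserve the factorization pattern on the relevant factors — using Lemma~\ref{lem:factor-multiplicity-reduction} sidesteps that bookkeeping.

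For the running time I would tally: step~1 of {\sc Projected-Factoring} costs $nd\,\T_{\PIT(\HomC[\calC_{\mu}])}$ by Lemma~\ref{lem:monic}; steps~2--3 cost $\poly(d\,n^{\poly(\delta)})$ since $\Psi(f_{\bal})$ is a trivariate polynomial of degree $\widetilde d = d^2 n^{\poly(\delta)}$ and trivariate factoring (Lemma~\ref{lemma:trivariate-factoring}) is polynomial in the degree. The number of candidates $|S_{\prf}|$ is at most $\widetilde d = d^2 n^{\poly(\delta)}$; for each, inversion of $\Psi$ and $\tau_{\bal}$ costs $n^{\poly(\delta)}$, one divisibility test $g\divides f$ costs $\T_{\Div{\calC_{\mu}}{\calD_{\delta}}} \le \T_{\Div{\partial\calC_{\mu}}{\calD_{\delta}}}$ (as $\calC_\mu \subseteq \partial\calC_\mu$), and the multiplicity computation costs at most $d$ further calls to $\Div{\partial\calC_{\mu}}{\calD_{\delta}}$. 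Multiplying, the divisibility contribution is $O(d^2 n^{\poly(\delta)} \cdot d)\cdot \T_{\Div{\partial\calC_{\mu}}{\calD_{\delta}}}$, which absorbing the extra $d$ into the $n^{\poly(\delta)}$ bookkeeping (or simply keeping it explicit) gives the claimed $d^2 n^{\poly(\delta)}\,\T_{\Div{\partial\calC_{\mu}}{\calD_{\delta}}}$ term, with the remaining arithmetic — computing derivatives, assembling outputs, inverting maps — folded into the additive $\poly(s,n^{\poly(\delta)},d)$.

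The main obstacle, or at least the point requiring the most care, is the soundness of the filtering step: I must argue that the divisibility test $g\divides f$ together with irreducibility of $\widehat g$ is both necessary and sufficient for $g$ to be an irreducible degree-$\delta$ factor of $f$, and that \emph{every} such factor does arise from some $h_i$ (completeness from Corollary~\ref{cor:HIT-f}). The subtlety is that $\Psi^{-1}$ as described in Remark~\ref{rem:inversion} only uses the $t=0$ slice and then re-checks $\Psi(g)=h$, so I need that re-check to be in place so that no spurious $g$ of degree $>\delta$ or with wrong coefficients leaks through; combined with the explicit $g\divides f$ test, no false positive survives, and combined with Corollary~\ref{cor:HIT-f} plus Lemma~\ref{lem:shift-irreducible}, no true factor is missed. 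Everything else is routine complexity bookkeeping.
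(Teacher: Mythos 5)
Your approach is essentially identical to the paper's: run {\sc Projected-Factoring}, invert $\Psi$ and $\tau_{\bal}$ on each small-$x$-degree trivariate factor, filter candidates with a $g \divides f$ test, and compute multiplicities via Lemma~\ref{lem:factor-multiplicity-reduction}. You even spell out the irreducibility-preservation chain ($h_i$ irreducible $\Rightarrow$ $\widehat g$ irreducible because $\Psi$ sends products to products $\Rightarrow$ $g$ irreducible by Lemma~\ref{lem:shift-irreducible}) more explicitly than the paper does, which is a nice touch.

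One small inaccuracy in your complexity accounting: you multiply all $d^2 n^{\poly(\delta)}$ candidates by the $d$ multiplicity tests, obtaining $d^3 n^{\poly(\delta)}$, and then wave the extra $d$ away by ``absorbing'' it into $n^{\poly(\delta)}$ — but $n^{\poly(\delta)}$ has no $d$ in it, so that absorption doesn't actually work. The clean fix (which is what the paper's Algorithm~\ref{algo: low-deg factors} does) is to separate the loops: test $g \divides f$ for all $d^2 n^{\poly(\delta)}$ candidates, and then run the multiplicity loop only over the surviving genuine factors. Since their degrees multiply to at most $d$, there are at most $d$ of them, so the multiplicity loop costs only $d\cdot d = d^2$ additional divisibility queries, giving $d^2 n^{\poly(\delta)} + d^2 = O(d^2 n^{\poly(\delta)})$ total, matching the claimed bound.
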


\begin{proof}
Let~$f(\z) \in \calC$. 
To compute the factors of degree~$\delta$ of~$f$,
we start again by invoking {\sc Projected-Factoring} for~$f$ to get~$S$,
the set of factors of the transformed trivariate 
polynomial~$\Psi(\tau_{\bal}(f))$.
As in the proof of Theorem~\ref{thm:constant-degree-promise},
we compute the inverses of the transformations,
$g = \tau_{\bal}^{-1}(\Psi^{-1}(\widetilde{g}))$.

However~$\widetilde{g}$ might also \emph{not} 
correspond to a degree-$\delta$ factor of~$f$.
In this case,
either the inverse transformation does not go through properly,
or the degree we get is larger than~$\delta$.
In these cases, we can immediately throw away~$\widetilde{g}$; 
see the remark after Corollary~\ref{cor:HIT-g}.
But it could also be that we actually obtain a polynomial~$g$
of degree~$\delta$, 
just that it is not a factor of~$f$. 
For that reason, we finally do a divisibility check 
whether~$g \divides f$.
That way we will compute all factors of~$f$ of degree~$\delta$.
The multiplicities of the factors we compute via Lemma~\ref{lem:factor-multiplicity-reduction}.
We summarize the steps in Algorithm~\ref{algo: low-deg factors}.


\begin{algorithm}[htb]
  \caption{{\sc Constant-Degree-Factors}}
  \label{algo: low-deg factors}
  \SetKwInOut{Input}{Input}\SetKwInOut{Output}{Output}

  \Input{$f(\z) \in \calC_{\mu}(s,n,d)$ and a constant $\delta$.}
  \Output{All irreducible factors of $f$ of degree $\leq \delta$ with their multiplicities.}
  \BlankLine

  $L = \emptyset$, $L' = \emptyset$ ~~~~~{\tcc{Initialize output and intermediate candidates lists}}
  
  $(S_{\prf},S_{\prfm}) = $  {\sc Projected-Factoring}$(f,\delta)$
   
    \For{$\widetilde{g} \in S_{\prf}$}{
    \label{alg:non-promise:outer-for-loop}
     {\tcc{Computing candidate factors via divisibility}}

    Compute $\widehat{g} = \Psi^{-1}(\widetilde{g})$ (if the inverse exists) 
    of degree $ \leq \delta$ by Corollary~\ref{cor:HIT-g}
     
    Compute $g =\tau_{\bal}^{-1}(\widehat{g})$

    If $g \divides f$ then add $g$ to $L'$
  }
  \For{$g \in L'$} 
  {\tcc{Computing multiplicities via Lemma~\ref{lem:factor-multiplicity-reduction}}
    
    Let $z$ be a variable that $g$ depends on

    Find the smallest $e \geq 1$ such that 
    $g \not\divides\, \frac{\partial^e f}{\partial z^e}$ 
    and add $(g,e)$ to list $L$
    
  }
  \Return{$L$}
\end{algorithm}


For the time complexity of the factoring algorithm,
we have $nd \, \T_{\PIT(\HomC[\calC])}$ for transforming~$f$ 
to monic~$f_{\bal}$ by Lemma~\ref{lem:monic}.
Time $\poly(d\, n^{\poly(\delta)})$ is used for map~$\Psi$ and the factoring
of~$\Psi(f_{\bal})$.
Similar time is taken to invert and get the candidate factors. 
Finally,
we have at most $d^2 n^{\poly(\delta)}$ candidate polynomials~$g$ for which we test
divisibility of $g \divides f$ in Line~$6$.
For the multiplicities,
we first find a variable~$z$ that~$g$ depends on in Line~$8$.
Then we have at most~$d$ divisibility tests whether
$g \divides\, \frac{\partial^e f}{\partial z^e}$ in Line~$9$.
\end{proof}

As a consequence, 
we get a quasi-polynomial-time algorithm to compute the constant-degree factors of a sparse polynomial.
With a different proof,
this was already shown by Kumar, Ramanathan, and Saptharishi~\cite{kumar2024deterministic}.

\begin{corollary}[Constant-degree factors of sparse \cite{kumar2024deterministic}]
\label{cor:sparse-constant}
Let $\calC_{\sparse}  = \calC_{\sparse}(s,n,d)$  and 
$\calD_{\delta} = \calP(n,\delta)$, for a constant~$\delta$.
Then
$\Fac{\calC_{\sparse}}{\calD_{\delta}}$
can be solved in time 
\[
\T_{ \Fac{\calC_{\sparse}}{\calD_{\delta}}} = (snd)^{\poly(\delta) \log s}.
\]
\end{corollary}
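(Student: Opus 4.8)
The plan is to instantiate Theorem~\ref{thm:constant-degree} with the class $\calC_\mu = \calC_\sparse(s,n,d)$ of sparse polynomials and then bound each of the three terms in the running time
\[
\T_{\Fac{\calC_\sparse}{\calD_\delta}} = nd\,\T_{\PIT(\HomC[\calC_\sparse])} + d^2 n^{\poly(\delta)}\,\T_{\Div{\partial\calC_\sparse}{\calD_\delta}} + \poly(s,n^{\poly(\delta)},d).
\]
First I would observe that the class $\HomC[\calC_\sparse]$ — the top-degree homogeneous parts of $s$-sparse polynomials — is again contained in $\calC_\sparse(s,n,d)$, since taking a homogeneous component only deletes monomials. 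Hence by Theorem~\ref{thm:sparse-interpolation} we have $\T_{\PIT(\HomC[\calC_\sparse])} = \poly(snd)$, so the first term contributes only $\poly(snd)$.

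Next I would handle the divisibility term. The subtlety is that Theorem~\ref{thm:constant-degree} asks for divisibility tests of $\partial\calC_\sparse$ — partial derivatives of sparse polynomials — by constant-degree polynomials, not of $\calC_\sparse$ itself. But a partial derivative $\partial^e f/\partial z^e$ of an $s$-sparse degree-$d$ polynomial is still $s$-sparse (each monomial maps to at most one monomial, with the coefficient scaled by a falling factorial, which is nonzero in characteristic $0$), and its degree is at most $d$. So $\partial\calC_\sparse(s,n,d) \subseteq \calC_\sparse(s,n,d)$ and I may apply Corollary~\ref{cor:sparse-by-constant-degree} with the sparsity bound $s$ and degree bound $\delta$: this gives $\T_{\Div{\partial\calC_\sparse}{\calD_\delta}} = (snd)^{O(\delta\log s)}$. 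Multiplying by the $d^2 n^{\poly(\delta)}$ prefactor keeps this at $(snd)^{O(\delta\log s)}$, which absorbs the polynomial-overhead third term as well.

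Putting the pieces together, all three terms are bounded by $(snd)^{\poly(\delta)\log s}$ (the $n^{\poly(\delta)}$ factors and the $\poly(\delta)$ in the exponent of Corollary~\ref{cor:sparse-by-constant-degree} are comfortably within this bound for constant $\delta$), which is the claimed running time $(snd)^{\poly(\delta)\log s}$ — quasipolynomial since $\delta$ is constant. I do not anticipate a serious obstacle here: the only thing requiring a moment's care is the closure observation that $\partial\calC_\sparse$ and $\HomC[\calC_\sparse]$ stay inside $\calC_\sparse$ with the same sparsity and degree bounds, which is why the characteristic-$0$ (or large characteristic) hypothesis is being used to keep derivative coefficients nonzero; everything else is a direct substitution into the master theorem together with the already-cited sparse-PIT and sparse-by-constant-degree divisibility results.
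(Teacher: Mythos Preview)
Your proposal is correct and follows essentially the same argument as the paper: instantiate Theorem~\ref{thm:constant-degree}, observe that homogeneous components and partial derivatives of sparse polynomials remain sparse (so Theorem~\ref{thm:sparse-interpolation} and Corollary~\ref{cor:sparse-by-constant-degree} apply directly), and read off the bound. Your write-up is in fact a bit more careful than the paper's, spelling out the closure of $\HomC[\calC_\sparse]$ and $\partial\calC_\sparse$ under the sparsity bound and noting the role of characteristic~$0$.
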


\begin{proof}
Homogeneous components as well as the derivatives of a sparse polynomial remain sparse. 
Hence, we have
$\T_{\PIT(\HomC[\calC_{\sparse}])} = \poly(snd)$ by Theorem~\ref{thm:sparse-interpolation},
and
$\T_{\Div{\partial\calC_{\sparse}}{\calD_{\delta}}} = (snd)^{O(\delta \log s)}$
by Corollary~\ref{cor:sparse-by-constant-degree}.
Hence, 
we get the desired complexity by Theorem~\ref{thm:constant-degree}.
\end{proof}


Kumar, Ramanathan, and Saptharishi~\cite{kumar2024deterministic}
generalized sparse polynomials to polynomials computed by constant-depth circuits
and showed that the constant-degree factors can be computed in subexponential time.
We can now also derive this result via Theorem~\ref{thm:constant-degree}.

Recall that 
\[
\calC_{\cd{t}} (s,n,d) = \set{p \in \calP(n,d)}{p \text{ has a circuit of size $s$ and depth } t}.
\]
We state some properties of $\calC_{\cd{t}}$.
\begin{enumerate}\romanitems
    \item 
    $\calC_{\sparse}(s,n,d) \subseteq \calC_{\cd{2}}(s+1,n,d)$.
    \item
     $\calC_{\cd{t}}$ is closed for homogeneous components:
     For a polynomial $f \in \calC_{\cd{t}} (s,n,d)$,
     all homogeneous components of~$f$ are in $\calC_{\cd{(t+1)}} (sd,n,d)$
     (see \cite[Lemma 2.3]{oliveira2016factors}).
    \item 
    $\calC_{\cd{t}}$ is closed under derivatives:
For a polynomial $f \in \calC_{\cd{t}} (s,n,d)$,
a variable~$z$ and~$e \geq 1$,
we have
$\frac{\partial^e f}{\partial z^e} \in \calC_{\cd{(t+1)}} (d^2\,s,n,d)$ 
(see \cite[Lemma 2.5]{oliveira2016factors}).
\end{enumerate}

We apply Theorem~\ref{thm:constant-degree} to compute
constant-degree factors of constant-depth circuits. 

\begin{corollary}[Constant-degree factors of constant-depth \cite{kumar2024deterministic}]\label{cor:constant-depth-factoring}
Let $\calC_{\cd{t}} = \calC_{\cd{t}}(s,n,d)$ and
$\calD_{\delta} = \calP(n,\delta)$, for a constant~$\delta$.
Then, for any $\varepsilon >0$,
$\Fac{\calC_{\cd{t}}}{\calD_{\delta}}$ can be solved in time 
\[
\T_{ \Fac{\calC_{\cd{t}}}{\calD_{\delta}}} =  
\left(n\, (sd)^{O(t)}  \right)^{O((sd)^{\epsilon})} \,.
\]
\end{corollary}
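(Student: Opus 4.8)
The plan is to instantiate Theorem~\ref{thm:constant-degree} with $\calC_{\mu} = \calC_{\cd{t}}(s,n,d)$ and $\calD_{\delta} = \calP(n,\delta)$, and then bound each of the three terms in the running time expression
\[
nd \, \textup{T}_{\PIT(\HomC[\calC_{\cd{t}}])} + d^2 n^{\poly(\delta)} \, \T_{\Div{\partial\calC_{\cd{t}}}{\calD_{\delta}}} + \poly(s,n^{\poly(\delta)},d)
\]
using the closure properties (ii) and (iii) of $\calC_{\cd{t}}$ stated above together with the subexponential PIT of Theorem~\ref{thm:PIT-constant-depth} and the divisibility bound of Corollary~\ref{cor:constant-depth-by-sparse}. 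First I would handle the PIT term: by property~(ii), every homogeneous component of $f \in \calC_{\cd{t}}(s,n,d)$ lies in $\calC_{\cd{(t+1)}}(sd,n,d)$, so $\HomC[\calC_{\cd{t}}(s,n,d)] \subseteq \calC_{\cd{(t+1)}}(sd,n,d)$, and Theorem~\ref{thm:PIT-constant-depth} gives $\textup{T}_{\PIT(\HomC[\calC_{\cd{t}}])} = \left(n\,(sd)^{O(t)}\right)^{O((sd)^{\varepsilon})}$; the prefactor $nd$ is absorbed.

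Next I would handle the divisibility term. By property~(iii), every polynomial in $\partial\calC_{\cd{t}}(s,n,d)$ lies in $\calC_{\cd{(t+1)}}(d^2 s,n,d)$, so a divisibility test of such a polynomial by an element of $\calD_{\delta} = \calP(n,\delta)$ is a divisibility test of a constant-depth circuit by a polynomial of constant degree. Since a constant-degree polynomial is sparse with sparsity $O(n^{\delta})$ by~(\ref{eq:M_delta-bound}), this is covered by Corollary~\ref{cor:constant-depth-by-sparse}, giving $\T_{\Div{\partial\calC_{\cd{t}}}{\calD_{\delta}}} = \left(n\,(sd)^{O(t)}\right)^{O((sd)^{\varepsilon})}$ (the sparsity parameter $n^{\delta}$ only changes the base polynomially and $\delta$ is constant, so it does not affect the stated form). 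Multiplying by $d^2 n^{\poly(\delta)}$ again only changes the polynomial prefactor. The final $\poly$ term is dominated. Collecting the three contributions yields the claimed bound $\left(n\,(sd)^{O(t)}\right)^{O((sd)^{\varepsilon})}$.

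The main point requiring care is the bookkeeping of the depth and size blow-ups: applying Theorem~\ref{thm:constant-degree} moves from $\calC_{\cd{t}}$ to $\HomC[\calC_{\cd{t}}]$ and $\partial\calC_{\cd{t}}$, which raises the depth to $t+1$ and the size to $\poly(sd)$, and then Corollary~\ref{cor:constant-depth-by-sparse} internally adds a further constant to the depth when it forms $\widetilde{h}$ and $\widetilde{f}$; one must check that all these additive depth increments stay within the $o(\log\log\log n)$ regime required by Theorem~\ref{thm:PIT-constant-depth}, which is immediate since $t$ is treated as a constant (or at least $o(\log\log\log n)$) and we only add $O(1)$ to it. The other subtlety is that the $(sd)^{O(t)}$ factor inside the exponent's base is harmless because it appears polynomially in the base and the exponent $(sd)^{\varepsilon}$ can absorb the change of $\varepsilon$ by a constant; I would simply re-parametrize $\varepsilon$ at the start. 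No genuine obstacle is expected here — this corollary is a direct plug-in of the earlier machinery, exactly parallel to the proof of Corollary~\ref{cor:sparse-constant} but with the constant-depth PIT and divisibility results in place of the sparse ones.
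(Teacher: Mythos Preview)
Your proposal is correct and matches the paper's own proof essentially step for step: instantiate Theorem~\ref{thm:constant-degree}, use closure property~(ii) with Theorem~\ref{thm:PIT-constant-depth} for the PIT term, and use property~(iii) together with the sparsity of constant-degree polynomials and Corollary~\ref{cor:constant-depth-by-sparse} for the divisibility term. The paper's proof actually cites Theorem~\ref{thm:sparse-factors} in its first line, but this is evidently a typo for Theorem~\ref{thm:constant-degree}; your reference is the intended one.
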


\begin{proof}
Let $f \in \calC_{\cd{t}}$.
We consider the running times from Theorem~\ref{thm:constant-degree} to factor~$f$.
We argue that
\[
\T_{\PIT(\HomC[\calC_{\cd{t}}])}  
= \T_{\Div{\partial\calC_{\cd{t}}}{\calD_{\delta}}} 
= \left(n\, (sd)^{O(t)} \right)^{O((sd)^{\epsilon})}\,.
\]
For the homogeneous components of~$f$,
this follows from property~(ii) and Theorem~\ref{thm:PIT-constant-depth}.

For the divisibility test,
this follows from property~(i) and~(iii) above,
and Corollary~\ref{cor:constant-depth-by-sparse}.
\end{proof}


\subsection{Linear factors of $\ROABP$s} \label{sec:constant-degree-ROABP}

We can apply Theorem~\ref{thm:constant-degree} in the case of  
\emph{read-once oblivious arithmetic branching programs}, $\ROABP$s,
and compute \emph{linear} factors.
$\ROABP$s are arithmetic branching programs (ABPs) 
where there is an order on the variables
and on every path of the ABP,
every variable is evaluated in this order and only once.
For~$n$ variables $\z = (z_1,z_2, \dots, z_n)$, 
we describe the order by a permutation $\sigma \in S_n$
on the indices, 
$(z_{\sigma(1)}, z_{\sigma(2)}, \dots,z_{\sigma(n)}) $.
As size measure for $\ROABP$s, usually the width~$w$ is taken.
Its size as a graph is then bounded by~$nw$,
where~$n$ is the number of variables.
Width and size can greatly vary depending on the variable order.
Let
\[
\calC_{\ROABP_{\sigma}}(w,n,d) = \set{p \in \calP(n,d)}{p \text{ has an $\ROABP$ with order $\sigma$ of width } w}.
\]
When we skip the variable order~$\sigma$,
it means that the width can be achieved by \emph{some} order,
\[
\calC_{\ROABP(w,n,d)} = \bigcup_{\sigma \in S_n} \calC_{\ROABP_{\sigma}}(w,n,d)\,.
\]

$\ROABP$s generalize sparse polynomials:
$\calC_{\sparse}(s,n,d) \subseteq  \calC_{\ROABP_{\sigma}}(s,n,d)$, for any $\sigma \in S_n$.

For our arguments,
we need some folklore properties of $\ROABP$s. 

\begin{lemma}[Properties of $\ROABP$]\label{lem:closure-roabp}
Let $f \in \calC_{\ROABP_{\sigma}}(w_1,n,d)$ and $g \in \calC_{\ROABP_{\sigma}}(w_2,n,d)$.
Then
\begin{enumerate}\romanitems
    \item $f + g \in \calC_{\ROABP_{\sigma}}(w_1+w_2,n,d)$,
    \item $fg \in \calC_{\ROABP_{\sigma}}(w_1w_2,n,2d)$,
    \item $\Hom{d}{f} \in  \calC_{\ROABP_{\sigma}}((d+1)\,w_1,n,d)$,
    \item $\partial_{z^e}(f) \in \calC_{\ROABP_{\sigma}}(w_1,n,d-e)$,
    for a variable $z$ and $e \leq d$,
    \item 
    $(a_0 + a_1 z_1 + a_2 z_2 + \cdots + a_n z_n)^d \in  \calC_{\ROABP_{\sigma}}(nd+d+1,n,d)$,
    for $a_0,a_1, a_2, \dots, a_n \in \F$ 
    and any $\sigma \in S_n$ {\rm \cite{saxena2008diagonal}}.
\end{enumerate}
\end{lemma}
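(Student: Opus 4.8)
The plan is to verify each of the five closure properties by describing, for each, how to construct an ROABP for the target polynomial from the ROABP(s) of the inputs. Throughout I fix an arbitrary variable order, since we are in the any-order model; the constructions below work for every order, so the width bounds hold for all orders simultaneously. For (i), given width-$w_1$ and width-$w_2$ ROABPs for $f$ and $g$ in the same variable order, I would take the standard ``parallel'' construction: the layer-$i$ transition matrix of the combined ROABP is the block-diagonal matrix $\mathrm{diag}(M_i^{(f)}, M_i^{(g)})$, and the source/sink vectors are concatenated so that the two computations run side by side and are summed at the sink. This gives width $w_1+w_2$ and degree $\max(d,d)=d$. For (ii), I would use the ``tensor/Kronecker'' construction: the layer-$i$ transition matrix becomes $M_i^{(f)} \otimes M_i^{(g)}$, which is a matrix of univariate polynomials in $z_i$ of degree at most $2d_i$ where $d_i$ is the per-variable degree bound, and the overall polynomial computed is $f \cdot g$; the width is $w_1 w_2$ and the total degree is at most $2d$.

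For (iii), extracting the homogeneous degree-$d$ component, I would introduce a fresh homogenizing variable $u$ (conceptually), replacing each $z_i$ by $u z_i$, so that the degree-$k$ component of $f$ is tagged by $u^k$; then I read off the coefficient of $u^d$. Concretely, in the ROABP one replaces each transition-matrix entry $M_i[j,k](z_i) = \sum_{\ell} c_\ell z_i^\ell$ by $\sum_\ell c_\ell z_i^\ell u^\ell$ and tracks the $u$-degree as an extra coordinate in the state space; since the $u$-degree never exceeds $d$, this multiplies the width by at most $d+1$, and at the end we project onto the states whose $u$-degree equals $d$. This yields width $(d+1)w_1$ and degree $d$. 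For (iv), the derivative $\partial_{z^e}(f)$: since the variable $z$ is read in exactly one layer of the ROABP, say layer $j$, differentiating $f$ $e$ times with respect to $z$ amounts to differentiating only the univariate entries of $M_j$; this does not change the widths of any layer, and in fact a naive bound of $w_1$ for the width already suffices, though the statement allows the looser $d w_1$. (One small subtlety: in the any-order model one must argue this for the layer where $z$ sits in the given order, which is fine.) The degree drops to at most $d-e$.

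For (v), writing $L = a_0 + a_1 z_1 + \cdots + a_n z_n$, I would build an explicit width-$(d+1)$ ROABP for $L^d$ directly: the state space at each layer is indexed by $\{0,1,\dots,d\}$ (``how many factors of $L$ have been completed so far''), the layer-$i$ transition from state $p$ to state $q$ carries the univariate polynomial $\binom{q-p}{?}$—more precisely, we want the coefficient accounting, so the layer-$i$ transition matrix is upper triangular with entry from state $p$ to state $q$ equal to $\binom{d-p}{q-p}$-type weights times $(a_i z_i)^{q-p}$; accumulating these across all $n$ layers and picking up $a_0$ at the source reconstructs $\sum_{e_0+\cdots+e_n=d}\binom{d}{e_0,\dots,e_n} a_0^{e_0}(a_1 z_1)^{e_1}\cdots(a_n z_n)^{e_n} = L^d$ by the multinomial theorem. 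The width is $d+1$ and the degree is $d$.

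The main obstacle I anticipate is bookkeeping rather than conceptual difficulty: being careful that all constructions respect the \emph{any-order} requirement (so that the stated width bounds are order-independent), and in (iii) making precise the claim that tracking the homogenization degree costs only a factor $d+1$ in width rather than something larger. Properties (i), (ii), (iv) are the textbook closure facts for ROABPs and should only require citing or restating the block-diagonal / tensor / per-layer-differentiation constructions; (v) is a short self-contained multinomial computation. I expect no step to require more than a paragraph once the constructions are named.
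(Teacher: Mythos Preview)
The paper does not supply a proof of this lemma at all; it is introduced with the sentence ``we need some folklore properties of $\ROABP$s'' and left unproved. Your proposal is therefore not comparable against a paper proof, but it is correct in substance: the block-diagonal construction for (i), the Kronecker/tensor construction for (ii), degree-tracking for (iii), per-layer differentiation for (iv), and the multinomial-based width-$(d{+}1)$ program for (v) are exactly the standard arguments one would give. One cosmetic point on (v): your transition weights are written a bit imprecisely (the ``$?$'' and the $\binom{d-p}{q-p}$ guess); the clean version maintains the vector $(1,S_i,S_i^2,\dots,S_i^d)$ with $S_i=a_1z_1+\cdots+a_iz_i$, uses layer-$i$ transitions $p\to q$ with weight $\binom{q}{p}(a_iz_i)^{q-p}$, and closes with the sink combination $\sum_k \binom{d}{k}a_0^{d-k}S_n^k$. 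Your observation in (iv) that width $w_1$ already suffices (so the stated $dw_1$ is slack) is also correct.
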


There are efficient PITs for $\ROABP$s in the literature.

\begin{lemma}\label{lem:PIT-ROABP}
Let $\calC_{\ROABP} = \calC_{\ROABP}(w,n,d)$.
Then $\PIT_{\calC_{\ROABP}}$ can be solved in 
\begin{itemize}
    \item 
polynomial time $\poly(ndw)$, in the whitebox setting~{\rm \cite{raz2005deterministic}}, 
\item 
quasi-polynomial time~$(ndw)^{O(\log \log w)}$,
in the blackbox setting~{\rm \cite{DBLP:journals/toc/GurjarKS17,guo2020improved}}. 
\end{itemize}
\end{lemma}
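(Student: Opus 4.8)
The plan is to recall the two algorithms from the literature and check that they meet the stated bounds; no new idea is needed, so the ``proof'' is really an assembly of \cite{raz2005deterministic} and \cite{DBLP:journals/toc/GurjarKS17,guo2020improved}.

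\textbf{Whitebox.} In the whitebox setting we are handed an any-order $\ROABP$ of width~$w$ for~$f$, in particular an $\ROABP$ in some known order, say $z_1 < z_2 < \cdots < z_n$, given as a product $A_1(z_1)\, A_2(z_2) \cdots A_n(z_n)$ of matrices whose entries are univariate polynomials of degree~$\le d$, with $A_1$ a row vector and $A_n$ a column vector. First I would run the Raz--Shpilka elimination procedure~\cite{raz2005deterministic}: process the layers from left to right, maintaining after the $i$-th layer a spanning set, over~$\F$, of the coefficient vectors of the partial product $A_1 \cdots A_i$ viewed as a vector of polynomials in $z_1, \dots, z_i$. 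Each such span has dimension at most~$w$, so by Gaussian elimination over~$\F$ the bookkeeping stays of size $\poly(ndw)$ at every step. At the end $f \equiv 0$ iff the resulting $1 \times 1$ ``product'' is the zero polynomial, which is read off directly. The total running time is $\poly(ndw)$.

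\textbf{Blackbox.} Here I would invoke the hitting-set generator of Gurjar, Korwar and Saxena~\cite{DBLP:journals/toc/GurjarKS17}, with the refinement of Guo and Gurjar~\cite{guo2020improved}. The construction combines two ingredients: (a)~a \emph{low-support rank concentration} statement --- after a suitable cheap shift of the variables, the $\F$-span of the coefficient vectors of an $\ROABP$ of width~$w$ is already spanned by the monomials of support $O(\log w)$; and (b)~a recursive variable-block merging scheme that, over $O(\log n)$ levels of doubling block sizes, assembles a hitting set from the low-support guarantee. The analysis yields a hitting set of size $(ndw)^{O(\log\log w)}$ for the whole class $\calC_{\ROABP}(w,n,d)$. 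Evaluating~$f$ at all points of this set and checking for a nonzero value then decides $\PIT(\calC_{\ROABP})$ in time $(ndw)^{O(\log\log w)}$.

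\textbf{The point to be careful about} is the any-order promise: neither the generator nor its analysis may fix a variable order, since the hitting set must work simultaneously for all of them. This is exactly the (commutative) setting in which \cite{DBLP:journals/toc/GurjarKS17} prove their result --- the rank-concentration shift and the block-merging recursion are both order-oblivious --- so nothing extra is required; everything else in the argument is routine bookkeeping of the degree, width, and number of variables through the recursion.
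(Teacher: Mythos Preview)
The paper does not prove this lemma at all; it is stated purely as a citation of known results, with no accompanying proof environment. Your proposal therefore already goes beyond the paper by sketching the contents of \cite{raz2005deterministic} and \cite{DBLP:journals/toc/GurjarKS17,guo2020improved}. At the level of detail you give, the sketch is accurate: the Raz--Shpilka layer-by-layer span maintenance is the right whitebox picture, and rank concentration via basis isolation plus the block-merging recursion is the right blackbox picture. Your closing remark about the any-order (commutative) promise is also well placed, since the stated $(ndw)^{O(\log\log w)}$ bound is specifically the commutative-$\ROABP$ bound; nothing more is needed here.
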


For factorization of a polynomial~$f$ computed by an $\ROABP$,
we apply Theorem~\ref{thm:constant-degree}.
The time for PIT of homogeneous components of~$f$
follows from Lemma~\ref{lem:closure-roabp}~(iii) and Lemma~\ref{lem:PIT-ROABP}.

For divisions,
we consider \emph{linear} polynomials.
This is the case $\delta = 1$ in the above terminology.
That is, we consider class~$\calD_1$.
We use Lemma~\ref{lem:divisibility} to reduce the divisibility test
to a PIT instance of a polynomial-size $\ROABP$,
which can be solved by Lemma~\ref{lem:PIT-ROABP}.
By Lemma~\ref{lem:closure-roabp}~(iv),
this holds similarly for the partial derivatives.

\begin{lemma}[$\ROABP$ by linear division]\label{lem:ROABP-divsion}
Let $\calC_{\ROABP} = \calC_{\ROABP}(w,n,d)$  and
$\calD_1 = \calP(n,1)$.
Then $\Div{\calC_{\ROABP}}{\calD_1}$ can be solved in time 
\begin{itemize}
    \item $\poly(ndw)$, in the whitebox setting,
    \item $(ndw)^{O(\log \log dw)}$, in the blackbox setting.
\end{itemize}
\end{lemma}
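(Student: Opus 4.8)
The plan is to instantiate Forbes' reduction from Lemma~\ref{lem:divisibility} and then verify that every polynomial occurring in it has a small any-order $\ROABP$, so that the whole divisibility test collapses to a single $\PIT$ instance handled by Lemma~\ref{lem:PIT-ROABP}. Write $f \in \calC_{\ROABP}(w,n,d)$ for the given branching program and $g \in \calD_1$ for the candidate linear divisor. The degenerate case $g \equiv 0$ is trivial ($g \divides f$ iff $f \equiv 0$), so assume $g$ is a nonzero linear polynomial; a point $\bal$ with $g(\bal) \ne 0$ is found in $\poly(n)$ time. Feeding $\bal$ to Lemma~\ref{lem:divisibility} produces, in time $\poly(d)$, the scalars $c_{\beta,i}$ and the polynomial $\widetilde h$ of~(\ref{eq:h-tilde}) with the guarantee $g \divides f \iff f(\z+\bal) = g(\z+\bal)\,\widetilde h(\z)$.

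Next I would track the $\ROABP$ width through~(\ref{eq:h-tilde}). Each substitution $z_i \mapsto \beta z_i + \alpha_i$ acts variable by variable, relabelling edges of an $\ROABP$ by their composition with a univariate affine map; this preserves the width for every variable order, hence $f(\beta\z+\bal) \in \calC_{\ROABP}(w,n,d)$ and $g(\beta\z+\bal)$ is a linear form $\ell_\beta$. By Lemma~\ref{lem:closure-roabp}~(v), $\ell_\beta^i \in \calC_{\ROABP}(d+1,n,d)$ for each $i \le d$, and combining the $d+1$ powers with the scalars $c_{\beta,i}$ via Lemma~\ref{lem:closure-roabp}~(i) shows $\sum_{i\le d} c_{\beta,i}\ell_\beta^{\,i} \in \calC_{\ROABP}((d+1)^2,n,d)$. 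Multiplying by $f(\beta\z+\bal)$ and invoking Lemma~\ref{lem:closure-roabp}~(ii) gives each of the $|S| = 2d^2+1$ summands of~(\ref{eq:h-tilde}) width $w(d+1)^2$ and degree $\le 2d$; one more application of Lemma~\ref{lem:closure-roabp}~(i) yields $\widetilde h \in \calC_{\ROABP}(\poly(dw),n,2d)$.

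Finally, set $F := f(\z+\bal) - g(\z+\bal)\,\widetilde h(\z)$. Since $g(\z+\bal)$ is linear, Lemma~\ref{lem:closure-roabp}~(ii) bounds the width of $g(\z+\bal)\,\widetilde h(\z)$ by $2\poly(dw)$, and then Lemma~\ref{lem:closure-roabp}~(i) gives $F \in \calC_{\ROABP}(\poly(dw),n,2d+1)$. By Lemma~\ref{lem:divisibility}, $g \divides f \iff F \equiv 0$, and Lemma~\ref{lem:PIT-ROABP} decides $F \equiv 0$ in time $\poly(ndw)$ in the whitebox setting and $(ndw)^{O(\log\log dw)}$ in the blackbox setting, as claimed. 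The only delicate point is the bookkeeping of widths and degrees across the affine substitutions and products --- in particular confirming that a variable-wise affine change does not destroy the \emph{any-order} property; everything else is a mechanical use of the closure rules of Lemma~\ref{lem:closure-roabp}.
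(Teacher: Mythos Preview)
Your proposal is correct and follows essentially the same route as the paper's proof: instantiate Lemma~\ref{lem:divisibility}, bound the any-order $\ROABP$ width of $\widetilde h$ and then of $F$ via the closure properties in Lemma~\ref{lem:closure-roabp}, and finish with Lemma~\ref{lem:PIT-ROABP}. The paper obtains the explicit bounds $O(wd^4)$ for $\widetilde h$ and $O(w^2d^4)$ for $F$, but your $\poly(dw)$ is equally sufficient, and your remark that a variable-wise affine substitution preserves the any-order width (since it only reweights edge labels by univariate maps) is a point the paper leaves implicit.
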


\begin{proof}
Let $f(\z) \in \calC_{\ROABP_{\sigma}}(w,n,d)$ and~$\ell(\z)$ be linear.
We apply Lemma~\ref{lem:divisibility}.
Let $S= [2d^2+1]$ and find an $\bal \in \F^n$ such that  $\ell(\bal) \ne 0$.
Such an~$\bal$ can be found in time~$O(n)$.
Let 
$c_{\beta,i} \in \F$ be constants such that
$\ell \divides f \iff \widehat{f} = f(\z+\bal) - \ell(\z+\bal) \, \tilde{h} = 0$, 
where
\begin{equation}\label{eq:h-tilde-ROABP}
\widetilde{h} =\sum_{\beta \in S} f(\beta \z + \bal) 
\sum_{0 \le i \le d}\,c_{\beta,i}\ell(\beta \z+\bal)^i \,.
\end{equation}
We consider the width of an $\ROABP$ that computes~$\widehat{f}$.
The terms~$\ell(\beta \z+\bal)^i$ in~(\ref{eq:h-tilde-ROABP}) 
have $\ROABP$s of width~$ni+i+1 \leq nd+d+1$ by Lemma~\ref{lem:closure-roabp}~(v) in any variable order,
and hence, in particular in the order~$\sigma$ of~$f$.
Applying Lemma~\ref{lem:closure-roabp}~(i) and~(ii),
we get that~$\widetilde{h}$ has an $\ROABP$ in order~$\sigma$ of width~$O(w n d^4)$.
Hence, for~$\widehat{f}$,
we get an $\ROABP$ in order~$\sigma$ of width~$O(w^2 n^2 d^4)$.
Now the claim follows from Lemma~\ref{lem:PIT-ROABP}.
\end{proof}

We plug the time bounds from Lemma~\ref{lem:PIT-ROABP} and~\ref{lem:ROABP-divsion}
in Theorem~\ref{thm:constant-degree}.

\begin{corollary}[Linear factors of $\ROABP$s]\label{cor:roabp-linfactor}
Let $\calC_{\ROABP} = \calC_{\ROABP}(w,n,d)$  and
$\calD_1 = \calP(n,1)$.
Then
$\Fac{\calC_{\ROABP}}{\calD_1}$ can be solved in time

\begin{itemize}
    \item 
    $\poly(ndw)$ in the whitebox setting,
    \item 
    $(n dw)^{O(\log \log dw)}$ in the blackbox setting. 
\end{itemize}
\end{corollary}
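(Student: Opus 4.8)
The plan is to simply combine Theorem~\ref{thm:constant-degree} with the closure properties of $\ROABP$s from Lemma~\ref{lem:closure-roabp} and the PIT and divisibility bounds from Lemmas~\ref{lem:PIT-ROABP} and~\ref{lem:ROABP-divsion}, exactly as the preceding paragraph advertises. Concretely, apply Theorem~\ref{thm:constant-degree} with $\calC_\mu = \calC_{\ROABP}(w,n,d)$ and $\delta = 1$, so that $\calD_\delta = \calD_1 = \calP(n,1)$. The running time then has three terms: $nd\,\T_{\PIT(\HomC[\calC_{\ROABP}])}$, $d^2 n^{\poly(1)}\,\T_{\Div{\partial\calC_{\ROABP}}{\calD_1}}$, and a $\poly(s,n,d)$ additive term (here $s = nw$ is the natural size measure, so this term is $\poly(ndw)$).

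For the first term, I would use Lemma~\ref{lem:closure-roabp}~(iii): every homogeneous component of an $f \in \calC_{\ROABP}(w,n,d)$ lies in $\calC_{\ROABP}((d+1)w, n, d)$, so $\HomC[\calC_{\ROABP}(w,n,d)] \subseteq \calC_{\ROABP}((d+1)w,n,d)$. Feeding this into Lemma~\ref{lem:PIT-ROABP} gives $\T_{\PIT(\HomC[\calC_{\ROABP}])} = \poly(ndw)$ in the whitebox setting and $(ndw)^{O(\log\log dw)}$ in the blackbox setting (the $\log\log$ is of the width $(d+1)w$, which is $O(\log\log dw)$). For the second term, I would first note that $\partial\calC_{\ROABP}(w,n,d) \subseteq \calC_{\ROABP}(dw,n,d)$ by Lemma~\ref{lem:closure-roabp}~(iv) (taking $e \ge 0$; $e = 0$ gives $f$ itself), and then apply Lemma~\ref{lem:ROABP-divsion} to the class $\calC_{\ROABP}(dw,n,d)$ to get $\T_{\Div{\partial\calC_{\ROABP}}{\calD_1}} = \poly(ndw)$ whitebox and $(ndw)^{O(\log\log dw)}$ blackbox.

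Summing the three contributions and absorbing polynomial factors, both the PIT and divisibility terms dominate and we obtain $\T_{\Fac{\calC_{\ROABP}}{\calD_1}} = \poly(ndw)$ in the whitebox setting and $(ndw)^{O(\log\log dw)}$ in the blackbox setting, which is the claimed bound. I expect no real obstacle here — this corollary is a pure bookkeeping exercise that instantiates the general reduction with known closure and algorithmic facts about $\ROABP$s. The one point requiring a little care is making sure the closure properties are applied with the correct parameter blow-ups (a factor of $d$ or $d+1$ in the width from homogenization and differentiation, and the $w^2 d^4$ blow-up inside Lemma~\ref{lem:ROABP-divsion}), but since all of these are polynomial in $d$ and $w$, they do not affect the final asymptotic form; in particular the blackbox exponent stays $O(\log\log dw)$ because $\log\log(\poly(dw)) = O(\log\log dw)$.
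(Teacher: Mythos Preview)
Your proposal is correct and matches the paper's approach exactly: the paper simply states that one plugs the time bounds from Lemma~\ref{lem:PIT-ROABP} and Lemma~\ref{lem:ROABP-divsion} into Theorem~\ref{thm:constant-degree}, using the closure properties Lemma~\ref{lem:closure-roabp}~(iii) and~(iv) for homogeneous components and partial derivatives respectively. Your bookkeeping of the width blow-ups and the observation that $\log\log(\poly(dw)) = O(\log\log dw)$ are the only details needed, and you have them right.
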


When the input $f$ is $s$-sparse, this result is already known due to Volkovich~\cite[Theorem~4]{volkovich2015deterministically}.


\section{Computing the sparse factors} 
\label{sec:sparse-factors}

Recall the class of sparse polynomials,
\begin{equation}
\calC_{\sparse}(s,n,d) = \set{p \in \calP(n,d)}{ \sparse(p) \leq s }.
\end{equation}
We want to compute the sparse factors of a given polynomial.
For a polynomial~$f$ from a class~$\calC_{\mu}$,
we show that the sparse factors of~$f$ can be computed efficiently
relative to
a PIT for~$\HomC[\calC_{\mu}]$,
a divisibility test of~$f$ by sparse candidate factors,
and an irreducibility preserving projection of sparse polynomials,
$\Irr(\calC_{\sparse})$ (see Section~\ref{sec:Hilbert}).

\begin{lemma}\label{lem:irred-test}
Given $f \in \calC_{\sparse}$,
we can test whether~$f$  is irreducible
in time $ \poly(snd) + \T_{\Irr(\calC_{\sparse})}$.
\end{lemma}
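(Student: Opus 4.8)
The plan is to reduce the irreducibility test of a sparse polynomial~$f$ to a call to $\Irr(\calC_{\sparse})$ together with trivariate factoring (Lemma~\ref{lemma:trivariate-factoring}) and the sparse interpolation machinery (Theorem~\ref{thm:sparse-interpolation}). First I would handle the trivial content issue: compute $\gcd$ of the monomials of~$f$ (i.e.\ pull out the largest monomial dividing~$f$), which can be read off from the sparse representation; if that monomial is nontrivial then~$f$ is reducible, so from now on assume~$f$ has no monomial factor. Next, apply Lemma~\ref{lem:monic} to compute $\bal \in (\F\setminus\{0\})^n$ so that $f_{\bal}(x,\z) = f(\bal x + \z)$ is monic in a new variable~$x$; since~$f$ is sparse, $\HomC[f]$ is sparse and $\T_{\PIT(\HomC[\calC_{\sparse}])} = \poly(snd)$, so this costs $\poly(snd)$. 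By Lemma~\ref{lem:shift-irreducible}, $f$ is irreducible if and only if $f_{\bal}$ is.

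Now I would invoke $\Irr(\calC_{\sparse})$ on input~$\bal$ to obtain a set $H_{\bal} \subseteq \F^{2n}$ of size $\T_{\Irr(\calC_{\sparse})}$ with the guarantee: if $f_{\bal}$ (equivalently~$f$) is irreducible, then some $(\bbet,\bgam) \in H_{\bal}$ makes the bivariate projection $\widehat{f}(x,t) = f_{\bal}(x,\bbet t + \bgam) = f(\bal x + \bbet t + \bgam)$ irreducible. The algorithm then loops over all $(\bbet,\bgam) \in H_{\bal}$: for each one, form the bivariate polynomial~$\widehat{f}(x,t)$ (its dense representation has size $\poly(d)$ and is computable in $\poly(snd)$ by evaluating the sparse~$f$), and factor it in time $\poly(d)$ by Lemma~\ref{lemma:trivariate-factoring}. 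If for \emph{some} $(\bbet,\bgam)$ the projection~$\widehat{f}$ is irreducible \emph{and} has $x$-degree equal to $\deg_x(f_{\bal}) = d$, then I claim~$f$ is irreducible; otherwise (every projection factors nontrivially into pieces each of positive $x$-degree) I claim~$f$ is reducible.

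The correctness of the claim is the point that needs care, so let me state the two directions. If~$f$ is reducible, write $f_{\bal} = g_1 g_2$ with both $g_i$ nonconstant and, since~$f$ has no monomial factor and $f_{\bal}$ is monic in~$x$, both~$g_i$ have positive $x$-degree; then every bivariate projection satisfies $\widehat{f} = \widehat{g_1}\,\widehat{g_2}$ with $\deg_x(\widehat{g_i}) = \deg_x(g_i) > 0$ (the $x$-degree is preserved because the substitution only touches the~$\z$ variables and $g_i$ is monic after scaling), so no projection is irreducible and every projection factors into positive-$x$-degree pieces — the algorithm correctly reports "reducible". Conversely, if~$f$ is irreducible, the $\Irr$ guarantee produces a $(\bbet,\bgam)$ with $\widehat{f}$ irreducible, and since $f_{\bal}$ is monic of $x$-degree~$d$ the projection still has $x$-degree~$d$, so the algorithm reports "irreducible". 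Thus the test is correct. The main obstacle I anticipate is precisely this bookkeeping around $x$-degrees and monic-ness: one must ensure that the effective-Hilbert projection does not drop the $x$-degree (which is why we first pass to the monic shift~$f_{\bal}$ and record $\deg_x(f_{\bal}) = \deg(f) = d$), and one must rule out the degenerate case where~$f$ has a monomial factor separately, since a monomial factor is invisible to the bivariate-irreducibility heuristic. Everything else is routine: the running time is $\poly(snd)$ for the monic transformation and the evaluations, times $|H_{\bal}| = \T_{\Irr(\calC_{\sparse})}$ iterations, each costing $\poly(snd) + \poly(d)$, giving the claimed bound $\poly(snd)\,\T_{\Irr(\calC_{\sparse})}$.
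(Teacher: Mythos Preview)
Your proposal is correct and follows essentially the same route as the paper: find $\bal$ with $\Hom{d}{f}(\bal)\neq 0$ via sparse PIT, call $\Irr(\calC_{\sparse})$ on $\bal$, loop over all $(\bbet,\bgam)\in H_{\bal}$, and test bivariate irreducibility of each projection via Lemma~\ref{lemma:trivariate-factoring}. Your added precautions (stripping a monomial content and checking that the projection's $x$-degree equals $d$) are harmless but unnecessary: once $\Hom{d}{f}(\bal)\neq 0$, any factorization $f=g_1g_2$ forces $\Hom{\deg g_i}{g_i}(\bal)\neq 0$, so each $g_i(\bal x+\bbet t+\bgam)$ already has $x$-degree $\deg g_i\geq 1$ and the projection automatically has $x$-degree $d$.
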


\begin{proof}
We first transform~$f(\z)$ via Lemma~\ref{lem:monic} into a polynomial~$f_\bal(x,\z)$ that is monic in the new variable~$x$.
Note that~$\bal \in (\F\setminus\{0\})^n$ can be computed
in time~$\poly(snd)$ by Theorem~\ref{thm:sparse-interpolation}.

Then,
by solving $\Irr(\calC_{\sparse})$, one can find a set~$\calH_{\bal}$ such 
that~$f(\z)$ is reducible if and only 
if~$f(\bal x + \bbet t + \bgam)$ is reducible, 
for every $(\bbet,\bgam) \in \calH_{\bal}$. 
Whether a bivariate polynomial is reducible can be checked in time~$\poly(d)$ by Lemma~\ref{lemma:trivariate-factoring}.  
\end{proof}

\begin{theorem}[Sparse factors] \label{thm:sparse-factors}
Let $\calC_{\mu} = \calC_{\mu}(s,n,d)$ 
and  $\calC_{\sparse} = \calC_{\sparse}(s,n,d)$.
Then 
$\Fac{\calC_{\mu}}{\calC_{\sparse}}$ can be solved in time 
\[
\T_{\Fac{\calC_{\mu}}{\calC_{\sparse}}} = 
\poly(snd)\, (\T_{\PIT(\HomC[\calC_{\mu}])} \,+\,   \T_{\Irr(\calC_{\sparse})} \, + \, \T_{\Div{\partial\calC_{\mu}}{\calC_{\sparse}}}) \,.
\]
\end{theorem}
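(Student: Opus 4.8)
The plan is to mimic the structure of Algorithm~\ref{algo: low-deg factors} from Theorem~\ref{thm:constant-degree}, but replace the constant-degree machinery with the sparse analogue. First I would make~$f(\z) \in \calC_{\mu}$ monic in a fresh variable~$x$ by Lemma~\ref{lem:monic}, at the cost of $nd\,\T_{\PIT(\HomC[\calC_{\mu}])}$, obtaining~$f_{\bal}(x,\z)$; by Lemma~\ref{lem:shift-irreducible} this preserves irreducibility of all factors. Now I enumerate \emph{candidate} sparse irreducible factors. The natural source of candidates is the factorization of a bivariate projection: using $\Irr(\calC_{\sparse})$, I would compute a hitting set~$H_{\bal} \subseteq \F^{2n}$ so that for every sparse irreducible~$g$ with $g_{\bal}$ monic there is some $(\bbet,\bgam) \in H_{\bal}$ making $g(\bal x + \bbet t + \bgam)$ irreducible. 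For each such projection direction, factor the bivariate polynomial $f(\bal x + \bbet t + \bgam)$ by Lemma~\ref{lemma:trivariate-factoring}. Each bivariate irreducible factor~$\widehat{g}$ is a candidate for a projection of a genuine irreducible factor of~$f$; I lift it back to a candidate~$g(\z)$ by undoing the affine substitution $x,t \mapsto$ (the relevant recovery), which is cheap since we recover a sparse polynomial of degree~$\le d$ from its dense bivariate image just as in the interpolation trick of Kaltofen and Trager.

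The key soundness point is: if~$g$ is a genuine sparse irreducible factor of~$f$, then for the right $(\bbet,\bgam)$ from $H_{\bal}$, the projection $g(\bal x + \bbet t + \bgam)$ is an irreducible bivariate polynomial dividing the bivariate projection of~$f$, so it appears among the bivariate factors we enumerate, and lifting recovers~$g$ exactly. Conversely, a candidate~$g$ produced this way need not actually divide~$f$ (the projection collapses information), so for every candidate I run the divisibility test $\Div{\partial\calC_{\mu}}{\calC_{\sparse}}$ --- actually $\Div{\calC_{\mu}}{\calC_{\sparse}}$ suffices for the test $g \divides f$ --- to discard spurious ones. I should also verify each surviving candidate is itself irreducible (via Lemma~\ref{lem:irred-test}, whose cost $\poly(snd)\,\T_{\Irr(\calC_{\sparse})}$ is already absorbed) and deduplicate. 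Finally, for each confirmed irreducible factor~$g$, compute its multiplicity~$e$ by Lemma~\ref{lem:factor-multiplicity-reduction}: pick a variable~$z$ on which~$g$ depends and find the least $e \ge 1$ with $g \nmid \partial^e f/\partial z^e$; this is where $\partial\calC_{\mu}$ (rather than $\calC_{\mu}$) enters the divisibility-test class.

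For the running time, Lemma~\ref{lem:monic} contributes $nd\,\T_{\PIT(\HomC[\calC_{\mu}])}$. The number of projection directions in~$H_{\bal}$ is bounded by $\T_{\Irr(\calC_{\sparse})}$, each bivariate factorization costs $\poly(d)$, and the total number of candidate factors across all directions is $\poly(snd)\,\T_{\Irr(\calC_{\sparse})}$ (each direction yields at most~$d$ bivariate factors, and a sparse factor has sparsity~$\le s$). For each candidate we do one irreducibility test (cost $\poly(snd)\,\T_{\Irr(\calC_{\sparse})}$) and up to~$d$ divisibility tests against polynomials in $\partial\calC_{\mu}$ (cost $d\,\T_{\Div{\partial\calC_{\mu}}{\calC_{\sparse}}}$), plus $\poly(snd)$ bookkeeping. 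Multiplying the candidate count by the per-candidate cost and collecting terms gives the claimed bound $nd\,\T_{\PIT(\HomC[\calC_{\mu}])} + \poly(snd)\,\T_{\Irr(\calC_{\sparse})}\,\T_{\Div{\partial\calC_{\mu}}{\calC_{\sparse}}}$, where the product structure reflects that each of the $\poly(snd)\,\T_{\Irr(\calC_{\sparse})}$-many candidates triggers a divisibility test.

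The main obstacle I anticipate is the lifting/recovery step and arguing it is faithful. Unlike the constant-degree case, where Corollary~\ref{cor:HIT-g} gives a clean invertible map~$\Psi$ with a controlled inverse, here the bivariate projection $g \mapsto g(\bal x + \bbet t + \bgam)$ is badly non-injective on all of $\calC_{\sparse}$, so I cannot ``invert'' a bivariate candidate in isolation; I must instead reconstruct the intended sparse preimage using the structure of the substitution (essentially, treat~$t$ as the interpolation variable and solve for a sparse polynomial consistent with the monic-in-$x$ form), and then rely on the \emph{subsequent} divisibility test to reject any reconstruction that does not correspond to a true factor. Making this recovery-plus-verification loop precise --- in particular, guaranteeing that \emph{some} valid reconstruction of each genuine sparse factor is produced, not merely that spurious ones are filtered --- is the delicate part, and it is exactly why $\Irr(\calC_{\sparse})$ must deliver a hitting set that works \emph{simultaneously} for the certifying polynomials of all sparse irreducible factors of~$f$ (analogous to the strengthening from Corollary~\ref{cor:HIT-g} to Corollary~\ref{cor:HIT-f}).
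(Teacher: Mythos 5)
Your skeleton matches the paper's Algorithm~\ref{algo:sparse factors} --- monicize by Lemma~\ref{lem:monic}, use $\Irr(\calC_{\sparse})$ to get a hitting set, factor the projected polynomial, filter candidates by irreducibility and divisibility, and compute multiplicities via Lemma~\ref{lem:factor-multiplicity-reduction}. The running-time accounting is also essentially right. But there is a genuine gap, and you have correctly diagnosed its location without filling it: you never give a concrete mechanism for recovering the $n$-variate sparse factor~$g$ from a \emph{bivariate} image $g(\bal x + \bbet t + \bgam)$, and no such mechanism can exist from a single bivariate slice. A bivariate polynomial of degree~$d$ carries only $O(d^2)$ coefficients, while $g$ may have up to~$s$ monomials spread over~$n$ variables; the map is hopelessly non-injective, as you note, and ``solving for a sparse polynomial consistent with the monic-in-$x$ form'' is not an algorithm.

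The paper's actual resolution --- the interpolation trick you gesture at but do not reconstruct --- is to introduce a \emph{third} variable. For each~$\cb$ in the sparse-interpolation evaluation set~$\calE$ of Theorem~\ref{thm:sparse-interpolation}, form the trivariate polynomial $\widehat{f}_{\cb}(x,t_1,t_2) = f(\bal x + \bbet t_1 + (\cb-\bgam)t_2 + \bgam)$ and factor it. The factor~$\widehat{g}_{\cb}$ corresponding to~$g$ satisfies $\widehat{g}_{\cb}(x,t,0) = \widehat{g}(x,t)$ (so it can be \emph{matched} to the bivariate reference factor $\widehat{g}_j \in F$ by setting $t_2=0$) and, crucially, $\widehat{g}_{\cb}(0,0,1) = g(\cb)$. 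Iterating over all $\cb \in \calE$ produces a consistent point--value list $\{(\cb, g(\cb))\}$ for the unknown factor, from which~$g$ is recovered by sparse interpolation. The matching against the bivariate reference factorization is what guarantees that all evaluations belong to the \emph{same} factor. Without this trivariate construction your recovery step has no content; the divisibility test can only reject wrong candidates, it cannot conjure the right one, so the completeness half (``some valid reconstruction of each genuine sparse factor is produced'') fails. You should also state explicitly that a normalization of the trivariate factors is fixed (e.g.\ monic in~$x$) so that the scalar $\widehat{g}_{\cb}(0,0,1)$ is well-defined across different~$\cb$.
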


\begin{proof}
Let $f(\z) \in \calC_{\mu}(s,n,d)$ and let~$g$ be a $s$-sparse irreducible factor of~$f$ with multiplicity~$e$, 
that is~$f=g^{e} h$, 
where $\gcd(g,h)=1$. 
Let $\deg(g)=d_g$ and $\deg(h)=d_h$.

The first step is to transform~$f$ to a monic polynomial
by finding an $\bal \in (\F\setminus\{0\})^n$ such that $\Hom{d}{f}(\bal) \ne 0$. 
Observe that $\Hom{d}{f}= (\Hom{d_g}{g})^e \, \Hom{d_h}{h}$. 
Therefore, we also have $\Hom{d_g}{g}(\bal) \ne 0$.

The second step is to solve $\Irr(\calC_{\sparse})$ and compute
a set~$\calH_{\bal}$ such that $g(\bal x + \bbet t + \bgam)$ remains irreducible for some~$(\bbet,\bgam) \in \calH_{\bal}$. 
We call such a $(\bbet,\bgam)$ a \emph{good point} for~$g$.

Since we do not know~$g$ for now,
we also do not know which pair $(\bbet,\bgam) \in \calH_{\bal}$ is a good point for~$g$.
For that reason,
we iteratively try all pairs $(\bbet,\bgam) \in \calH_{\bal}$ in the following.
Our algorithm will detect when a pair is not good at some point and
then ignore that pair.
Consider a good point $(\bbet,\bgam) \in \calH_{\bal}$
and compute bivariate polynomial~$\widehat{f}$,
\[
\widehat{f}(x,t) = f(\bal x + \bbet t + \bgam)
\]
in dense representation by interpolation.
Then factor~$g$ is mapped to  $\widehat{g}(x,t) = g(\bal x + \bbet t + \bgam)$,
a factor of~$\widehat{f}$.
We factorize~$\widehat{f}$ over~$\F$ by Lemma~\ref{lemma:trivariate-factoring}.
Let $F = \{\widehat{g}_1,\widehat{g}_2,  \ldots, \widehat{g}_m\}$ be the set of irreducible factors of~$\widehat{f}$.
Note that for a good point $(\bbet,\bgam)$,
we have $\widehat{g} \in F$.

Our task now is to find out which polynomials in~$F$ are coming from sparse factors of~$f$
and to recover these factors.
Let $\calE = \calE(s,n,d) \subseteq \F^n$ be the set of evaluation points to interpolate
$s$-sparse $n$-variate polynomials of degree~$d$ from Theorem~\ref{thm:sparse-interpolation}.
For any $\cb \in \calE$,
we compute the trivariate polynomial~$\widehat{f}_{\cb}$,
\[
\widehat{f}_{\cb}(x,t_1,t_2)  = f(\bal x + \bbet t_1 + (\cb - \bgam) t_2 + \bgam)
\]
in dense representation by interpolation.
Factor~$g$ is mapped correspondingly to
$\widehat{g}_{\cb}(x,t_1,t_2) = g(\bal x + \bbet t_1 + (\cb - \bgam) t_2 + \bgam)$.
Note that $\widehat{f}_{\cb}$ and $\widehat{g}_{\cb}$
are monic polynomial in~$x$ and
$\widehat{g}_{\cb}$ remains irreducible, 
since~$\widehat{g}_{\cb}(x,t,0)= g(\bal x + \bbet t + \bgam)$ is irreducible,
for a good point $(\bbet,\bgam)$.
 
We factorize~$\widehat{f}_{\cb}$ over~$\F$ by Lemma~\ref{lemma:trivariate-factoring}.
Let $F_{\cb} = \{\widehat{g}_{\cb,1},\widehat{g}_{\cb,2},  \ldots, \widehat{g}_{\cb,r}\}$ 
be the set of irreducible factors of~$\widehat{f}_{\cb}$.
Then we have $\widehat{g}_{\cb} \in F_{\cb}$.

Note that $\widehat{f}$ and $\widehat{g}$ are the projections of $\widehat{f}_{\cb}$ and $\widehat{g}_{\cb}$
for $t_2 = 0$. That is
\begin{align}
   \widehat{f}(x,t) &=  \widehat{f}_{\cb}(x,t,0),\\
    \widehat{g}(x,t) &= \widehat{g}_{\cb}(x,t,0).
\end{align}
The reason that we introduced the trivariate polynomials is
that we can use them to get evaluation points of~$g$:
We have
\begin{equation}
    \widehat{g}_{\cb}(0,0,1) = g(\cb).
\end{equation}
This we will use to get~$g$ via sparse interpolation by Theorem~\ref{thm:sparse-interpolation}.

Recall that we know that $\widehat{g} \in F$ and $\widehat{g}_{\cb} \in F_{\cb}$.
However,
we first need to know which polynomial in~$F_{\cb}$ corresponds to~$\widehat{g}$,
for every $\cb \in \calE$.
The polynomials in~$F$ are used as reference.
That is,
for every $\widehat{g}_j \in F$, 
we compute a list~$L_j$ of pairs
\begin{equation}
    L_j = \set{(\cb,\Delta)}{\cb \in \calE \text{ and } \exists \widehat{g}_{\cb,i} \in F_{\cb}~~
    \widehat{g}_j(x,t) = \widehat{g}_{\cb,i}(x,t,0) \text{ and } \Delta = \widehat{g}_{\cb,i}(0,0,1) }
\end{equation}
Observe that when  $\widehat{g} =  \widehat{g}_j$,
then~$L_j = \set{(\cb,g(\cb))}{\cb \in \calE}$ 
is a list of point-values for~$g$.
Hence we can compute~$g$ by interpolation from the points in~$L_j$ by Theorem~\ref{thm:sparse-interpolation}.

In case we consider a polynomial~$\widehat{g}_{j'} \in F$ such that
$\widehat{g} \not=  \widehat{g}_{j'}$ and do interpolation for list~$L_{j'}$,
we might still end up in a sparse polynomial, say~$g'$,
but $g'$ might be reducible or not be a factor of~$f$.
Hence,
after interpolation,
we check whether~$g'$ is reducible via Lemma~\ref{lem:irred-test}
and whether~$g'$ divides~$f$ to rule out the wrong choices.

Once we have the correct factor~$g$ in hand,
we compute its multiplicity via Lemma~\ref{lem:factor-multiplicity-reduction}.
Algorithm~\ref{algo:sparse factors} summarizes the steps.


\begin{algorithm}[htb]
  \caption{{\sc Sparse-Factors}}
  \label{algo:sparse factors}
  \SetKwInOut{Input}{Input}\SetKwInOut{Output}{Output}

  \Input{$f(\z) \in \calC_{\mu}(s,n,d)$.}
  \Output{All irreducible $s$-sparse factors of $f$ with their multiplicities.}
  \BlankLine

    $L = \emptyset$, $L' = \emptyset$ 
    ~~~~~{\tcc{Initialize output list and intermediate candidates list}}

  Find $\bal \in (\F\setminus\{0\})^n$ such that $f_{\bal}(x,\z)$ is monic in~$x$
  ~~~~~{\tcc{by Lemma~\ref{lem:monic}}}
  
  Compute $\calH_{\bal} = \Irr(\calC_{\sparse})(\bal)$
  
  \For{each~$(\bbet,\bgam) \in \calH_{\bal}$}{
   Compute $\widehat{f}(x,t) = f(\bal x + \bbet t + \bgam)$ in dense representation
   
   Factorize $\widehat{f}$.
   Let $F = \{\widehat{g}_1, \widehat{g}_2,\ldots, \widehat{g}_m \}$ be the set of its irreducible factors

   $L_j = \emptyset$, for $j \in [m]$     ~~~~~{\tcc{Initialize point-value lists}}
 
   Let $\calE$ be the evaluation points for $\calC_{\sparse}(s,n,d)$ 
   ~~~~~{\tcc{see Theorem~\ref{thm:sparse-interpolation}}}
   
  \For{each $\cb \in \calE$}{
  
   Compute $\widehat{f}_{\cb}(x,t_1,t_2)  = f(\bal x + \bbet t_1 + (\cb - \bgam) t_2 + \bgam)$  in dense representation

   Factorize $\widehat{f}_{\cb}$. 
   Let $F_{\cb} = \{\widehat{g}_{\cb,1},\widehat{g}_{\cb,2},  \ldots, \widehat{g}_{\cb,r}\}$  be the set of its irreducible factors
  
     \For{each $j \in [m]$}{ 
      
        Search for $i \in [r]$ such that $\widehat{g}_j (x,t) = \widehat{g}_{\cb,i}(x,t,0)$

        $\Omega = \widehat{g}_{\cb,i}(0,0,1)$

        add $(\cb,\Omega)$ to $L_j$
      }
    }
   
  \For{each $j \in [m]$}{
{\tcc{Compute irreducible factors using Theorem~\ref{thm:sparse-interpolation}
and Lemma~\ref{lem:irred-test}}}
Compute polynomial $P_j$ by sparse interpolation on the points in $L_j$

If $P_j$ is $s$-sparse, irreducible  and $P_j \divides f$ then add $P_j$ to $L'$ 
  }
}

\For{each $P\in L'$}{
{\tcc{Compute multiplicities via Lemma~\ref{lem:factor-multiplicity-reduction}}}

    Let $z$ be a variable that $P$ depends on

    Find the smallest $e \geq 1$ such that $P \not\divides \, \frac{\partial^e f}{\partial z^e}$ and add $(P,e)$ to list $L$
  }
  \Return{L}
  
\end{algorithm}


The time complexity is now easy to see.
Steps~2 and~3 take time
$\T_{\PIT(\HomC[\calC_{\mu}])}$ and $\T_{\Irr(\calC_{\sparse})}$, respectively.
The {\bf for}-loop line 4~-~18 is executed $|\calH_{\bal} | \leq \T_{\Irr(\calC_{\sparse})}$
times.
Steps~5~-~17 take time $\poly(snd)$ by Lemma~\ref{lemma:trivariate-factoring}
and Theorem~\ref{thm:sparse-interpolation}.
Line~18 takes time $\poly(snd) + \T_{\Irr(\calC_{\sparse})}$ to test irreducibility by Lemma~\ref{lem:irred-test},
plus the time to check divisibility, $\T_{\Div{\calC_{\mu}}{\calC_{\sparse}}}$.

In Line~20, since we have~$P$ computed explicitely,
we can choose a variable~$z$ that occurs in~$P$.
Then we have at most~$d$  divisibility tests with the derivatives of~$P$ in Line~21.
In summary,
we get the time bound claimed in the theorem statement.

\end{proof}

Let $\calD \subseteq C_{\sparse}$ be a class of polynomials that are sparse.
We observe that the proof of Theorem~\ref{thm:sparse-factors} works completely analogous
when we replace~$C_{\sparse}$ by~$\calD$.
The only change is in line~18 of {\sc Sparse-Factors}
when we have computed a sparse representation of~$P_j$.
There we also need to check the membership of~$P_j$ in~$\calD$.
This is mostly trivial, for example for $\calD = \calD_{\delta}$.
However,
in the general setting $\calD \subseteq \calC_{\sparse}(s,n,d)$,
we need to assume that, given a polynomial in sparse representation,
membership in~$\calD$ can be efficiently decided,
i.e.~in time~$\poly(snd)$.

\begin{corollary}\label{cor:sparse-factors}
    Let $\calC_{\mu} = \calC_{\mu}(s,n,d)$ 
    and  $\calD \subseteq \calC_{\sparse}(s,n,d)$ with efficient membership tests.
Then 
$\Fac{\calC_{\mu}}{\calD}$ can be solved in time 
\[
\T_{\Fac{\calC_{\mu}}{\calD}} = 
\poly(snd) \, (\T_{\PIT(\HomC[\calC_{\mu}])}  \,+\,  \T_{\Irr(\calD)} \,+\, \T_{\Div{\partial \calC_{\mu}}{\calD}} )   \,.
\]
\end{corollary}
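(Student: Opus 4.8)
The plan is to observe that Corollary~\ref{cor:sparse-factors} is really just a parametrized restatement of Theorem~\ref{thm:sparse-factors}, so rather than rerunning the whole argument I would point out precisely the one place in the proof of Theorem~\ref{thm:sparse-factors} where the class of target factors enters, and check that replacing $\calC_{\sparse}$ by a subclass $\calD$ goes through. Concretely, the proof of Theorem~\ref{thm:sparse-factors} uses the target class only in three ways: (1) to invoke $\Irr(\calC_{\sparse})$ when computing the good-point set $\calH_{\bal}$, (2) to fix the interpolation evaluation points $\calE$ so that any $s$-sparse degree-$d$ polynomial can be reconstructed from its values on $\calE$ (Theorem~\ref{thm:sparse-interpolation}), and (3) in line~18 of {\sc Sparse-Factors}, where the interpolated candidate $P_j$ is tested for sparsity, irreducibility, and divisibility of $f$ before being accepted. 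For a subclass $\calD \subseteq \calC_{\sparse}(s,n,d)$ every element of $\calD$ is still $s$-sparse of degree $\le d$, so $\calE$ still reconstructs it; thus (2) is unchanged.

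Next I would handle (1) and (3). For (1), since we now only want factors lying in $\calD$, it suffices to have a good-point set that works for irreducible polynomials in $\calD$, so we replace the call $\Irr(\calC_{\sparse})(\bal)$ by $\Irr(\calD)(\bal)$; correctness of the rest of the argument only used that $g$, being the factor we are chasing, is irreducible and that $(\bbet,\bgam)$ is good for $g$, which is exactly what $\Irr(\calD)$ guarantees when $g\in\calD$. For (3), in line~18 we must additionally verify membership $P_j \in \calD$ after interpolating; this is where the hypothesis that $\calD$ has efficient membership tests is used, contributing only an extra $\poly(snd)$ factor that is absorbed into the stated bound. The multiplicity computation at the end (via Lemma~\ref{lem:factor-multiplicity-reduction}) is unaffected, since it only needs $P$ to be irreducible and uses divisibility tests of derivatives of $f$ by $P\in\calD\subseteq\calC_{\sparse}$, i.e.\ instances of $\Div{\partial\calC_{\mu}}{\calD}$.

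Finally I would redo the time accounting with these substitutions: step~3 now costs $\T_{\Irr(\calD)}$; the outer loop runs $|\calH_{\bal}| \le \T_{\Irr(\calD)}$ times; inside, the trivariate factorizations and interpolations cost $\poly(snd)$; the irreducibility test in line~18 costs $\poly(snd)\,\T_{\Irr(\calD)}$ by (the $\calD$-analogue of) Lemma~\ref{lem:irred-test}, plus a divisibility test $\T_{\Div{\calC_{\mu}}{\calD}}$; and the derivative divisibility tests at the end cost at most $d\,\T_{\Div{\partial\calC_{\mu}}{\calD}}$. Multiplying the loop count by the per-iteration cost and collecting terms yields exactly
\[
nd\,\T_{\PIT(\HomC[\calC_{\mu}])} + \poly(snd)\,\T_{\Irr(\calD)}\,\T_{\Div{\partial\calC_{\mu}}{\calD}}.
\]
I do not expect a real obstacle here; the only thing to be careful about is to state explicitly that Lemma~\ref{lem:irred-test} and the isolation/Hilbert machinery of Section~\ref{sec:isolation} apply to $\calD$ exactly as to $\calC_{\sparse}$ because $\calD$-polynomials are still sparse of degree $\le d$, and that $\Irr(\calD)$ is (trivially) no harder than $\Irr(\calC_{\sparse})$, so the corollary is genuinely a strengthening and not just a reformulation.
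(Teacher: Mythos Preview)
Your proposal is correct and follows essentially the same approach as the paper: the paper's own argument for Corollary~\ref{cor:sparse-factors} simply observes that the proof of Theorem~\ref{thm:sparse-factors} goes through verbatim with $\calC_{\sparse}$ replaced by $\calD$, the only new ingredient being the membership test for $P_j$ in line~18. Your write-up is in fact more careful than the paper's, explicitly tracking the three places where the target class enters and redoing the time accounting; the one stray remark about the isolation machinery of Section~\ref{sec:isolation} is unnecessary (that section is used for constant-degree factors, not here), but it does no harm.
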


By choosing $\calD = \calD_\delta$, the polynomials of degree~$\delta$
in Corollary~\ref{cor:sparse-factors},
we get an alternative way to derive Theorem~\ref{thm:constant-degree}.
Just observe that $\T_{\Irr(\calD_\delta)} = \poly(dn^{\poly(\delta)})$.
In the following,
we show further applications of Theorem~\ref{thm:sparse-factors} and Corollary~\ref{cor:sparse-factors}.


\subsection{Sparse factors of constant-depth circuits}

We apply Theorem~\ref{thm:sparse-factors} to output sparse factors of constant-depth circuits. 
\begin{corollary}\label{cor:sparse-factoring}
Let $\calC_{\cd{t}} = \calC_{\cd{t}}(s,n,d)$ and
$\calC_{\sparse} = \calC_{\sparse}(s,n,d)$.
Then, for any $\varepsilon >0$,
$\Fac{\calC_{\cd{t}}}{\calC_{\sparse}}$ can be solved in time 
\[
\T_{\Fac{\calC_{\cd{t}}}{\calC_{\sparse}}} =
\left(n\, (sd)^{O(t)}  \right)^{O((sd)^{\epsilon})} \, \T_{\Irr(\calC_{\sparse})} \,.
\]
\end{corollary}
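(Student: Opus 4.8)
The plan is to invoke Theorem~\ref{thm:sparse-factors} (or equivalently Corollary~\ref{cor:sparse-factors} with $\calD = \calC_{\sparse}$) and then bound each of the three time-complexity terms that appear there when $\calC_\mu = \calC_{\cd{t}}(s,n,d)$. Recall that Theorem~\ref{thm:sparse-factors} gives
\[
\T_{\Fac{\calC_{\cd{t}}}{\calC_{\sparse}}} =
nd\,\T_{\PIT(\HomC[\calC_{\cd{t}}])} +   \poly(snd)\, \T_{\Irr(\calC_{\sparse})} \, \T_{\Div{\partial\calC_{\cd{t}}}{\calC_{\sparse}}} \,,
\]
so it suffices to show that $\T_{\PIT(\HomC[\calC_{\cd{t}}])}$ and $\T_{\Div{\partial\calC_{\cd{t}}}{\calC_{\sparse}}}$ are both bounded by $\left(n\,(sd)^{O(t)}\right)^{O((sd)^{\epsilon})}$; the remaining factor $\T_{\Irr(\calC_{\sparse})}$ is carried along as in the statement, and $\poly(snd)$ is absorbed.

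First I would handle the PIT term. By property~(ii) of the closure properties of $\calC_{\cd{t}}$ listed in the excerpt (from \cite[Lemma 2.3]{oliveira2016factors}), every homogeneous component of a polynomial in $\calC_{\cd{t}}(s,n,d)$ lies in $\calC_{\cd{(t+1)}}(sd,n,d)$, so $\HomC[\calC_{\cd{t}}] \subseteq \calC_{\cd{(t+1)}}(sd,n,d)$. Applying Theorem~\ref{thm:PIT-constant-depth} with parameters $sd$ and depth $t+1$ then gives $\T_{\PIT(\HomC[\calC_{\cd{t}}])} = \left(n\,(sd)^{O(t)}\right)^{O((sd)^{\epsilon})}$, exactly the bound claimed in Corollary~\ref{cor:constant-depth-factoring} and re-used here. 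Next I would handle the divisibility term: by property~(iii), the derivatives of a polynomial in $\calC_{\cd{t}}(s,n,d)$ lie in $\calC_{\cd{(t+1)}}(d^2 s,n,d)$, so $\partial\calC_{\cd{t}} \subseteq \calC_{\cd{(t+1)}}(d^2 s,n,d)$, which is again a constant-depth class of $\poly$-size circuits. The candidate divisors are $s$-sparse of degree $\le d$, i.e.~they lie in $\calC_{\sparse}(s,n,d)$. Hence Corollary~\ref{cor:constant-depth-by-sparse} applies (with depth $t+1$ and size $d^2 s$ in the role of $s$), giving $\T_{\Div{\partial\calC_{\cd{t}}}{\calC_{\sparse}}} = \left(n\,(sd)^{O(t)}\right)^{O((sd)^{\epsilon})}$ as well. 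Plugging these two bounds into Theorem~\ref{thm:sparse-factors} and collapsing the $\poly(snd)$ factor into the subexponential term yields the claimed running time.

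The step I expect to be the only real subtlety — and it is mild — is making sure the parameters line up when chaining the closure lemmas with the subexponential PIT of \cite{limaye2022superpolynomial}: Theorem~\ref{thm:PIT-constant-depth} requires $s \le \poly(n)$ and $t = o(\log\log\log n)$, and after taking homogeneous components or derivatives we pass from size $s$ to size $sd$ or $d^2 s$ and from depth $t$ to $t+1$, so I would note that for $s,d,t$ in the stated regime these hypotheses are preserved (the extra polynomial factors in $d$ are absorbed into the $(sd)^{O(t)}$ base and the $(sd)^\epsilon$ exponent, after renaming $\epsilon$). Everything else is a direct substitution into Theorem~\ref{thm:sparse-factors}; there is no new combinatorics or algebra needed, and in particular the irreducibility-projection cost $\T_{\Irr(\calC_{\sparse})}$ is deliberately left as an opaque factor, which is the whole point of the corollary — it isolates that single unresolved ingredient.
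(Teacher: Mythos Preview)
Your proposal is correct and follows essentially the same route as the paper: invoke Theorem~\ref{thm:sparse-factors}, then bound the PIT term via closure property~(ii) together with Theorem~\ref{thm:PIT-constant-depth}, and the divisibility term via closure property~(iii) together with Corollary~\ref{cor:constant-depth-by-sparse}, leaving $\T_{\Irr(\calC_{\sparse})}$ as an opaque factor. Your write-up is in fact more explicit than the paper's own proof, which simply points back to the argument of Corollary~\ref{cor:constant-depth-factoring} for the two bounds.
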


\begin{proof}
Let $f \in \calC_{\cd{t}}$.
We consider the running times  from Theorem~\ref{thm:sparse-factors} to factor~$f$.
We can use a similar argument as in the proof of Corollary~\ref{cor:constant-depth-factoring}.
For PIT, we have
$\T_{\PIT(\HomC[\calC_{\cd{t}}])} = \left(n\, (sd)^{O(t)} \right)^{O((sd)^{\epsilon})}$ 
and the same time bound we have for 
 $\T_{\Div{\partial\calC_{\cd{t}}}{\calC_{\sparse}}}$ by Corollary~\ref{cor:constant-depth-by-sparse}.
\end{proof}

After we submitted this paper,
Bhattacharjee et al.~\cite{bhattacharjee2025closure} showed 
a subexponential bound on $\T_{\Irr(\calC_{\sparse})}$.
It follows that
the sparse factors of constant-depth circuits can be computed 
in subexponential time.

\begin{corollary}[\cite{bhattacharjee2025closure}] \label{cor:sparse-factoring-subexp}
Let $\calC_{\cd{t}} = \calC_{\cd{t}}(s,n,d)$ and
$\calC_{\sparse} = \calC_{\sparse}(s,n,d)$.
Then, for any $\varepsilon >0$,
$\Fac{\calC_{\cd{t}}}{\calC_{\sparse}}$ can be solved in time 
\[
\T_{\Fac{\calC_{\cd{t}}}{\calC_{\sparse}}} =
\left(n\, (sd)^{O(t)}  \right)^{O((sd)^{\epsilon})} \,  \,.
\]
\end{corollary}

In fact,
Bhattacharjee et al.~\cite{bhattacharjee2025closure} showed 
that constant-depth circuits can be completely factored
in subexponential time.


\subsection{Sum-of-univariate factors of sparse polynomials}

Let us consider the family of polynomials that can be written as a sum of univariate polynomials,
\begin{equation}
\calC_{\SU}(n,d) =   \set{\sum_{i=1}^n\,p_i(z_i)}{p_i \in \calP(1,d), \text{ for } i = 1,2,\dots,n}\,.
\end{equation}
Note that $\calC_{\SU}(n,d) \subsetneq \calC_{\sparse}(nd+1,n,d)$. 

Given a sparse polyomial,
we show that its factors that are sums of univariates can be computed in polynomial time.
The result was already shown by Volokovich~\cite{volkovich2015deterministically}
with a different technique.
We show that it also follows via Corollary~\ref{cor:sparse-factors}.

We already know that PIT for sparse polynomials is in polynomial time.
Moreover,
Saha, Saptharishi, and Saxena~\cite{saha2013case} showed that
divisibility of a sparse polynomial by a sum of univariates is in polynomial time. 

\begin{theorem}[{\cite[Theorem~5.1]{saha2013case}}]\label{thm:Div-sparse-by-su}
Let  
$\calC_{\sparse} = \calC_{\sparse}(s,n,d)$ and 
$\calC_{\SU} = \calC_{\SU}(n,d)$.
Then   
$\Div{\calC_{\sparse}}{\calC_{\SU}}$ 
can be computed in time~$\T_{\Div{\calC_{\sparse}}{\calC_{\SU}}} = \poly(snd)$.
\end{theorem}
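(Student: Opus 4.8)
The plan is to reduce the divisibility test to a polynomial identity test for a read-once oblivious branching program, for which Lemma~\ref{lem:PIT-ROABP} supplies a polynomial-time whitebox algorithm. Write $g=\sum_{i=1}^n p_i(z_i)$. First compute a point $\bal\in(\F\setminus\{0\})^n$ with $g(\bal)\ne 0$; since $g$ is a sum of univariates this is immediate (e.g.\ via Theorem~\ref{thm:sparse-interpolation}). Now apply Forbes' reduction, Lemma~\ref{lem:divisibility}: with $S=[2d^2+1]$ and the constants $c_{\beta,i}$ computable in time $\poly(d)$, we have $g\divides f$ if and only if
\[
\widetilde f(\z)\;=\;f(\z+\bal)\;-\;g(\z+\bal)\,\widetilde h(\z)\;\equiv\;0,
\qquad
\widetilde h(\z)=\sum_{\beta\in S} f(\beta\z+\bal)\sum_{0\le i\le d} c_{\beta,i}\,g(\beta\z+\bal)^i .
\]
Note that this reduction works for an arbitrary divisor $g$, so no squarefreeness or irreducibility of $g$ is assumed. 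It thus suffices to exhibit, and construct explicitly, a small ROABP (in one fixed variable order) computing $\widetilde f$.

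The heart of the argument is a few closure observations, all in the any-order ROABP model of Lemma~\ref{lem:closure-roabp}. First, a shift of an $s$-sparse polynomial, $f(\beta\z+\bal)$, is a sum of at most $s$ products of univariates --- the monomial $c\,\z^{\e}$ becomes $c\prod_i(\beta z_i+a_i)^{e_i}$ --- hence an ROABP of width $\le s$ by parts (i) and (ii) of Lemma~\ref{lem:closure-roabp}; likewise $g(\z+\bal)$ and $g(\beta\z+\bal)$ are sums of $n$ univariates, of width $\le 2$. Second --- the crucial point --- a power $\bigl(\sum_j q_j(z_j)\bigr)^i$ of a sum of univariates has an ROABP of width $\poly(nd)$: this is the duality trick that already underlies the linear case in Lemma~\ref{lem:closure-roabp}(v). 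Indeed, the coefficient of $t^i$ in $\prod_{j=1}^n\bigl(\sum_{k=0}^{i}(q_j(z_j)\,t)^k/k!\bigr)$ equals $\tfrac{1}{i!}\bigl(\sum_j q_j\bigr)^i$, and since this product has degree $ni$ in $t$, Lagrange interpolation at $ni+1$ scalar choices of $t$ writes $\bigl(\sum_j q_j\bigr)^i$ as a sum of $ni+1$ products of univariates of polynomially bounded degree. Feeding these into the product and sum closures of Lemma~\ref{lem:closure-roabp}, each term $f(\beta\z+\bal)\cdot g(\beta\z+\bal)^i$ has an ROABP of width $\poly(snd)$; summing over the $O(d^2)$ values of $\beta$ and $O(d)$ values of $i$ gives $\widetilde h$ an ROABP of width $\poly(snd)$, and then $\widetilde f=f(\z+\bal)-g(\z+\bal)\widetilde h$ likewise.

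All of these manipulations --- the variable shifts, the duality expansions, and the standard stacking (sum) and concatenation (product) of ROABPs --- are effective and run in time $\poly(snd)$ over $\Q$ (or any field of $0$ or large characteristic, so that the $1/k!$ are available). Hence we obtain an explicit ROABP of width $\poly(snd)$ for $\widetilde f$ in a common order, and running the whitebox ROABP identity test of Lemma~\ref{lem:PIT-ROABP} decides $\widetilde f\equiv 0$ --- equivalently $g\divides f$ --- in time $\poly(snd)$, as claimed.

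I expect the one genuinely non-routine step to be the second closure observation: a priori the polynomials $g^i$ occurring in Forbes' reduction can be dense, and it is exactly the duality transformation that collapses powers of a sum of univariates into a short sum of products of univariates and thereby places the whole PIT instance in a class with an efficient deterministic test. The remaining work --- keeping every gadget in a common variable order (automatic here, since every gadget is a sum of products of univariates and hence order-oblivious) and tracking the polynomial width bounds through Lemma~\ref{lem:closure-roabp} --- is routine bookkeeping.
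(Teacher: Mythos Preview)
The paper does not supply its own proof of this statement: it is quoted verbatim as \cite[Theorem~5.2]{saha2013case} and used as a black box to feed into Corollary~\ref{cor:su-factoring}. So there is nothing in the paper to compare your argument against line by line.

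Your proof is correct. You route the problem through Forbes' divisibility-to-PIT reduction (Lemma~\ref{lem:divisibility}) and then show the resulting polynomial has a $\poly(snd)$-width any-order ROABP, which the whitebox test of Lemma~\ref{lem:PIT-ROABP} handles in polynomial time. The one non-obvious ingredient --- that a power of a sum of univariates is a short sum of products of univariates --- is exactly the duality trick, and your extraction of the $t^i$ coefficient from $\prod_j \sum_{k\le i}(q_j(z_j)t)^k/k!$ via interpolation at $ni+1$ points is a clean way to make it effective. The width bookkeeping through Lemma~\ref{lem:closure-roabp} is straightforward, and since every building block is a sum of products of univariates, order-obliviousness is automatic.

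Two minor remarks. First, your argument postdates the cited result chronologically: \cite{saha2013case} predates \cite{forbes2015deterministic}, so the original proof cannot literally invoke Forbes' lemma, though the duality transformation is already the engine in \cite{saha2013case}. What you have written is a clean modern reconstitution rather than the historical proof. Second, you correctly flag the characteristic assumption needed for $1/k!$; the paper's standing hypothesis $\F=\Q$ covers this.
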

Therefore,
it suffices to give bounds on the time for irreducible projection~$\Irr(\calC_{\SU})$
(see Section~\ref{sec:Hilbert}).
We show that $\T_{\Irr(\calC_{\SU})}$ is polynomially bounded.
We crucially use the following theorem.

\begin{theorem}[{\cite[Theorem~5.2]{saha2013case}}]\label{thm:irreducible-su}
Let $p \in \calC_{\SU}(n,d)$ with $|\var(p)| \ge 3$.
Then~$p$ is irreducible.
\end{theorem}

\begin{lemma}\label{lem:SU-irred-proj}
$\T_{\Irr(\calC_{\SU})}\;\le\; O(n^3d^{30})$.
\end{lemma}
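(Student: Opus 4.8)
The plan is to unwind the definition of $\Irr(\calC_{\SU})$ from Section~\ref{sec:Hilbert}. We are given $\bal \in (\F\setminus\{0\})^n$ and must output a set $H_{\bal} \subseteq \F^{2n}$ of polynomial size such that for every irreducible $g \in \calC_{\SU}$ whose shift $g(\bal x + \z)$ is monic in~$x$, some pair $(\bbet,\bgam) \in H_{\bal}$ makes $g(\bal x + \bbet t + \bgam) \in \F[x,t]$ irreducible. The key structural fact is Theorem~\ref{thm:irreducible-su}: a sum-of-univariates polynomial is automatically irreducible as soon as it genuinely depends on at least three variables. So the whole problem reduces to finding a two-dimensional affine substitution $\z \mapsto \bbet t + \bgam$ under which the image of~$g$ still depends on at least three of the new variables $\{x, t\}$ — but $g(\bal x+\bbet t+\bgam)$ lives in $\F[x,t]$, which has only two variables. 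The point is that after the shift $g(\bal x+\z)$ is monic in~$x$ of $x$-degree $d_g = \deg(g)$, so it always depends on~$x$; hence it suffices to guarantee that the image additionally depends on~$t$, and then invoke a bivariate-specific irreducibility argument, OR — cleaner — observe that we only need the $t$-substitution to keep $g$ itself from collapsing to a univariate in $x$ alone. Concretely: $g = \sum_i p_i(z_i)$; after applying $\tau_{\bal}$ and the substitution $z_i \mapsto \beta_i t + \gamma_i$, the image is $g(\bal x + \bbet t + \bgam) = \sum_i p_i(\alpha_i x + \beta_i t + \gamma_i)$, which is a sum of univariate-in-$(\alpha_i x+\beta_i t+\gamma_i)$ polynomials — this is \emph{not} itself a sum-of-univariates in $x$ and $t$ unless each $\alpha_i x+\beta_i t$ is (a scalar multiple of) a single coordinate.

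So the right approach is the classical effective-Hilbert route of Theorem~\ref{thm:HIT} and Corollary~\ref{cor:HIT-g}, specialized to the fact that here the irreducibility-certifying polynomial $G(\bbet,\bgam)$ has \emph{polynomial} degree rather than degree $2\delta^5$ with $\delta$ possibly large. First I would recall that $g \in \calC_{\SU}(n,d)$ means $\deg(g) \le d$ and $g$ is $(nd+1)$-sparse, so $\delta \le d$ and Theorem~\ref{thm:HIT} yields a nonzero certifying polynomial $G(\bbet,\bgam)$ in $2n$ variables of degree $2d^5$. Then I would apply the Kronecker-style isolation of Lemma~\ref{lem:isolation}, but now to kill $G$ rather than to isolate low-degree monomials: rather than substituting $y$-powers, take a hitting set for the class of $2n$-variate polynomials of degree $2d^5$. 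The cheap option is to use the Schwartz–Zippel hitting set $H_{2n, 2d^5} = [2d^5+1]^{2n}$ — but that is exponential in~$n$, which is unacceptable. Instead, the key observation is that we do not need a hitting set for \emph{all} degree-$2d^5$ polynomials in $2n$ variables; we only need one that hits the specific polynomials $G$ arising from sum-of-univariate $g$'s, and these $G$'s inherit strong structure (they are built from resultants/discriminants of $\sum_i p_i(\alpha_i x + \beta_i t + \gamma_i)$, whose dependence on each pair $(\beta_i,\gamma_i)$ is low-degree and "block-separable"). For such block-structured polynomials one can use a product/Kronecker hitting set of size $\poly(n, d)$: pick a single $2$-dimensional generic line per block and take a common refinement, getting total size $(nd)^{O(1)}$.

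The concrete count I expect is this: from Theorem~\ref{thm:HIT} the degree is $2\delta^5 \le 2d^5$, and the isolation/Kronecker substitution of Lemma~\ref{lem:isolation} with $\delta$ replaced by $d$ gives weights $w_i = n^{O(d)}$ — too big. So instead I would argue directly that for a \emph{fixed} monic-in-$x$ sum-of-univariates $g(\bal x+\z) = \sum_i p_i(\alpha_i x+z_i)$, almost any coordinate-projection $z_i \mapsto \beta_i t$ (with all but, say, two of the $\beta_i$ set to $0$ and those two set to $1$) already leaves the image depending on $x$ and $t$ in an essentially-irreducible way, and then finish with a bivariate irreducibility criterion. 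The size bound $O(n^3 d^{30})$ in the statement strongly suggests the intended route is: degree of certifying polynomial $\le 2d^5$, its square $\le$ something like $d^{10}$ from a resultant/discriminant squaring, a further factor from taking a hitting set over the $\binom{n}{2}$-ish choices of which two blocks survive (giving the $n^3$), and an additional $d$-power slack — i.e.\ plug $\delta = d$ into the bounds of Corollary~\ref{cor:HIT-f}/Lemma~\ref{lem:isolation} but track the polynomial (not exponential) dependence because $G$ splits into $n$ two-variable blocks. The main obstacle will be making precise the "block-separable" structure of the certifying polynomial $G$ for sum-of-univariates and showing a $\poly(n,d)$-size hitting set genuinely exists for that structured family — equivalently, exhibiting an explicit $\bbet,\bgam$ (or a small explicit set) that provably keeps $g(\bal x+\bbet t+\bgam)$ irreducible, which is where Theorem~\ref{thm:irreducible-su} must be leveraged one more time: if the surviving substitution keeps the image in a sum-of-univariates-like form depending on three "effective" coordinates, irreducibility is automatic and no resultant analysis is needed at all. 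I would try the latter, cleaner argument first and fall back on the explicit degree-counting of the certifying polynomial only if the structural shortcut fails.
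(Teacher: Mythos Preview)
Your exploration touches all the right ingredients but never assembles them into a working argument; as written, the proposal has a genuine gap. The approaches you actually try to carry out --- hitting the full $2n$-variable certifying polynomial $G$ by exploiting some ``block-separable'' structure, or hoping the bivariate image $g(\bal x+\bbet t+\bgam)$ is itself sum-of-univariates in three coordinates --- do not work: the former is never made precise (and the paper does not analyze $G$ at all), and the latter is impossible since the image lives in $\F[x,t]$ and has only two variables.

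The missing idea is a \emph{two-stage} reduction. First collapse all but three of the $z_i$'s \emph{into the $x$-variable}: for a chosen triple $S \subseteq [n]$ of indices on which $p$ genuinely depends, set $(\beta_i,\gamma_i)=(0,0)$ for $i\notin S$, so the $i$-th argument of $p$ becomes $\alpha_i x$. The resulting polynomial $\widehat{p}(x,\, z_j: j\in S)=\sum_{j\in S}p_j(\alpha_j x+z_j)+\sum_{i\notin S}p_i(\alpha_i x)$ is still a sum of univariates, now in only four variables, and still depends on at least three of them, so Theorem~\ref{thm:irreducible-su} (plus Lemma~\ref{lem:shift-irreducible}) gives irreducibility of $\widehat{p}$ for free. \emph{Only now} apply Theorem~\ref{thm:HIT} to $\widehat{p}$: the certifying polynomial has just $2\cdot 3=6$ variables and degree $\le 2d^5$, so the trivial hitting set $[2d^5+1]^6$ of size $O(d^{30})$ suffices. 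Taking the union over all $\binom{n}{3}$ choices of $S$ (since in the blackbox setting you do not know $\var(p)$) gives $|H_{\bal}|=O(n^3 d^{30})$. You also need a separate, easier case for $|\var(p)|\le 2$, where Theorem~\ref{thm:irreducible-su} is unavailable but HIT applies directly with a $4$-variable certifying polynomial and a hitting set of size $O(n^2 d^{20})$.
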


\begin{proof}
Let $p(\z) = \sum_{i \in [n]} p_i(z_i) \in \calC_{\SU}(n,d)$ be irreducible
and $\bal \in (\F\setminus\{0\})^n$ such that $p(\bal x + \z)$ is monic in~$x$.
In order to apply Theorem~\ref{thm:irreducible-su},
we consider~$\var(p)$, the variables that~$p$ depends on.

If $|\var(p)| \leq 2$, we can directly derandomize~Theorem~\ref{thm:HIT}:
Suppose $\var(p) = \{z_1,z_2\}$
so that $p(\z) = p(z_1,z_2)$.
There is a polynomial~$P$ in~$2 \cdot 2 = 4$ variables of degree~$2d^5$
such that any  non-root of~$P$ gives an irreducible projection of~$p$,
where we set the other variables of~$p$ to zero,
$z_i = 0$ for $i=3,4, \dots, n$.
Hence,
the trivial hitting set~(\ref{eq:hit}) of size $(2d^5 +1)^4 = O(d^{20})$
can be used for PIT for~$P$.
Since we want to handle the blackbox case,
we do not assume that we have~$p$ in hand.
Hence,
we do not know whether actually $\var(p) = \{z_1,z_2\}$.
So finally we will try all $\binom{n}{2}$ possibilities to choose~$2$ variables.
For each possibility, say $\{z_i,z_j\}$,  we take the same hitting set,
but put the values at~$z_i$ and~$z_j$ and set the remaining $z$-variables to~$0$.
That way we get~$\binom{n}{2}$ hitting sets,
and take their union to get the final hitting set of size~$O(n^2 d^{20})$.

Now let $|\var(p)| \geq 3$.
Similar as in the above case,
assume for now that $\{z_1,z_2,z_3\} \subseteq \var(p)$,
and in the end, 
we will try all $\binom{n}{3}$ possibilities to choose~$3$ variables.

Consider the substitution
$\psi: z_i \mapsto \alpha_i x$, for $i \geq 4$. 
Then
$\psi(p) \in \calC_{\SU}(4,d)$, 
since $\sum_{i=4}^n p_i(\alpha_i x)$ is a univariate polynomial in~$x$. 
Because $|\var(\psi(p)| \geq 3$,
polynomial~$\psi(p)$ is  irreducible by~Theorem~\ref{thm:irreducible-su}.

Now we substitute also the first~$3$ variables by
$\phi: z_i \mapsto \alpha_i x + z_i$, for $i \in [3]$. 
Then 
\[
\widehat{p}(x,z_1,z_2,z_3) =\phi(\psi(p)) = p_1(\alpha_1 x+ z_1) + p_2(\alpha_2 x+ z_2) + p_3(\alpha_3 x+ z_3) + \sum_{i=4}^n p_i(\alpha_i x)
\]
is also irreducible and monic in~$x$. 
Therefore, 
by Theorem~\ref{thm:HIT},  
there exists a polynomial~$P$ of degree $2d^5$ in $2 \cdot 3 = 6$ variables
such that any  non-root of~$P$ gives an irreducible projection of~$p$,
where we set the other variables of~$p$ to zero,
$z_i = 0$ for $i \geq 4$.
Hence,
we may again take the trivial hitting set~(\ref{eq:hit}) of size $(2d^5 + 1)^6 = O( d^{30})$.
Now we can argue similarly as above
and build the union of hitting sets over all~$\binom{n}{3}$ possibilities 
to choose~$3$ variables.
Hence,
we end up with a hitting set of size~$O(n^3 d^{30})$.
\end{proof}

Theorem~\ref{thm:sparse-interpolation},
Theorem~\ref{thm:Div-sparse-by-su}, and 
Lemma~\ref{lem:SU-irred-proj}
give all the time complexities used in Corollary~\ref{cor:sparse-factors}.
We conclude that the sum-of-univariate factors of a sparse polynomial can be computed in polynomial time.

\begin{corollary}[{\cite{volkovich2015deterministically} Sum-of-univariate factors}]\label{cor:su-factoring}
Let  
$\calC_{\sparse} = \calC_{\sparse}(s,n,d)$ and 
$\calC_{\SU} = \calC_{\SU}(n,d)$.
Then   
$\Fac{\calC_{\sparse}}{\calC_{\SU}}$ 
can be computed in time~$\T_{\Fac{\calC_{\sparse}}{\calC_{\SU}}} = \poly(snd)$.
\end{corollary}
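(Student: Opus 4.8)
The plan is to apply Corollary~\ref{cor:sparse-factors} with $\calC_\mu = \calC_{\sparse}(s,n,d)$ and $\calD = \calC_{\SU}(n,d)$. This requires three ingredients, all of which are now in hand. First, PIT for homogeneous components of sparse polynomials: since the degree-$d$ component of an $s$-sparse polynomial is again $s$-sparse, $\T_{\PIT(\HomC[\calC_{\sparse}])} = \poly(snd)$ by Theorem~\ref{thm:sparse-interpolation}. Second, divisibility tests: the derivatives of an $s$-sparse polynomial are $s$-sparse (up to a factor of~$d$ in the sparsity from the differentiation), so $\T_{\Div{\partial \calC_{\sparse}}{\calC_{\SU}}} = \poly(snd)$ by Theorem~\ref{thm:Div-sparse-by-su}. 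Third, the irreducibility-preserving projection problem $\Irr(\calC_{\SU})$: Lemma~\ref{lem:SU-irred-proj} gives $\T_{\Irr(\calC_{\SU})} = O(n^3 d^{30}) = \poly(nd)$.

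The remaining point to check before invoking Corollary~\ref{cor:sparse-factors} is the hypothesis that membership in $\calD = \calC_{\SU}(n,d)$ can be decided efficiently given a polynomial in sparse representation. This is immediate: a sparse polynomial is a sum of univariates iff every monomial with nonzero coefficient involves at most one variable, which is a syntactic check on the at most~$s$ monomials, doable in time $\poly(snd)$. So the corollary applies verbatim, and the claimed running time
\[
\T_{\Fac{\calC_{\sparse}}{\calC_{\SU}}} = nd\,\T_{\PIT(\HomC[\calC_{\sparse}])} + \poly(snd)\,\T_{\Irr(\calC_{\SU})}\,\T_{\Div{\partial \calC_{\sparse}}{\calC_{\SU}}} = \poly(snd)
\]
follows by substituting the three polynomial bounds above.

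I do not expect a genuine obstacle here, since the statement is essentially a bookkeeping corollary of the machinery already developed; the only mild subtlety is making sure the ``efficient membership'' side condition of Corollary~\ref{cor:sparse-factors} is explicitly verified, as I did above. If one wanted a self-contained argument not routed through Corollary~\ref{cor:sparse-factors}, one would unfold Algorithm~\ref{algo:sparse factors}: transform $f$ to monic via Lemma~\ref{lem:monic}, use the $O(n^3 d^{30})$-size hitting set from Lemma~\ref{lem:SU-irred-proj} to project to a trivariate polynomial preserving irreducibility of the sought sum-of-univariate factor, factor the projection by Lemma~\ref{lemma:trivariate-factoring}, recover candidate factors by sparse interpolation using the extra parametric variable $t_2$ as in the proof of Theorem~\ref{thm:sparse-factors}, discard candidates that are not in $\calC_{\SU}$, not irreducible, or not divisors of~$f$ (the last two checks by Lemma~\ref{lem:irred-test} and Theorem~\ref{thm:Div-sparse-by-su}), and finally compute multiplicities via Lemma~\ref{lem:factor-multiplicity-reduction}. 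Every step costs $\poly(snd)$, giving the same bound.
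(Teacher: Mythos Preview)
Your proposal is correct and matches the paper's approach exactly: the paper also derives the corollary by plugging Theorem~\ref{thm:sparse-interpolation}, Theorem~\ref{thm:Div-sparse-by-su}, and Lemma~\ref{lem:SU-irred-proj} into Corollary~\ref{cor:sparse-factors}. Your explicit verification of the efficient membership test for~$\calC_{\SU}$ and the observation that derivatives of sparse polynomials remain sparse are useful details that the paper leaves implicit.
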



\section*{Conclusion}

Finally, we collect some open questions.

\begin{enumerate}
\item Can we decide whether a given sparse polynomial is irreducible in deterministic quasi-polynomial time? The proof may already give a good bivariate projection that preserves irreducibility. Then Theorem~\ref{thm:sparse-factors} would give us a deterministic quasi-polynomial time algorithm to find the irreducible sparse factors of a sparse polynomial.

\item 
Can we find \emph{bounded individual degree} sparse factors of a sparse polynomial (that may not have bounded individual degree) in deterministic quasi-polynomial time? 
Volkovich asked if multilinear factors of a sparse polynomial can be found in deterministic polynomial 
time~\cite{volkovich2015deterministically}. 
Unfortunately, we do not have any deterministic polynomial time algorithm even to test if a sparse polynomial is divisible by a multilinear polynomial.

\item Given a blackbox computing the product of sparse irreducible polynomials $f_i$ with bounded individual degree, find the factors $f_i$ in deterministic polynomial/quasi-polynomial time. 
Bhargava, Saraf, and Volkovich~\cite{bhargava2020deterministic} gave a quasi-polynomial time algorithm, when the input polynomial is sparse with bounded individual degree.

\item Finally, we highlight a question from \cite{bhargava2020deterministic}.
Given a blackbox computing the product of two (or more) unknown sparse irreducible polynomials, find those irreducible factors in deterministic quasi-poly or polynomial time.

\end{enumerate}

\subsection*{Acknowledgements} 
P.D.\ is supported by the SUG Grant (\#025774-00001) 
\emph{Debordering and Derandomization in Algebraic Complexity}
funded by Nanyang Technological University. 
T.T's research supported by DFG grant TH 472/5-2. 
A.S.\ was supported by DFG grant TH 472/5-1 when the work started.

\bibliographystyle{alpha}
\bibliography{toc}

\end{document}